\documentclass[11pt]{article}

\UseRawInputEncoding

\usepackage{amsmath}
\usepackage{amssymb}
\usepackage{amsthm}
\usepackage{url}
\usepackage{tikz}
\usepackage{listings}
\usepackage{pdfpages}
\usepackage{ascmac}
\usepackage[top=20truemm, bottom=20truemm]{geometry}

\usepackage{ulem}

\newtheorem{thm}{Theorem}[section]
\newtheorem{prop}[thm]{Proposition}
\newtheorem{cor}[thm]{Corollary}
\newtheorem{lem}[thm]{Lemma}

\newtheorem{pf}{Proof}

\newtheorem{defn}[thm]{Definition}
\newtheorem{definition}{Definition}

\usepackage{amssymb}
\usepackage{latexsym}

\newcommand{\C}{\mathbb C}

\newcommand{\R}{\mathbb R}
\newcommand{\Z}{\mathbb Z}

\newcommand{\pfqed}{\begin{flushright}Q.E.D.\end{flushright}}

\newcommand{\stkap}{*_{\kappa}} 
\title{
{\bf \LARGE Augmented Star Products and their Applications %to \\ Quantum Physics
}
}

\author{
Naoya MIYAZAKI
\\
Department of Mathematics\\Hiyoshi Campus
\\
Keio University
\\
Yokohama, 223-8521, JAPAN}
\date{August/05th/2026}

\begin{document}
\maketitle

\bigskip
\par\medskip\noindent{\small {\bf Abstract}:
In this paper, we introduce a generalized framework for star products that preserve associativity while not necessarily obeying the canonical commutation relations. Within this framework, we formulate the augmented star product and investigate the associated augmented star exponentials. We further demonstrate several applications of these constructions to special functions and to problems arising in quantum physics.

\par\medskip\noindent
\noindent{\bf Mathematics Subject Classification (2010):} Primary 58B32; 
Secondary 53C28, 53D55
\par\medskip\noindent
\noindent{\bf Keywords:} deformation quantization, star product, star exponential, special functions
%Riccati equations, the Baker-Campbell-Hausdorff formula  
etc. 
\par\medskip\noindent

\newpage

{\small 
\tableofcontents
%\clearpage
}

\bigskip\bigskip\bigskip

\newpage

\section{Introduction} 
Deformation quantization or star product introduced in \cite{bffls} is known as a general method for quantization for an arbitraty Poisson manifold \cite{dl, ds, dt, fedosov, gutt, kontsevich, marcolli-penrose, miyazaki00, miyazaki05, miyazaki07IJGMMP, miyazaki07PM, ommy97banach, ommy98jms, ommy98cmp, sternheimer, waldmann, y02, y17}. 
As to this method, we studied star exponentials, star functions, ordering problem, convergence problem in \cite{mmoy, maillard, ommy02contempmath, ommy03jlt, ommy05noncommgeomphys, ommy07yukawa, 
ommy07contenmpmath, ommy07lmp, ommy08asterisque, ommy14msri}, etc.

The notion of a star product is closely related to the canonical commutation relation in quantum mechanics. This relation is expressed by a pair of operators $\hat{q}$ and $\hat{p}$ satisfying
\[
[\hat{q}, \hat{p}]
:=\hat{q}\hat{p}-\hat{p}\hat{q}=\sqrt{-1}\hbar 
\qquad {\rm (CCR)}\]
where $''\hat{q}=q\times''$  is a multiplication operator and $''\hat{p}=-\sqrt{-1}\hbar \partial_q''$ is a differential operator acting on the functions of $q$, and $\hbar$ is equal to the Planck constant divided by $2\pi$. The algebra generated by $\hat{q}, \hat{p}$ with ${\rm (CCR)}$ is called the Weyl algebra. 
Beyond the operator formalism, there is 
another way to obtain the same algebraic structure. 
In \cite{bffls}, the authors introduced an 
associative but noncommutative deformation 
of the ordinary product of functions, 
directed by the symplectic structure of phase space. This construction is now called deformation quantization, 
and the resulting deformed product is 
referred to as a star product.

The typical example of star product is the Moyal product given as follows: For functions $f,~g$ of $(q,p)=(q_1,\ldots, q_n,p_1,\ldots,p_n)$ on a symplectic space,   
\[
\begin{array}{lll}
\vspace{2mm}
\displaystyle
f*_Mg&=&
\displaystyle
f\left[\exp\frac{\sqrt{-1}\hbar}{2}\sum_{i}
\left(\overleftarrow{\partial}_{q_i} \cdot \overrightarrow{\partial}_{p_i}
-\overleftarrow{\partial}_{p_i} \cdot \overrightarrow{\partial}_{q_i}\right)\right]g \\
\vspace{2mm}
\displaystyle
&=&
\displaystyle
f\sum_{\ell=0}^{\infty}\left(\frac{\sqrt{-1}\hbar}{2}
\Bigr(
\sum_{i}
\overleftarrow{\partial}_{q_i} \cdot \overrightarrow{\partial}_{p_i}
-\overleftarrow{\partial}_{p_i} \cdot \overrightarrow{\partial}_{q_i}\Bigr)\right)^{\ell}
g
\end{array}
\]
where we use a Poisson bracket (biderivation) defined by 
\[
\{f,g\}
=f\Bigr(\sum_{i}
\overleftarrow{\partial}_{q_i} \cdot \overrightarrow{\partial}_{p_i}
-\overleftarrow{\partial}_{p_i} \cdot \overrightarrow{\partial}_{q_i}\Bigr)g
:= \sum_i\left(\partial_{q_i} f\cdot\partial_{p_i} g-
\partial_{p_i} f\cdot\partial_{q_i} g\right). 
\]
Now we easily have 
\[
\begin{array}{lll}
\vspace{2mm}
\displaystyle
q_j*_M p_k&=&
\displaystyle
q_j\sum_{\ell=0}^{\infty}\left(\frac{\sqrt{-1}\hbar}{2}
\Bigr(
\sum_{i}
\overleftarrow{\partial}_{q_i} \cdot \overrightarrow{\partial}_{p_i}
-\overleftarrow{\partial}_{p_i} \cdot \overrightarrow{\partial}_{q_i}\Bigr)\right)^{\ell}
p_k
\\
\vspace{2mm}
&=&
\displaystyle
q_jp_k+\frac{\sqrt{-1}\hbar}{2}q_j\Bigr(
\sum_{i}
\overleftarrow{\partial}_{q_i} \cdot \overrightarrow{\partial}_{p_i}
-\overleftarrow{\partial}_{p_i} \cdot \overrightarrow{\partial}_{q_i}\Bigr)
p_k
%\\
%\vspace{2mm}
%&=&\displaystyle
=q_jp_k+\frac{\sqrt{-1}\hbar}{2} \delta_{jk}, 
\end{array}
\]
\[\begin{array}{lll}
\vspace{2mm}
\displaystyle
p_k*_M q_j
&=&
\displaystyle
p_k q_j+\frac{\sqrt{-1}\hbar}{2} 
p_k\Bigr(
\sum_{i}
\overleftarrow{\partial}_{q_i} \cdot \overrightarrow{\partial}_{p_i}
-\overleftarrow{\partial}_{p_i} \cdot \overrightarrow{\partial}_{q_i}\Bigr)
q_j
=p_kq_j-\frac{\sqrt{-1}\hbar}{2}\delta_{jk}.
\end{array}
\]
Thus we see $*_M$ satisfies (CCR):  
\[
[q_j,p_k]=q_j*_M p_k-p_k*_M q_j=\sqrt{-1}\hbar \delta_{jk}.
\]
For the Moyal star product, star exponentials were introduced and explicitly computed in \cite{bffls} for certain homogeneous polynomials, with the aim of determining the spectra of the corresponding observables.  
In contrast, in quantum field theory -specifically in the quantization of the free scalar field Hamiltonian-
one employs a different star product, known as the {\it normal star product}, which corresponds to normal ordering.  
Furthermore, based on the matrix Riccati equation, Maillard \cite{maillard} established a method for computing star exponentials on symplectic phase space for all polynomials of degree at most two, in any chosen ordering.

\par\medskip

The purpose of this paper is to introduce the notion of an {\it augmented star product}, which generalizes the classical concept of a star product, and to investigate its fundamental properties.  
By introducing {\it twisted Cayley transformations}, we derive explicit formulas for {\it augmented star exponentials} in arbitrary orderings.  
We also present several general deformation functions constructed via the augmented star product.
We also give applications of augmented star exponentials to special functions and quantum physics.

\section{Star product}
%\subsection{}
Let $({\mathbb C}^{2n}, \omega)$ be a complex phase space endowed with a symplectic form $\omega$,
whose points are denoted by
\[
{\mathbf x}=\sum_{j=1}^n q_je_j+\sum_{j=1}^n p_je_{n+j}=\sum_{j=1}^{2n}x_je_j
\]
where $(e_1,\ldots,e_n,e_{n+1},\ldots,e_{2n})$ is a symplectic basis, that is, 
\[
\omega(e_i, e_j)=0,~\omega(e_{n+i}, e_{n+j})=0,~\omega(e_i, e_{n+j})=\delta_{ij}
\]
for $i,j=1,\ldots,n$. 
The Poisson bracket of two elements $f, g \in C^{\infty}({\mathbb R}^{2n})$ is given by the formula 
\[
\{f,g\} 
= \sum_{j,k=1}^{2n} \Lambda^{jk}_0
     \frac{\partial f }{\partial x_j}\frac{\partial g}{\partial x_k}
\]
where the matrix $\Lambda_0$ is defined by $\Lambda^{j, n+k}_0
=-\Lambda^{n+k, j}_0=\delta_{jk}$, i.e., 
$\Lambda_0=\left(\begin{array}{cc}0 & I_n \\ -I_n & 0\end{array}\right)$.

\medskip
A star product of functions $f$ and $g$ is given by the formal expansion
\[
f*_{\Gamma}g=fg+\sum_{\ell=1}^{\infty} \frac{1}{\ell!}\left(\frac{i\hbar}{2}\right)^{\ell}C_{\ell}^{\Gamma}(f,g), 
\]
where the bi-differential operators $C_{\ell}^{\Gamma}$ are defined by 
\[
C_{\ell}^{\Gamma}(f,g):=\sum_{j_r,k_s=1}^{2n} \Gamma^{j_1k_1}\cdots\Gamma^{j_{\ell}k_{\ell}}\partial_{j_1\ldots j_{\ell}}f \cdot \partial_{k_1\ldots k_{\ell}}g, 
\]
with the coefficients $\Gamma^{jk}$ of arbitrary constant complex numbers and 
\[
\partial_{j_1\ldots j_{\ell}}f=
\frac{\partial^{\ell}}{\partial x_{j_1}\cdots \partial x_{j_\ell}}f. 
\]
Note that we sometimes use Einstein's rule for $j_1,\cdots, j_\ell, k_1, \cdots, k_\ell.$

\medskip
\noindent
Furthermore, we suppose that these bi-differential operators satisfy the following relations:
\begin{equation}\label{associativity}
\sum_{r+s=\ell}\frac{1}{r!s!}C_r^{\Gamma}(C_s^{\Gamma}(f,g),h)
=
\sum_{r+s=\ell}\frac{1}{r!s!}C_r^{\Gamma}(f,C_s^{\Gamma}(g,h)),
\end{equation}
\begin{equation}\label{commutationrelationrule}
C_1^{\Gamma}(f,g)-C_1^{\Gamma}(g,f)=2\{f,g\}
\end{equation}
Relation (\ref{associativity}), which implies the associativity of the star product.
Then, relation (\ref{commutationrelationrule}), which implies the relation
$q_j*p_k-p_k*q_j=i\hbar \delta_{jk}$, 
is equivalent to the following formula:
\begin{equation}\label{ccr}
\Gamma-{}^t\Gamma=2\Lambda_0
\end{equation}
where ${}^t\Gamma$ denotes the transpose of the matrix $\Gamma$. We call this matrix $\Gamma$ the {\bf ordering matrix}. 
As we can easily see, $\Gamma=\left(\begin{array}{cc}0 & I_n \\ -I_n & 0\end{array}\right)$, $\Gamma=\left(\begin{array}{cc}0 & 2I_n \\ 0 & 0\end{array}\right)$ and $\Gamma=\left(\begin{array}{cc}0 & 0 \\ -2I_n & 0\end{array}\right)$ give the standard ordering, the Weyl ordering and the anti-standard ordering star product respectively.

\begin{prop}
For any ordering matrix $\Gamma$ satisfying 
$\Gamma-{}^t\Gamma=2\Lambda_0$, $*_\Gamma$ gives a star product. 
\end{prop}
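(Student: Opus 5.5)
The plan is to write the product in exponential form and reduce both required properties to facts about constant-coefficient bidifferential operators. Put $\lambda=\frac{i\hbar}{2}$ and $P_\Gamma=\sum_{j,k}\Gamma^{jk}\overleftarrow{\partial}_j\overrightarrow{\partial}_k$. Since the $\Gamma^{jk}$ are constants, the operators $\partial$ commute with each other, so $C^\Gamma_\ell(f,g)=f\,(P_\Gamma)^\ell g$. Hence
\[
f*_\Gamma g=m\circ\exp(\lambda P_\Gamma)(f\otimes g),
\]
where $m$ denotes multiplication, as a formal series in $\hbar$.

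For three functions I will work on $f\otimes g\otimes h$ with the operators $\partial^{(a)}$, $a=1,2,3$, acting on the $a$-th factor. Let $P_{ab}=\sum\Gamma^{jk}\partial^{(a)}_j\partial^{(b)}_k$. The key observation is the Leibniz rule: the derivative of a product obeys $\partial(uv)=(\partial^{(1)}+\partial^{(2)})(uv)$ before multiplying. Applying this,
\[
(f*g)*h=m_3\exp\bigl(\lambda(P_{13}+P_{23})\bigr)\exp(\lambda P_{12})(f\otimes g\otimes h)
\]
and
\[
f*(g*h)=m_3\exp\bigl(\lambda(P_{12}+P_{13})\bigr)\exp(\lambda P_{23})(f\otimes g\otimes h).
\]
All the $P_{ab}$ have constant coefficients, so they pairwise commute. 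Both expressions therefore equal $m_3\exp\lambda(P_{12}+P_{13}+P_{23})(f\otimes g\otimes h)$. Comparing coefficients of $\lambda^\ell$ gives exactly relation (\ref{associativity}) for every $\ell$.

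The main care point is making this rigorous at the level of formal series. I would justify the Leibniz identity $\partial_j\circ m=m\circ(\partial^{(1)}_j+\partial^{(2)}_j)$, and its iterate $P_{\cdot 3}\circ(m\otimes 1)=(m\otimes1)\circ(P_{13}+P_{23})$, order by order. Summing these with multinomial coefficients then gives the exponential identity. Nothing here uses any property of $\Gamma$, so associativity holds for every constant matrix $\Gamma$.

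Next, $C_1^\Gamma(f,g)-C_1^\Gamma(g,f)=\sum(\Gamma^{jk}-\Gamma^{kj})\partial_jf\,\partial_kg$. Using $\Gamma-{}^t\Gamma=2\Lambda_0$, this equals $2\{f,g\}$, which is (\ref{commutationrelationrule}). In particular, taking $f=q_j$, $g=p_k$, only the first-order term contributes to the commutator, and it yields $q_j*p_k-p_k*q_j=i\hbar\delta_{jk}$.

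Finally, the zeroth-order term is $fg$, and each $C_\ell^\Gamma$ is bidifferential, vanishing on constants for $\ell\ge1$. So $1$ is a unit and $*_\Gamma$ is a formal deformation of the pointwise product. This completes the verification that $*_\Gamma$ is a star product.
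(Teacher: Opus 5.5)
Your proof is correct and follows the same route as the paper. The paper proves associativity for an arbitrary constant $\Gamma$ (part (i) of the proposition in Section 3) by expanding both bracketings with the Leibniz rule into the common sum $\sum_{a+b+s=N}\frac{1}{a!b!s!}(\cdots)$, which is exactly the coefficient of $\lambda^N$ in your $\exp\lambda(P_{12}+P_{13}+P_{23})$; it reads off the commutation relation from $\Gamma-{}^t\Gamma=2\Lambda_0$ just as you do, and your exponential notation is a more compact bookkeeping of that same index computation.
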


\section{Generalization of star product}
Up to now, we mainly concerned with star products which appear in deformation quantization. Hereafter, extending the notion of star product, we introduce {\it augmented star  product (AS-product for short)}. 

\subsection{Definition and fundamental properties}
First we give the definition:
\begin{defn}
For any functions $f,g$ and arbitrary complex $N\times N$-matrix\footnote{The number $N$ is not necessarily even. }
$\Gamma\in M_N({\mathbb C})$, {\bf augmented  star product} $*_{\Gamma}$ is defined as follows:
\[
f*_{\Gamma}g=fg+\sum_{\ell=1}^{\infty} \frac{1}{\ell!}\left(\frac{i\hbar}{2}\right)^{\ell}C_{\ell}^{\Gamma}(f,g), 
\]
where the bi-differential operators $C_{\ell}^{\Gamma}$ are defined by 
\[
C_{\ell}^{\Gamma}(f,g)
:=\sum_{
\tiny
\begin{array}{c}
j_r,k_s=1\\
(r,s=1,\ldots, \ell)
\end{array}
}^{N} \Gamma^{j_1k_1}\cdots
\Gamma^{j_{\ell}k_{\ell}}\partial_{j_1\ldots j_{\ell}}f \cdot \partial_{k_1\ldots k_{\ell}}g.  
\]
Note the coefficients $\Gamma^{jk}$ are arbitrary constant complex numbers and
\[
\partial_{j_1\ldots j_{\ell}}f=
\frac{\partial^{\ell}}{\partial x_{j_1}\cdots \partial x_{j_\ell}}f.
\]

\noindent
We set 
\[
\begin{array}{l}
\vspace{2mm}
\displaystyle 
\Lambda =\frac{1}{2}(\Gamma-{}^t\Gamma)\quad\mbox{(skew part)} ,
\\
\vspace{2mm}
\displaystyle 
K=\frac{1}{2}(\Gamma+{}^t\Gamma)
\quad\mbox{(symmetric part)}.
\end{array}
\]
Note that $\Lambda$ is not necessarily  nondegenrate. 
\end{defn}
\noindent

\noindent
For augmented star products, we use the following notations:
\[
\begin{array}{ccl}
\Lambda=J&: &\mbox{ an original matrix (skew symmetric part)}, \\
K&: & \mbox{ an ordering parameter matrix (symmetric part)}, \\
\Gamma=\Lambda+ K&:& \mbox{ an ordering matrix} .
\end{array}
\]
When we consider an augmented star product, 
the skew symmetric part $\Lambda$ might be degenerate.  For example, when $\Gamma=\Lambda+K=0+K$, the augmented star product is commutative.  

%We sometimes call $\Lambda$ an {\bf origin}, 
%and call $\Gamma$ (resp. $K$) 
%an {\bf ordering matrix} 
%(resp. an {\bf ordering parameter matrix}). 

%\noindent
%We DO NOT suppose that these bi-differential %operators satisfy the following relation:
%\begin{equation}\label{poisson}
%C_1^{\Gamma}(f,g)-C_1^{\Gamma}(g,f)=2\{f,g\}.
%\end{equation}

\begin{prop}
\begin{enumerate}
\item For any ordering matrix $\Gamma$, the augmented star product gives an associative product. 
\item When a ordering matrix $\Gamma$ is symmetric (i.e. the skew part $\Lambda$ of $\Gamma$ is zero), the augmented star product is commutative. 
\end{enumerate}
\end{prop}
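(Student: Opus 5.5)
The plan is to rewrite the augmented star product as an exponentiated bidifferential operator and read off both statements from that form. Introduce the bidifferential operator
\[
P_\Gamma(f,g)=\sum_{j,k=1}^N \Gamma^{jk}\,\partial_j f\cdot \partial_k g ,
\]
so that $C_\ell^\Gamma=P_\Gamma^{\ell}$ in the sense of acting on $f\otimes g$. On the tensor product we regard $P_\Gamma=\sum_{j,k}\Gamma^{jk}\partial_j\otimes\partial_k$ and let $m$ denote multiplication $m(f\otimes g)=fg$. Then
\[
f*_\Gamma g=m\circ \exp\Bigl(\tfrac{i\hbar}{2}P_\Gamma\Bigr)(f\otimes g),
\]
understood as a formal power series in $\hbar$. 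Everything reduces to identities among constant-coefficient operators, which commute with one another.

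For associativity, work on triple tensors $f\otimes g\otimes h$. Write $\partial^{(a)}_j$ for the derivative acting on the $a$-th slot, and set
\[
P_{ab}=\sum_{j,k}\Gamma^{jk}\partial^{(a)}_j\partial^{(b)}_k \quad (a<b).
\]
The key step is the Leibniz rule. Because $\partial_j(fg)=\partial_j f\,g+f\,\partial_j g$, we have $(\partial_j\otimes 1)\circ(m\otimes 1)=(m\otimes 1)\circ(\partial^{(1)}_j+\partial^{(2)}_j)$. Consequently
\[
P_\Gamma\circ(m\otimes 1)=(m\otimes1)\circ(P_{13}+P_{23}),
\]
and similarly
\[
P_\Gamma\circ(1\otimes m)=(1\otimes m)\circ(P_{12}+P_{13}).
\]
Since the coefficients are constant, all $P_{ab}$ commute, and exponentials factor. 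We then obtain
\[
(f*g)*h=m_3\circ \exp\Bigl(\tfrac{i\hbar}{2}(P_{12}+P_{13}+P_{23})\Bigr)(f\otimes g\otimes h),
\]
and the same expression for $f*(g*h)$, where $m_3$ is triple multiplication. This proves item 1. Equivalently, this establishes identity (\ref{associativity}) for every $\Gamma$, with no use of (\ref{ccr}).

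The main obstacle, such as it is, lies in the bookkeeping of the binomial expansion of $P_\Gamma^\ell\circ(m\otimes1)$ when matching it against the coefficient form $\sum_{r+s=\ell}\frac{1}{r!s!}C_r(C_s(f,g),h)$. I would handle this by staying at the level of exponentials of commuting operators rather than expanding coefficients. Then $\exp(A+B)=\exp A\exp B$ for commuting $A,B$ does all the work, and comparing the coefficient of $\hbar^\ell$ recovers (\ref{associativity}) directly.

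For item 2, the argument is to swap indices. By definition
\[
C_\ell^\Gamma(g,f)=\sum\Gamma^{j_1k_1}\cdots\Gamma^{j_\ell k_\ell}\,\partial_{j_1\ldots j_\ell}g\,\partial_{k_1\ldots k_\ell}f .
\]
Relabel $j\leftrightarrow k$ and use commutativity of ordinary multiplication. This gives
\[
C_\ell^\Gamma(g,f)=C_\ell^{{}^t\Gamma}(f,g).
\]
If $\Gamma={}^t\Gamma$, i.e. $\Lambda=0$, then $C_\ell^\Gamma(f,g)=C_\ell^\Gamma(g,f)$ for all $\ell$. Hence $f*_\Gamma g=g*_\Gamma f$.
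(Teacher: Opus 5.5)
Your proof is correct and follows essentially the same route as the paper. The paper uses the Leibniz rule to expand both $\sum_{r+s=N}\frac{1}{r!s!}C_r^\Gamma(f,C_s^\Gamma(g,h))$ and $\sum_{r+s=N}\frac{1}{r!s!}C_s^\Gamma(C_r^\Gamma(f,g),h)$ into the same triple sum over $a+b+s=N$ with weight $\frac{1}{a!b!s!}$; that sum is exactly the $\hbar^N$-coefficient of your $m_3\circ\exp\bigl(\tfrac{i\hbar}{2}(P_{12}+P_{13}+P_{23})\bigr)$, so your commuting-exponential packaging just replaces its explicit index bookkeeping. For item 2 the paper only says ``obvious'', and your relabeling identity $C_\ell^\Gamma(g,f)=C_\ell^{{}^t\Gamma}(f,g)$ is the intended argument.
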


\begin{pf}
%We give a detailed proof of associativity. 
We use Einstein's convention. 

\noindent
Proof of (i). 
\begin{equation}\label{r-to-l}
\begin{array}{lll}
\vspace{3mm}
&&\displaystyle \sum_{r+s=N}\frac{1}{r!s!} C_r^{\Gamma} (f, C_s^{\Gamma} (g,h)) 
\\
\vspace{2mm}
&=&\displaystyle 
\sum_{r+s=N}\frac{1}{r!s!}  
\Gamma^{i_1 j_1} \cdots \Gamma^{i_r j_r} \partial_{i_1 \cdots i_r}f 
\\
\vspace{3mm}
&&
\displaystyle
\qquad\qquad\qquad
\cdot 
\Gamma^{k_1 \ell_1}\cdots \Gamma^{k_s \ell_s}
\left(
\sum_{a=0}^r \frac{r!}{a!(r-a)!} \partial_{j_1 \cdots j_a}
 \partial_{k_1 \cdots k_s}g \cdot \partial_{j_{a+1} \cdots j_r}
\partial_{\ell_1 \cdots \ell_s}h) \right) 
\\
\vspace{3mm}
&&(\mbox{Put }r=a+b.)
\\
\vspace{3mm}
&=&\displaystyle 
\sum_{a+b+s=N}\frac{1}{a!b!s!}  
\Gamma^{i_1 j_1} \cdots \Gamma^{i_{a+b} j_{a+b}} 
\Gamma^{k_1 \ell_1}\cdots \Gamma^{k_s \ell_s}
\partial_{i_1 \cdots i_{a+b}}f 
\\
\vspace{3mm}
&&
\displaystyle
\qquad\qquad\qquad
\cdot 
\left(
 \partial_{j_1 \cdots j_a}
 \partial_{k_1 \cdots k_s}g \cdot \partial_{j_{a+1} \cdots j_{a+b}}
\partial_{\ell_1 \cdots \ell_s}h) \right) 
\\
\vspace{3mm}
&&(\mbox{Replacing }k_1\to i_{a+b+1},\ldots,k_s\to i_{a+b+s}, \ell_1\to j_{a+b+1},\ldots,\ell_s\to j_{a+b+s}.)
\\
\vspace{3mm}
&=&\displaystyle 
\sum_{a+b+s=N}\frac{1}{a!b!s!}  
\Gamma^{i_1 j_1} \cdots \Gamma^{i_{a+b} j_{a+b}} 
\Gamma^{i_{a+b+1} j_{a+b+1}}\cdots \Gamma^{i_{a+b+s} j_{a+b+s}}\\
%\vspace{3mm}
&&\qquad\qquad \displaystyle 
\underbrace{\partial_{i_1 \cdots i_{a+b}}}_{a+b-times}f \cdot 
\underbrace{\partial_{i_{a+b+1} \cdots i_{a+b+s} }}_{s-times}  
\underbrace{\partial_{j_1 \cdots j_a}}_{a-times}
 g \cdot \underbrace{\partial_{j_{a+1} \cdots j_{a+b}}
\partial_{j_{a+b+1} \cdots j_{a+b+s} }}_{b+s-times}h). 
%\cdots\cdots(*) 
\end{array}
\end{equation}

Similarly, 
\begin{equation}\label{l-to-r}
\begin{array}{lll}
&&\displaystyle \sum_{r+s=N}\frac{1}{r!s!} C_s^{\Gamma} (C_r^{\Gamma} (f, g) ,h) 
\\
\\
\vspace{3mm}
&=&\displaystyle 
\sum_{a+b+s=N}\frac{1}{a!b!s!}
\Gamma^{i_1 j_1} \cdots \Gamma^{i_a j_a} 
\Gamma^{i_{a+1} j_{a+1} }\cdots \Gamma^{i_{a+b} j_{a+b} }\cdots \Gamma^{i_{a+b+1} j_{a+b+1} } \cdots \Gamma^{i_{a+b+s} j_{a+b+s} } 
\\
&&
\qquad\qquad\displaystyle 
\underbrace{ \partial_{i_1 \cdots i_a} \partial_{i_{a+1} \cdots i_{a+b} } }_{a+b-times} f 
\cdot \underbrace{ \partial_{i_{a+b+1} \cdots i_{a+b+s}} }_{s-times} 
\underbrace{ \partial_{j_1 \cdots j_a} }_{a-times}  g  \cdot \underbrace{ \partial_{j_{a+1} \cdots j_{a+b+s}} }_{b+s-times} h .  %~~\cdots\cdots(**)
\end{array}
\end{equation}
Thus, compairing \eqref{r-to-l} and \eqref{l-to-r}, 
we have associativity. %$L_t(f,g,h)=R_t(f,g,h)$. 

\noindent
Note that 
we only assume that $\Gamma$ is constant.  
%torsion-free and $\partial\Gamma=0$. 
We do not assume that $\Gamma=\overleftarrow{\partial_i}\Gamma^{ij}
\overrightarrow{\partial_j}$ is a Poisson bivector. 

\smallskip
\noindent
Proof of (ii) is obvious. 
%Thus we have $L_t(f,g,h)=R_t(f,g,h)$. 
\begin{flushright}Q.E.D.\end{flushright}

\end{pf}

\subsection{Eequivalence operators (intertwiners) for augmented star products}
In the present section, we study fundamental formulas of augmented star products and the equivalence operators relating augmented star products induced by ordering matrices.
Here after we use Einstien's convention. 

First, observing that
\begin{equation}\label{intertwiner-1}
\begin{array}{lll}
\vspace{2mm}
&&
\displaystyle 
\left(
\frac{i\hbar}{4} K^{ij} \partial_i\partial_j \right) (f\cdot g) \\
\vspace{2mm}
\displaystyle 
&=&\displaystyle \frac{i\hbar}{4} K^{ij} \partial_i\Bigr((\partial_j f)\cdot g + f\cdot (\partial_j g)\Bigr) \\
\vspace{2mm}
\displaystyle 
&=&\displaystyle 
\frac{i\hbar}{4} K^{ij} \Bigr(( \partial_i\partial_j f)\cdot g 
+ {(\partial_j f)\cdot ( \partial_i g)} 
+ ( \partial_i f)\cdot (\partial_j g) 
+ f\cdot (\partial_i\partial_j g) \Bigr) 
\\
\vspace{2mm}
\displaystyle 
&=&\displaystyle 
\frac{i\hbar}{4} K^{ij} ( \partial_i\partial_j f)\cdot g 
+ \frac{i\hbar}{4} ~ { {}^t K^{ij} (\partial_i f)\cdot ( \partial_j g)} 
+ \frac{i\hbar}{4} K^{ij} ( \partial_i f)\cdot (\partial_j g) 
+ \frac{i\hbar}{4} K^{ij}f\cdot (\partial_i\partial_j g) ,
\end{array}
\end{equation}
we introduce the folloinwg operators: 
\begin{defn}
\begin{equation}
\begin{array}{lll}
\vspace{2mm}
\displaystyle 
m(f\otimes g)&:=& f\cdot g , \\ 
\vspace{2mm}
\displaystyle 
L(f\otimes g) &:=&\displaystyle \frac{i\hbar}{4} K^{ij} ( \partial_i\partial_j f)\otimes g , \\ 
\vspace{2mm}
\displaystyle 
B(f\otimes g) &:=& \displaystyle \frac{i\hbar}{4} (K^{ij}+{}^t K^{ij}) (\partial_i f)\otimes ( \partial_j g) , \\
\vspace{2mm}
\displaystyle 
R(f\otimes g) &:=& \displaystyle \frac{i\hbar}{4} K^{ij} ( f \otimes \partial_i\partial_j g) , 
\end{array}
\end{equation}
where ${}^tK$ denotes the transpose of $K$.
\end{defn}
We have 
\begin{lem}
\[
{\rm (\ref{intertwiner-1})}
=m\circ( L + B + R) (f\otimes g).
\]
\end{lem}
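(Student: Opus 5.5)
The identity is a bookkeeping statement: the four terms in the last line of (\ref{intertwiner-1}) are exactly what $m\circ(L+B+R)$ produces. I would prove it by applying $m$ to each of $L$, $B$, $R$ separately, using only the bilinearity of $m$ and the definitions of the operators. No earlier result is needed beyond the Leibniz rule already used to derive (\ref{intertwiner-1}).

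First, by the definition of $m$,
\[
m\circ L(f\otimes g)=\frac{i\hbar}{4}K^{ij}(\partial_i\partial_j f)\cdot g
\qquad\mbox{and}\qquad
m\circ R(f\otimes g)=\frac{i\hbar}{4}K^{ij}f\cdot(\partial_i\partial_j g).
\]
These are the first and last terms of (\ref{intertwiner-1}).

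Second, I would split the $B$ term as
\[
m\circ B(f\otimes g)=\frac{i\hbar}{4}K^{ij}(\partial_i f)(\partial_j g)+\frac{i\hbar}{4}\,{}^tK^{ij}(\partial_i f)(\partial_j g).
\]
The first piece is the third term of (\ref{intertwiner-1}) and the second piece is the second term. To see the latter, I relabel the dummy indices $i\leftrightarrow j$ in the cross term $K^{ij}(\partial_j f)(\partial_i g)$ from the second line of the computation. This gives $K^{ji}(\partial_i f)(\partial_j g)={}^tK^{ij}(\partial_i f)(\partial_j g)$.

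Summing the four contributions and using linearity, $m\circ(L+B+R)=m\circ L+m\circ B+m\circ R$, yields the right-hand side of (\ref{intertwiner-1}), which proves the lemma. The only point needing care is the index relabeling in the cross term. Since $K$ is symmetric by definition, ${}^tK=K$, so $B$ equals $\frac{i\hbar}{2}K^{ij}\partial_i\otimes\partial_j$; the lemma does not need this, but it is a useful consistency check.
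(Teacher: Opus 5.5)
Your proof is correct and matches the paper, where the lemma is simply read off by comparing the last line of (\ref{intertwiner-1}) with the definitions of $m$, $L$, $B$, $R$; the relabeling $i\leftrightarrow j$ turns the cross term $K^{ij}(\partial_j f)(\partial_i g)$ into the ${}^tK$ term. Your closing remark that $K$ is symmetric ``by definition'' confuses the symmetric part of $\Gamma$ with the shift matrix $K$ of this subsection, which the paper keeps general (symmetry is imposed only in Corollary~\ref{intertwiner-cor} and Lemma~\ref{intertwinerlem2}), but your argument never uses it, so nothing is affected.
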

\noindent
Note that  $L(f\otimes g)\not=L(g\otimes f)$. 
\par\medskip
We also have 
\begin{lem}\label{intertiwner-2}
\[
\begin{array}{lll}
\vspace{2mm}
(1)~L\circ B&=&B\circ L, \\ 
\vspace{2mm}
(2)~R\circ B&=&B\circ R, \\
(3)~L\circ R&=&R\circ L.
\end{array}
\]
Thus, we have 
\[
(L+B+R)^{\circ k}=\sum_{p+q+r=k} \frac{k!}{p!q!r!}L^{\circ p}\circ B^{\circ q}\circ R^{\circ r},
\]
where $L^{\circ p}$ denotes $\overbrace{L\circ \cdots \circ L}^{p-times}$, etc. 
\end{lem}
%\begin{pf}  
%Obvious. 
%First we show (1). 
%\[
%\begin{array}{lll}
%\vspace{2mm}
%L\circ B&=& \displaystyle L
%\Bigr(   \frac{i\hbar}{4} (K^{ij}+{}^t K^{ij})
% (\partial_i f)\otimes ( \partial_j g)  \Bigr) 
%\\
%\vspace{2mm}
%&=& 
%\displaystyle \frac{i\hbar}{4} K^{i_2j_2}
%\Bigr(   \frac{i\hbar}{4} 
%(K^{i_1j_1}+{}^t K^{i_1j}_1) 
%\partial_{i_2}\partial_{j_2}
%(\partial_{i_1} f) 
%\otimes ( \partial_{j_1} g)  \Bigr) . 
%\end{array}
%\]
%On the other hand, 
%\[
%\begin{array}{lll}
%\vspace{2mm}
%B\circ L&=&  \displaystyle B 
%\Bigr( \frac{i\hbar}{4} K^{i_2j_2} 
%(\partial_{i_2}\partial_{j_2} f)
%\otimes  g  \Bigr) 
%\\
%\vspace{2mm}
%&=& 
%\displaystyle 
%\Bigr(  \frac{i\hbar}{4} 
%(K^{i_1j_1}+{}^t K^{i_1j_1}) 
%\frac{i\hbar}{4} K^{i_2j_2} 
%(\partial_{i_1} \partial_{i_2}\partial_{j_2} f)
%\otimes  (\partial_{j_1} g ) \Bigr)  . 
%\end{array}
%\]
%Hence we have (1). 
%According to a similar manner, 
%we have (2) and (3). 
%\flushright Q.E.D.
%\end{pf}

\begin{lem}\label{intertwiner-lem}
Under the same notations and assumptions above, we have
\begin{equation}\label{intertwiner-k-power}
\begin{array}{lll}
\vspace{2mm}
&&
\displaystyle 
\frac{1}{k!}
\left(\frac{i\hbar}{4} K^{ij} \partial_i\partial_j \right)^k (f\cdot g) 
\\
\vspace{2mm}
\displaystyle 
&=&\displaystyle 
\sum_{p+q+r=k} \frac{1}{q!} \left(\frac{i\hbar}{4}\right)^q 
(K+{}^t K)^{i_1j_1} \cdots (K+{}^t K)^{i_qj_q}  \\
\vspace{2mm}
\displaystyle 
& &\displaystyle \qquad\quad  
\times \partial_{i_1\ldots i_q}
\frac{1}{p!} \left(\frac{i\hbar}{4}\right)^p 
K^{k_1\ell_1}\cdots K^{k_p\ell_p}\partial_{k_1\ldots k_p}\partial_{\ell_1\ldots \ell_p} f \\
\vspace{2mm}
\displaystyle 
&&\displaystyle \qquad\quad\qquad\quad  
\times \partial_{j_1\ldots j_q}
\frac{1}{r!} \left(\frac{i\hbar}{4}\right)^r 
K^{s_1r_1}\cdots K^{s_rt_r}\partial_{s_1\ldots s_r}\partial_{t_1\ldots t_r} g ,
\end{array}
\end{equation}
where $\partial_{i_1\ldots i_q}$ denotes 
$\partial_{i_1}\cdots \partial_{i_q}$, etc. 
and we apply Einstein's rule to the indices $i_1,\ldots, i_q,\ldots, t_1,\ldots,t_r$ etc. NOT to the indices $p,q,r$. 
\end{lem}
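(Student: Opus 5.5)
The plan is to derive the statement directly from the first lemma above, which gives $\bigl(\frac{i\hbar}{4}K^{ij}\partial_i\partial_j\bigr)(f\cdot g)=m\circ(L+B+R)(f\otimes g)$, and from Lemma \ref{intertiner-2}, which supplies the trinomial expansion of $(L+B+R)^{\circ k}$.

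First I would upgrade the single-step identity to $k$ steps. The key observation is that the Laplace-type operator $\Delta:=\frac{i\hbar}{4}K^{ij}\partial_i\partial_j$ intertwines with $m$:
\[
\Delta\circ m=m\circ(L+B+R)
\]
holds as an identity of operators on $f\otimes g$ for arbitrary (smooth or formal) $f,g$. It therefore applies equally to the element $(L+B+R)(f\otimes g)$, which is a finite sum of tensors of derivatives of $f$ and $g$ with constant coefficients. An induction on $k$ then gives
\[
\Delta^k(f\cdot g)=m\circ(L+B+R)^{\circ k}(f\otimes g).
\]

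Next I would insert the expansion from Lemma \ref{intertiner-2}. Dividing by $k!$ yields
\[
\frac{1}{k!}\Delta^k(f\cdot g)=\sum_{p+q+r=k}\frac{1}{p!q!r!}\,m\circ L^{\circ p}\circ B^{\circ q}\circ R^{\circ r}(f\otimes g).
\]
To evaluate each term, note that $L$ acts only on the first tensor factor and $R$ only on the second. Hence
\[
L^{\circ p}\circ R^{\circ r}(f\otimes g)=\Bigl(\tfrac{i\hbar}{4}\Bigr)^{p+r}K^{k_1\ell_1}\cdots K^{k_p\ell_p}\,\partial_{k_1\ldots k_p}\partial_{\ell_1\ldots\ell_p}f\otimes K^{s_1t_1}\cdots K^{s_rt_r}\,\partial_{s_1\ldots s_r}\partial_{t_1\ldots t_r}g.
\]
Applying $B^{\circ q}$ contributes $(\frac{i\hbar}{4})^q(K+{}^tK)^{i_1j_1}\cdots(K+{}^tK)^{i_qj_q}$, together with $\partial_{i_1\ldots i_q}$ on the first factor and $\partial_{j_1\ldots j_q}$ on the second. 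Because all coefficients are constant, these derivatives commute with the $K$-contractions already present. Distributing the factors $1/p!$, $1/q!$, $1/r!$ as in the statement then reproduces \eqref{intertwiner-k-power} after applying $m$.

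The only delicate point is the induction step. One must check that the composite operators commute, so that the order in Lemma \ref{intertiner-2} may be rearranged to apply $L$ and $R$ first, and that $\Delta\circ m=m\circ(L+B+R)$ holds on general tensors, not just on $f\otimes g$. Both facts reduce to constancy of $K$ and the Leibniz rule, exactly as in the computation (\ref{intertwiner-1}).
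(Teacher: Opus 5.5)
Your proposal is correct and follows the paper's proof essentially verbatim: the paper likewise writes the left-hand side as $\frac{1}{k!}\,m\circ(L+B+R)^{\circ k}(f\otimes g)$, expands it with the trinomial formula from the commutation lemma, and reads off the contributions of $L^{\circ p}$, $B^{\circ q}$, $R^{\circ r}$ term by term. The only difference is that you explicitly justify $\bigl(\frac{i\hbar}{4}K^{ij}\partial_i\partial_j\bigr)^k\circ m=m\circ(L+B+R)^{\circ k}$ by induction, using that the one-step identity holds by linearity on general tensors, whereas the paper uses this step without comment.
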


\begin{pf}
According to the previous notations and lemma, we have 
\[
\begin{array}{lcl}
\vspace{2mm}
\displaystyle 
\mbox{LHS of \eqref{intertwiner-k-power}}&=&\displaystyle \frac{1}{k!}m\circ (L+B+R)^{\circ k}(f\otimes g)\\
\vspace{2mm}
\displaystyle 
&\stackrel{ (\mbox{ \tiny Lemma~\ref{intertiwner-2} }) }{=}&\displaystyle \sum_{p+q+r=k}
\frac{1}{k!} ~ \frac{k!}{p!q!r!} ~ 
m\circ (L^{\circ p}\circ B^{\circ q}\circ R^{\circ r} (f\otimes g)) \\
%\qquad\qquad\qquad 
%(\mbox{Lemma~\ref{intertiwner-2} ) }\\ 
\vspace{2mm}
\displaystyle 
&=&\displaystyle 
\sum_{p+q+r=k} \frac{1}{q!} \left(\frac{i\hbar}{4}\right)^q 
(K+{}^t K)^{i_1j_1} \cdots (K+{}^t K)^{i_qj_q}  \\
\vspace{2mm}
\displaystyle 
& &\displaystyle \qquad\quad  
\times \partial_{i_1\ldots i_q}
\frac{1}{p!} \left(\frac{i\hbar}{4}K^{k\ell}\partial_k \partial_{\ell}\right)^p  f 
%\\
\vspace{2mm}
\displaystyle 
%&&\displaystyle \qquad\quad\qquad\quad  
\times \partial_{j_1\ldots j_q}
\frac{1}{r!} \left(\frac{i\hbar}{4}K^{k\ell}\partial_k \partial_{\ell}\right)^r 
 g 
\\
\vspace{2mm}
\displaystyle 
&=&\displaystyle 
\sum_{p+q+r=k} \frac{1}{q!} \left(\frac{i\hbar}{4}\right)^q 
(K+{}^t K)^{i_1j_1} \cdots (K+{}^t K)^{i_qj_q}  \\
\vspace{2mm}
\displaystyle 
& &\displaystyle \qquad\quad  
\times \partial_{i_1\ldots i_q}
\frac{1}{p!} \left(\frac{i\hbar}{4}\right)^p 
K^{k_1\ell_1}\cdots K^{k_p\ell_p}\partial_{k_1\ldots k_p}\partial_{\ell_1\ldots \ell_p} f \\
\vspace{2mm}
\displaystyle 
&&\displaystyle \qquad\quad\qquad\quad  
\times \partial_{j_1\ldots j_q}
\frac{1}{r!} \left(\frac{i\hbar}{4}\right)^r 
K^{s_1r_1}\cdots K^{s_rt_r}\partial_{s_1\ldots s_r}\partial_{t_1\ldots t_r} g 
\\
&=&
\mbox{RHS of \eqref{intertwiner-k-power}}. 
\end{array}
\]
%This completes the proof. 
\flushright Q.E.D. %$\Box$
\end{pf}

\par\medskip\noindent
Note that if $K={}^tK$ then 
$\displaystyle K^{ij}\partial_i \partial_j=
\frac{1}{2}(K+{}^t K)^{ij}\partial_i \partial_j$. 
%$K\overrightarrow{\partial}\overrightarrow{\partial}%=\frac{1}{2}(K+{}^t K)\overrightarrow{\partial}
%\overrightarrow{\partial}$. 
Hence we have 
\begin{cor}\label{intertwiner-cor}
Under the same notations and assumptions above, we have
\begin{equation}%\label{intertwiner-k-power}
\begin{array}{lll}
\vspace{2mm}
&&
\displaystyle 
\frac{1}{k!}
\left(\frac{i\hbar}{4} K^{ij} \partial_i\partial_j \right)^k (f\cdot g) 
\\
\vspace{2mm}
\displaystyle 
&=&\displaystyle 
\sum_{p+q+r=k} \frac{1}{q!} \left(\frac{i\hbar}{2}\right)^q 
K^{i_1j_1} \cdots K^{i_qj_q}  \\
\vspace{2mm}
\displaystyle 
& &\displaystyle \qquad\quad  
\times \partial_{i_1\ldots i_q}
\frac{1}{p!} \left(\frac{i\hbar}{4}\right)^p 
K^{k_1\ell_1}\cdots K^{k_p\ell_p}\partial_{k_1\ldots k_p}\partial_{\ell_1\ldots \ell_p} f \\
\vspace{2mm}
\displaystyle 
&&\displaystyle \qquad\quad\qquad\quad  
\times \partial_{j_1\ldots j_q}
\frac{1}{r!} \left(\frac{i\hbar}{4}\right)^r 
K^{s_1r_1}\cdots K^{s_rt_r}\partial_{s_1\ldots s_r}\partial_{t_1\ldots t_r} g . 
\end{array}
\end{equation}
%where $\partial_{i_1\ldots i_q}=\partial_{i_1}
%\cdots \partial_{i_q}$, etc. 
%and we apply Einstein's rule to the indices 
%$i_1,\ldots, i_q,\ldots, t_1,\ldots,t_r$ etc. NOT to the indices $p,q,r$. 
\end{cor}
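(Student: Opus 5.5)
The corollary should follow from Lemma~\ref{intertwiner-lem} by a single substitution in the middle factor; nothing in the combinatorial structure changes. I will take the identity of Lemma~\ref{intertwiner-lem} exactly as stated and modify only the factor
\[
\frac{1}{q!}\left(\frac{i\hbar}{4}\right)^q (K+{}^tK)^{i_1j_1}\cdots(K+{}^tK)^{i_qj_q}\,\partial_{i_1\ldots i_q}(\,\cdot\,)\,\partial_{j_1\ldots j_q}(\,\cdot\,).
\]
The two outer blocks, $\frac{1}{p!}(\frac{i\hbar}{4}K\partial\partial)^p f$ and $\frac{1}{r!}(\frac{i\hbar}{4}K\partial\partial)^r g$, are untouched. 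The sum over $p+q+r=k$ is also kept as it is.

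The key step is the symmetry assumption $K={}^tK$. By construction $K=\frac12(\Gamma+{}^t\Gamma)$, so this holds automatically. It gives $(K+{}^tK)^{ij}=2K^{ij}$ entrywise. Substituting this into each of the $q$ factors produces a factor $2^q$, and $2^q(\frac{i\hbar}{4})^q=(\frac{i\hbar}{2})^q$. The middle block therefore becomes $\frac{1}{q!}(\frac{i\hbar}{2})^q K^{i_1j_1}\cdots K^{i_qj_q}\partial_{i_1\ldots i_q}\cdots\partial_{j_1\ldots j_q}$, which is exactly the displayed right-hand side of the corollary. The remark before the corollary, $K^{ij}\partial_i\partial_j=\frac12(K+{}^tK)^{ij}\partial_i\partial_j$, is the same observation applied on the left-hand side. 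It shows that the operator $\frac{i\hbar}{4}K^{ij}\partial_i\partial_j$ is unchanged, so the left-hand side needs no modification.

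There is no real obstacle here. The only care needed is bookkeeping: the Einstein summation over the contracted indices $i_1,\ldots,j_q$ is performed factor by factor, so the entrywise replacement $(K+{}^tK)^{i_sj_s}=2K^{i_sj_s}$ is legitimate inside the contracted sum. One should also check that the $q$-fold factor of $2$ is not absorbed elsewhere, for instance into the $p$- or $r$-blocks, where no transpose appears.
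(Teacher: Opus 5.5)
Your proposal is correct and is the paper's argument: with $K={}^tK$ one has $(K+{}^tK)^{ij}=2K^{ij}$, so the $q$ middle factors in Lemma~\ref{intertwiner-lem} give $2^q(\frac{i\hbar}{4})^q=(\frac{i\hbar}{2})^q$, while the outer blocks and the left-hand side are unchanged.

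One correction: symmetry of $K$ is not automatic ``by construction'' here. In this subsection $K$ is the free parameter of the intertwiner $T^K$ (taking $*_\Gamma$ to $*_{\Gamma+K}$), not the symmetric part of $\Gamma$, and Lemma~\ref{intertwiner-lem} is stated for arbitrary $K$. So $K={}^tK$ must be imposed as a hypothesis, as the paper does. Without it the identity already fails at $k=1$, where the cross term is $\frac{i\hbar}{4}(K+{}^tK)^{ij}\partial_if\,\partial_jg$ rather than $\frac{i\hbar}{2}K^{ij}\partial_if\,\partial_jg$.
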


\newpage\noindent
We also need the following lemma:

\begin{lem}\label{intertwinerlem2}
Assume that $K=(K^{ij})$ is a complex symmetric matrix. Then under the same notations above, we have 
\[
\begin{array}{l}
\vspace{2mm}

\begin{array}{lll}
\vspace{2mm}
(1)~&&\displaystyle 
\frac{1}{M!}f \Bigr((K+\Gamma)^{ij}\overleftarrow{\partial}_i \overrightarrow{\partial}_j\Bigr)^M g
\\
\vspace{2mm}
&=& 
\displaystyle 
\sum_{k+\ell=M}\frac{1}{k! \ell!} K^{i_1j_1}\cdots K^{i_kj_k}
\Gamma^{s_1t_1}\cdots \Gamma^{s_{\ell} t_{\ell}} 
(\partial_{i_1\ldots i_k}\partial_{s_1\ldots s_{\ell}} f) \times 
(\partial_{j_1\ldots j_k}\partial_{t_1\ldots t_{\ell}} g) , 
\end{array}
\\
\vspace{2mm}
\mbox{where }
f \Bigr((K+\Gamma)^{ij}\overleftarrow{\partial}_i \overrightarrow{\partial}_j\Bigr) g 
\mbox{ denotes }
 (K+\Gamma)^{ij} \partial_i f \partial_j g.
\\
\begin{array}{lll}
\vspace{2mm}
(2)~&&  \displaystyle
f*_{\Gamma,K} g \\
\vspace{2mm}
&:=& \displaystyle 
%f *_{\Gamma} 
%e^{\frac{i\hbar}{2}
%\overleftarrow{\partial_i}K^{ij}
%\overrightarrow{\partial j}}*_{\Gamma} g
\displaystyle \sum_{n=0}^{\infty} \frac{1}{n!}\left(\frac{i\hbar}{2}\right)^n K^{i_1j_1}\cdots K^{i_nj_n}(\partial_{i_1}\cdots \partial_{i_1} f )*_{\Gamma}(\partial_{j_1}\cdots \partial_{j_1} g )\\
&=&f*_{\Gamma + K} g . 
\end{array}

\end{array}
\]

\end{lem}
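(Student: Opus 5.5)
For part (1), I will treat the bidifferential operator $D_{A}:=A^{ij}\overleftarrow{\partial}_i\overrightarrow{\partial}_j$ (acting on $f\otimes g$ and then multiplied) as the basic object. Since $K$ and $\Gamma$ are constant matrices, $D_{K+\Gamma}=D_K+D_\Gamma$. Moreover $D_K$ and $D_\Gamma$ commute as operators on $f\otimes g$, because both are built from constant-coefficient partial derivatives acting separately on the two tensor factors. Hence the binomial theorem applies:
\[
\frac{1}{M!}(D_K+D_\Gamma)^M=\sum_{k+\ell=M}\frac{1}{k!\,\ell!}D_K^{k}D_\Gamma^{\ell}.
\]
Expanding $D_K^k D_\Gamma^\ell$ with Einstein's convention gives exactly
\[
K^{i_1j_1}\cdots K^{i_kj_k}\Gamma^{s_1t_1}\cdots\Gamma^{s_\ell t_\ell}\,(\partial_{i_1\ldots i_k}\partial_{s_1\ldots s_\ell}f)(\partial_{j_1\ldots j_k}\partial_{t_1\ldots t_\ell}g).
\]
This uses only that partial derivatives commute; symmetry of $K$ is not even needed here.

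For part (2), I first rewrite the augmented star product compactly as
\[
f*_\Gamma g=\sum_{\ell\ge0}\frac{1}{\ell!}\Bigl(\frac{i\hbar}{2}\Bigr)^{\ell}f\,D_\Gamma^{\ell}\,g,
\]
which is just the definition of $C^\Gamma_\ell$. Then I substitute $\partial_{i_1\ldots i_n}f$ and $\partial_{j_1\ldots j_n}g$ into this formula. Since derivatives commute, the $\Gamma$-derivatives simply stack on top of the $K$-derivatives, so the general term of $f*_{\Gamma,K}g$ becomes
\[
\frac{1}{n!\,\ell!}\Bigl(\frac{i\hbar}{2}\Bigr)^{n+\ell}K^{i_1j_1}\cdots K^{i_nj_n}\Gamma^{s_1t_1}\cdots\Gamma^{s_\ell t_\ell}\,(\partial_{i_1\ldots i_n}\partial_{s_1\ldots s_\ell}f)(\partial_{j_1\ldots j_n}\partial_{t_1\ldots t_\ell}g).
\]
Next I regroup the double sum by total order $M=n+\ell$. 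By part (1), each group equals $\frac{1}{M!}\bigl(\frac{i\hbar}{2}\bigr)^M f\,D_{K+\Gamma}^M\,g=\frac{1}{M!}\bigl(\frac{i\hbar}{2}\bigr)^M C^{\Gamma+K}_M(f,g)$. Summing over $M$ gives $f*_{\Gamma+K}g$.

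The only delicate point is the regrouping of a double formal series. I will handle it by working in formal power series in $\hbar$: each coefficient of $\hbar^M$ is a finite sum, so the reordering is legitimate. (For polynomial $f,g$ the sums are finite outright.) I also note that the displayed definition in the statement has the typo $\partial_{i_1}\cdots\partial_{i_1}$, which I will read as $\partial_{i_1}\cdots\partial_{i_n}$.

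The main thing to verify carefully is the commutation $D_KD_\Gamma=D_\Gamma D_K$ together with the identification of the index bookkeeping. This is the same kind of relabeling as in the associativity proof and in Lemma~\ref{intertiwner-2}, and I expect no real obstacle there.
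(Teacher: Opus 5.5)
Your proof is correct and follows essentially the same route as the paper: part (1) is the binomial expansion of $(D_K+D_\Gamma)^M$, using that constant-coefficient derivatives commute. Part (2) substitutes the definition of $*_\Gamma$ into the double sum, regroups by total order $M=n+\ell$, and applies (1). Your explicit justification of the regrouping via formal power series in $\hbar$ and your reading of the typo $\partial_{i_1}\cdots\partial_{i_1}$ as $\partial_{i_1}\cdots\partial_{i_n}$ are points the paper leaves implicit.
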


\begin{pf}
Note that we have 
\[
\begin{array}{lll}
\vspace{2mm}
&&f \Bigr((K+\Gamma)^{ij}\overleftarrow{\partial}_i \overrightarrow{\partial}_j\Bigr)^M g\\
\vspace{2mm}
&=& \displaystyle 
f \Bigr((K+\Gamma)^{i_1j_1}\overleftarrow{\partial}_{i_1} \overrightarrow{\partial}_{j_1}\Bigr)
\cdots 
\Bigr((K+\Gamma)^{i_Mj_M}\overleftarrow{\partial}_{i_M} \overrightarrow{\partial}_{j_M}\Bigr)g
\\
\vspace{2mm}
&=& 
\displaystyle 
\sum_{k+\ell=M}\frac{M!}{k! \ell!} K^{p_1q_1}\cdots K^{p_kq_k}
\Gamma^{s_1t_1}\cdots \Gamma^{s_{\ell} t_{\ell}} 
(\partial_{p_1\ldots p_k}\partial_{s_1\ldots s_{\ell}} f) \cdot 
(\partial_{q_1\ldots q_k}\partial_{t_1\ldots t_{\ell}} g) .
\end{array}
\] 
This shows the first assertion (1).

As for the second assertion, 
we see 
\[
\begin{array}{lll}
\vspace{2mm}
&&  
\displaystyle
f*_{\Gamma,K} g 
\\
%\vspace{2mm}
%&:=& 
%\displaystyle f *_{\Gamma} e^{\frac{i\hbar}{2}
%\overleftarrow{\partial_i}K^{ij}
%\overrightarrow{\partial j}}*_{\Gamma} g
%\\
\vspace{2mm}
&=& 
\displaystyle \sum_{n=0}^{\infty}\frac{1}{n!}\left(\frac{i\hbar}{2}\right)^n K^{i_1j_1}\cdots K^{i_nj_n}(\partial_{i_1}\cdots \partial_{i_1} f )*_{\Gamma}(\partial_{j_1}\cdots \partial_{j_1} g )\\
\vspace{2mm}
&=&
\displaystyle \sum_{n=0}^{\infty} \frac{1}{n!}\left(\frac{i\hbar}{2}\right)^n K^{i_1j_1}\cdots K^{i_nj_n}\\
\vspace{2mm}
&&\displaystyle \qquad\quad 
\times \sum_{k=0}^{\infty} \frac{1}{k!}\left(\frac{i\hbar}{2}\right)^k \Gamma^{s_1t_1}\cdots \Gamma^{s_kt_k}(\partial_{s_1}\cdots \partial_{s_k}\partial_{i_1}\cdots \partial_{i_n} f ) 
\cdot 
(\partial_{t_1}\cdots \partial_{t_k}\partial_{j_1}\cdots \partial_{j_n} g )
\\
\vspace{2mm}
&\stackrel{(1)}{=}&
f *_{\Gamma+K} g
\end{array}
\]
This completes the proof. \flushright Q.E.D.
\end{pf}

\noindent
We next show that 

\begin{thm}\label{intertwiner-thorem}
Assume that $K$ is a complex symmetric matrix. Then we have 
\[
e^{\frac{i\hbar}{4}K^{ij}{\overrightarrow{\partial}_i\overrightarrow{\partial}_j}}(f*_\Gamma g)
=\Bigr(e^{\frac{i\hbar}{4}K^{ij}{\overrightarrow{\partial}_i\overrightarrow{\partial}_j}}(f)\Bigr)*_{\Gamma+K}\Bigr( e^{\frac{i\hbar}{4}K^{ij}{\overrightarrow{\partial}_i\overrightarrow{\partial}_j}}(g)\Bigr), 
\]
where $K^{ij}{\overrightarrow{\partial}_i\overrightarrow{\partial}_j}f$ denotes 
$K^{ij}\partial_{ij}f$. 
That is, $e^{\frac{i\hbar}{4}K^{ij}{\overrightarrow{\partial}_i\overrightarrow{\partial}_j}}$ is an equivalence operator from the augmented star product $*_\Gamma$ to the product $*_{\Gamma+K}$. 
\end{thm}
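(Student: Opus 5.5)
Write $T=e^{\frac{i\hbar}{4}K^{ij}\partial_i\partial_j}$ and $\Delta=\frac{i\hbar}{4}K^{ij}\partial_i\partial_j$. The plan is to reduce the statement to the case $\Gamma=0$, i.e.\ to the identity
\[
T(f\cdot g)=(Tf)*_K(Tg),
\]
and then lift it to general $\Gamma$ using that all constant-coefficient differential operators commute.

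\textbf{Step 1: the case $\Gamma=0$.} Sum Corollary~\ref{intertwiner-cor} over $k\ge 0$. The left side becomes $T(f\cdot g)$. On the right side the triple sum over $p+q+r=k$, summed over all $k$, becomes an unrestricted sum over $p,q,r\ge0$, so it factorizes:
\[
\sum_{q}\frac{1}{q!}\left(\frac{i\hbar}{2}\right)^q K^{i_1j_1}\cdots K^{i_qj_q}\,\partial_{i_1\ldots i_q}(Tf)\cdot\partial_{j_1\ldots j_q}(Tg).
\]
By the definition of $*_K$ this is exactly $(Tf)*_K(Tg)$. The doubling $\frac{i\hbar}{4}\to\frac{i\hbar}{2}$ in the middle factor comes from the symmetry $K={}^tK$, and the corollary has already absorbed it.

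\textbf{Step 2: general $\Gamma$.} Apply Step 1 to the pair $\partial_{s_1\ldots s_\ell}f$ and $\partial_{t_1\ldots t_\ell}g$. Multiply by $\frac{1}{\ell!}(\frac{i\hbar}{2})^\ell\Gamma^{s_1t_1}\cdots\Gamma^{s_\ell t_\ell}$ and sum over $\ell$. The left side becomes $T(f*_\Gamma g)$, since $T$ is linear and, termwise in $\hbar$, can be passed through the sum. On the right, $T$ commutes with every $\partial_s$, so we obtain
\[
\sum_{\ell,n}\frac{1}{\ell!\,n!}\left(\frac{i\hbar}{2}\right)^{\ell+n}\Gamma\cdots\Gamma\, K\cdots K\;(\partial_{s\ldots}\partial_{i\ldots}Tf)\cdot(\partial_{t\ldots}\partial_{j\ldots}Tg).
\]
Grouping the terms with $\ell+n=M$ and using Lemma~\ref{intertwinerlem2}(1) turns this into $\sum_M\frac{1}{M!}(\frac{i\hbar}{2})^M\, Tf\bigl((\Gamma+K)^{ij}\overleftarrow{\partial}_i\overrightarrow{\partial}_j\bigr)^M Tg$, which is $(Tf)*_{\Gamma+K}(Tg)$. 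Equivalently, the double sum is $(Tf)*_{\Gamma,K}(Tg)$ as in Lemma~\ref{intertwinerlem2}(2), and that lemma identifies it with $*_{\Gamma+K}$.

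\textbf{Main obstacle.} The nontrivial combinatorics is already contained in Lemma~\ref{intertwiner-lem} and Corollary~\ref{intertwiner-cor}, so what remains is bookkeeping. The step needing most care is the rearrangement of the formal series in Step 2. All expressions are formal power series in $\hbar$, and each fixed power of $\hbar$ receives only finitely many contributions, so reordering and regrouping the sums is legitimate. One must also check that the normalizations match: the factor $\frac{1}{q!}(\frac{i\hbar}{2})^q$ in the corollary must agree with the coefficient $\frac{1}{\ell!}(\frac{i\hbar}{2})^\ell$ in the definition of the augmented star product, and it does.
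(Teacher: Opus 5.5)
Your proposal is correct and follows essentially the paper's proof. The paper likewise expands $f*_\Gamma g$, applies Lemma~\ref{intertwiner-lem} termwise to $\partial_I f\cdot\partial_J g$, and uses $K={}^tK$ (Corollary~\ref{intertwiner-cor}). It then resums over $p,r$ to produce $Tf$ and $Tg$, recognizes $(Tf)*_{\Gamma,K}(Tg)$, and concludes with Lemma~\ref{intertwinerlem2}(2); your version differs only in proving the $\Gamma=0$ identity $T(fg)=(Tf)*_K(Tg)$ first and then lifting it via commutation of $T$ with constant-coefficient derivatives, which is the same computation in a slightly tidier order.
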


\begin{pf} 
Thanks to Lemma~\ref{intertwiner-lem}, i.e.
\[ %\begin{equation}\label{intertwiner-k-power}
\begin{array}{lll}
\vspace{2mm}
&&
\displaystyle 
\uuline{
\frac{1}{k!}
\left(\frac{i\hbar}{4} K^{ij} \partial_i\partial_j \right)^k} (f\cdot g) 
\\
\vspace{2mm}
\displaystyle 
&=&\displaystyle 
\dashuline{
\sum_{p+q+r=k} \frac{1}{q!} \left(\frac{i\hbar}{4}\right)^q 
(K+{}^t K)^{i_1j_1} \cdots (K+{}^t K)^{i_qj_q} 
} \\
\vspace{2mm}
\displaystyle 
& &\displaystyle \qquad\quad  
\dashuline{ 
\times \partial_{i_1\ldots i_q}
\frac{1}{p!} \left(\frac{i\hbar}{4}\right)^p 
K^{k_1\ell_1}\cdots K^{k_p\ell_p}\partial_{k_1\ldots k_p}\partial_{\ell_1\ldots \ell_p} f 
}
\\
\displaystyle 
&&\displaystyle \qquad\quad\qquad\quad  
\dashuline{
\times \partial_{j_1\ldots j_q}
\frac{1}{r!} \left(\frac{i\hbar}{4}\right)^r 
K^{s_1r_1}\cdots K^{s_rt_r}\partial_{s_1\ldots s_r}\partial_{t_1\ldots t_r} g 
},
\end{array}
\]
%\end{equation}
we have 
\[\begin{array}{lcl}
\vspace{2mm}
&& \displaystyle e^{\frac{i\hbar}{4}K^{ij}{\overrightarrow{\partial}_i\overrightarrow{\partial}_j}}(f*_\Gamma g) \\
\vspace{2mm}
&=& 
\displaystyle e^{\frac{i\hbar}{4}K^{ij}{\overrightarrow{\partial}_i\overrightarrow{\partial}_j}}
\left( \sum_n\frac{1}{n!}\left(\frac{i\hbar}{2}\right)^n\Gamma^{I_1J_1}\cdots\Gamma^{I_nJ_n}\partial_{I_1\ldots I_n}f  \partial_{J_1\ldots J_n}g \right)
\\
\vspace{2mm}
&=&
\displaystyle 
\sum_k
\uuline{
\frac{1}{k!}
\left(\frac{i\hbar}{4} K^{ij} \partial_i\partial_j \right)^k} 
\left( \sum_n\frac{1}{n!}\left(\frac{i\hbar}{2}\right)^n\Gamma^{I_1J_1}\cdots\Gamma^{I_nJ_n}\partial_{I_1\ldots I_n}f  \partial_{J_1\ldots J_n}g \right)
\\
\vspace{2mm}
&=&
\displaystyle 
\sum_n\frac{1}{n!}\left(\frac{i\hbar}{2}\right)^n\Gamma^{I_1J_1}\cdots\Gamma^{I_nJ_n}
\dashuline{
\left[
\sum_k \sum_{p+q+r=k} \frac{1}{q!} \left(\frac{i\hbar}{4}\right)^q 
(K+{}^t K)^{i_1j_1} \cdots (K+{}^t K)^{i_qj_q} 
\right.} 
\\
\vspace{2mm}
\displaystyle 
& &\displaystyle \qquad 
\dashuline{
\left.
\times \partial_{i_1\ldots i_q}
\frac{1}{p!} \left(\frac{i\hbar}{4}K^{k\ell}\partial_k \partial_{\ell}\right)^p  \partial_{I_1\ldots I_n}f 
\times \partial_{j_1\ldots j_q}
\frac{1}{r!} \left(\frac{i\hbar}{4}K^{k\ell}\partial_k \partial_{\ell}\right)^r 
 \partial_{J_1\ldots J_n}g 
\right]} .
\end{array}
\]
Since we assume that ${{}^tK=K}$, we have 
\[
\begin{array}{lcl}
%\\
\vspace{2mm}
&{=}& 
\displaystyle 
\sum_n\frac{1}{n!}\left(\frac{i\hbar}{2}\right)^n\Gamma^{I_1J_1}\cdots\Gamma^{I_nJ_n}
\left[
\sum_k\sum_{p+q+r=k} \frac{1}{q!} \left(\frac{i\hbar}{2}\right)^q 
K^{i_1j_1} \cdots K^{i_qj_q}  
\right. 
\\
\vspace{2mm}
\displaystyle 
& &\displaystyle \qquad 
\times \partial_{i_1\ldots i_q}
\frac{1}{p!} \left(\frac{i\hbar}{4} \right)^p
K^{k_1\ell_1} \cdots K^{k_p\ell_p} 
\partial_{k_1\ldots k_p}\partial_{\ell_1\ldots\ell_p}  \partial_{I_1\ldots I_n}f 
\\
\vspace{2mm}
\displaystyle 
&&\displaystyle \qquad\qquad  
\left. 
\times \partial_{j_1\ldots j_q}
\frac{1}{r!} \left(\frac{i\hbar}{4} \right)^r
K^{s_1t_1} \cdots K^{s_rt_r} 
\partial_{s_1\ldots s_r}\partial_{t_1\ldots t_r}  \partial_{J_1\ldots J_n} g 
\right]
%\end{array}
%\]
\\
%\[
%\begin{array}{lcl}
\vspace{2mm}
%&=& 
%\displaystyle 
%\sum_{q} \frac{1}{q!} \left(\frac{i\hbar}{2}\right)^q 
%K^{i_1j_1} \cdots K^{i_qj_q}
%\left[\sum_n\frac{1}{n!}\left(\frac{i\hbar}{2}
%\right)^n\Gamma^{I_1J_1}\cdots\Gamma^{I_nJ_n}
%\right. 
%\\
%\vspace{2mm}
%\displaystyle 
%& &\displaystyle \qquad 
%\times \partial_{i_1\ldots i_q}
%\partial_{I_1\ldots I_n}
%\sum_p
%\frac{1}{p!} \left(\frac{i\hbar}{4} \right)^p
%K^{k_1\ell_1} \cdots K^{k_p\ell_p} 
%\partial_{k_1\ldots k_p}\partial_{\ell_1\ldots\ell_p} f 
%\\
%\vspace{2mm}
%\displaystyle 
%&&\displaystyle \qquad\qquad  
%\left. 
%\times \partial_{j_1\ldots j_q}
%\partial_{J_1\ldots J_n}
%\sum_r
%\frac{1}{r!} \left(\frac{i\hbar}{4} \right)^r
%K^{s_1t_1} \cdots K^{s_rt_r} 
%\partial_{s_1\ldots s_r}\partial_{t_1\ldots t_r}  g 
%\right]

\\
\vspace{2mm}
&~=& 
\displaystyle 
\sum_{q} \frac{1}{q!} \left(\frac{i\hbar}{2}\right)^q 
K^{i_1j_1} \cdots K^{i_qj_q}
\Bigr[\sum_n\frac{1}{n!}\left(\frac{i\hbar}{2}\right)^n\Gamma^{I_1J_1}\cdots\Gamma^{I_nJ_n}
%\right. 
\\
\vspace{3mm}
\displaystyle 
& &\displaystyle \qquad\qquad\qquad 
\times \partial_{i_1\ldots i_q}
\partial_{I_1\ldots I_n}
(e^{\frac{i\hbar}{4}K^{ij}\overrightarrow{\partial_i}\overrightarrow{\partial}_j} f) 
%\left. 
\times \partial_{j_1\ldots j_q}
\partial_{J_1\ldots J_n}
(e^{\frac{i\hbar}{4}K^{ij}\overrightarrow{\partial_i}\overrightarrow{\partial}_j}g)
\Bigr]
\\
\vspace{3mm}
\displaystyle 
&=&\displaystyle
(e^{\frac{i\hbar}{4}K^{ij}\overrightarrow{\partial_i}\overrightarrow{\partial}_j} f) 
*_{\Gamma,K}
(e^{\frac{i\hbar}{4}K^{ij}\overrightarrow{\partial}_i\overrightarrow{\partial}_j} g)

\\
\vspace{2mm}
&
\stackrel{\tiny \mbox{(Lemma~\ref{intertwinerlem2}(2))}}{=}
& \displaystyle 
(e^{\frac{i\hbar}{4}K^{ij}\overrightarrow{\partial_i}\overrightarrow{\partial}_j} f) 
*_{\Gamma+K}
(e^{\frac{i\hbar}{4}K^{ij}\overrightarrow{\partial_i}\overrightarrow{\partial}_j} g). 
\qquad\qquad
%\mbox{(Lemma~\ref{intertwinerlem2}(2))}. 
\end{array}
\]
This completes the proof. \flushright Q.E.D. 
%$\Box$

\end{pf}

\begin{defn}
Set $T^K:=e^{\frac{i\hbar}{4}K^{ij}{\overrightarrow{\partial}_i
\overrightarrow{\partial}_j}}$ 
and  $T^K(*_{\Gamma})
:=*_{\Gamma+K}=*_{\Gamma,K}$~
(cf. Lemma~\ref{intertwinerlem2}). 
\end{defn}
Then  
\[
e^{\frac{i\hbar}{4}K^{ij}{\overrightarrow{\partial}_i\overrightarrow{\partial}_j}}(f*_\Gamma g)
=\Bigr(e^{\frac{i\hbar}{4}K^{ij}{\overrightarrow{\partial}_i\overrightarrow{\partial}_j}}(f)\Bigr)
\uuline{ *_{\Gamma+K} }
\Bigr( e^{\frac{i\hbar}{4}K^{ij}{\overrightarrow{\partial}_i\overrightarrow{\partial}_j}}(g)\Bigr)
\]
is rewritten as 
\[
T^K(f*_{\Gamma} g)=(T^K(f))(\uuline{T^K(*_{\Gamma})})(T^K(g)). 
\]

\par\medskip\noindent
{\bf Remark.}  
Using the above equivalence operator 
$T^K=e^{\frac{i\hbar}{4}K^{ij}\partial_i\partial_j}$, 
we can represent several ordering in the following way:
\[
\begin{array}{c|c|c}
Quantization & K & \Gamma=\Lambda+K
\\
\hline
Weyl &  
\left[\begin{array}{cc} 0 & 0 \\ 0 & 0 \end{array}\right]
&
\left[\begin{array}{cc} 0 & 1 \\ -1 & 0 \end{array}\right]
\\
\hline
Standard & 
\left[\begin{array}{cc} 0 & 1 \\ 1 & 0 \end{array}\right]
&
\left[\begin{array}{cc} 0 & 2 \\ 0 & 0 \end{array}\right]
\\
\hline
Anti-Standard & 
\left[\begin{array}{cc} 0 & -1 \\ -1 & 0 \end{array}\right]
&
\left[\begin{array}{cc} 0 & 0 \\ -2 & 0 \end{array}\right]
\\
\hline
Normal & 
\left[\begin{array}{cc} \frac{-\hbar}{m\omega} & 0 \\ 0 & {-\hbar m\omega} \end{array}\right]
&
\left[\begin{array}{cc} \frac{-\hbar}{m\omega} & 1 \\ -1 & {-\hbar m\omega} \end{array}\right]
\\
\hline
Anti-Normal & 
\left[\begin{array}{cc} \frac{\hbar}{m\omega} & 0 \\ 0 & {\hbar m\omega} \end{array}\right]
&
\left[\begin{array}{cc} \frac{\hbar}{m\omega} & 1 \\ -1 & {\hbar m\omega} \end{array}\right]
\\
\hline
Husimi & 
\left[
\begin{array}{cc}
\frac{\eta}{i\hbar}\sigma^2 & 0 \\
0 & \frac{\eta}{i\hbar}\frac{1}{\sigma^2}
\end{array}
\right]
&
\left[
\begin{array}{cc}
\frac{\eta}{i\hbar}\sigma^2 & 1 \\
-1 & \frac{\eta}{i\hbar}\frac{1}{\sigma^2}
\end{array}
\right]
\\
\hline
damped 
&
\left[
\begin{array}{cc}
-2\gamma m & 0 \\
0 & \displaystyle 0
\end{array}
\right]
&
\left[
\begin{array}{cc}
-2\gamma m & 1 \\
-1 & \displaystyle 0
\end{array}
\right]
\end{array}
\]
where $m,~\omega,~\eta,~\sigma$ and $\gamma$ denote a mass parameter, an angular momentum parameter, a classical coarse-graining scale, a squeezing parameter and a damping parameter. 
For this table, see section 7. 

\section{Augmented star exponentials of linear forms}
In this section, we compute augmented  exponentials of linear forms, i.e., polynomials of order at most 1. 

%\subsection{Augmented eneral star exponentials 
%of linear forms}

\begin{prop}\label{maillard-deg1}
For a polynomial ${}^t {\bf b}{\bf x}+c,~{\bf b}\in {\mathbb C}^{2n}, c\in {\mathbb C}$, set 
\[
{\rm Exp}_{*_\Gamma}\frac{t}{i\hbar}({}^t {\bf b}{\bf x}+c) :=\sum_{k=0}^{\infty}\frac{1}{k!}
\overbrace{\frac{t}{i\hbar}({}^t {\bf b}{\bf x}+c){*_\Gamma}\cdots{*_\Gamma}\frac{t}{i\hbar}({}^t {\bf b}{\bf x}+c)}^{k-{times}}.
\]
Then we have 
\[
{\rm Exp}_{*_\Gamma}\frac{t}{i\hbar}({}^t {\bf b}{\bf x}+c) =\exp\left(\frac{t^2}{8i\hbar}{}^t{\bf b}(\Gamma + {}^t\Gamma){\bf b}  \right)
\times
\exp\left(\frac{t}{i\hbar}({}^t {\bf b}{\bf x}+c)\right).
\]

\end{prop}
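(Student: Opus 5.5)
The plan is to show that the right-hand side $F_t:=\exp\left(\frac{t^2}{8i\hbar}{}^t{\bf b}(\Gamma+{}^t\Gamma){\bf b}\right)\exp\left(\frac{t}{i\hbar}({}^t{\bf b}{\bf x}+c)\right)$ satisfies the evolution equation
\[
\frac{d}{dt}F_t=\frac{1}{i\hbar}({}^t{\bf b}{\bf x}+c)*_\Gamma F_t,\qquad F_0=1,
\]
and then compare Taylor coefficients in $t$ with the defining series of ${\rm Exp}_{*_\Gamma}$. Equivalently, and more concretely, I would prove by induction on $k$ a closed formula for the $k$-fold power $H^{*k}$, where $H=\frac{1}{i\hbar}({}^t{\bf b}{\bf x}+c)$. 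The key point is that $H$ is linear, so in $H*_\Gamma G$ only the terms $\ell=0,1$ of the star product survive: $\partial_{j_1\ldots j_\ell}H=0$ for $\ell\ge 2$.

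Concretely, for any smooth $G$,
\[
H*_\Gamma G=HG+\frac{i\hbar}{2}\cdot\frac{1}{i\hbar}\,b_j\Gamma^{jk}\partial_kG = HG+\frac12\,({}^t{\bf b}\Gamma)^k\partial_k G .
\]
Apply this to $G=F_t$. Since $\partial_kF_t=\frac{t}{i\hbar}b_kF_t$, we get
\[
H*_\Gamma F_t=\Bigl(H+\frac{t}{2i\hbar}\,{}^t{\bf b}\Gamma{\bf b}\Bigr)F_t .
\]
On the other hand, direct differentiation gives
\[
\frac{d}{dt}F_t=\Bigl(H+\frac{2t}{8i\hbar}\,{}^t{\bf b}(\Gamma+{}^t\Gamma){\bf b}\Bigr)F_t .
\]
Because ${}^t{\bf b}\,{}^t\Gamma{\bf b}={}^t{\bf b}\Gamma{\bf b}$ (a scalar equals its transpose), the second coefficient is $\frac{t}{2i\hbar}{}^t{\bf b}\Gamma{\bf b}$, so the two expressions agree.

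It remains to identify $F_t$ with the power series. Write $F_t=\sum_k \frac{t^k}{k!}P_k$, where each $P_k$ is a polynomial in ${\bf x}$, being the product of two entire series in $t$. Comparing coefficients of $t^k$ in the ODE gives the recursion $P_{k+1}=H*_\Gamma P_k$ with $P_0=1$, hence $P_k=H^{*k}$. The star products involved are finite sums, since each $P_k$ is polynomial. Therefore $F_t=\sum_k\frac{t^k}{k!}H^{*k}={\rm Exp}_{*_\Gamma}\frac{t}{i\hbar}({}^t{\bf b}{\bf x}+c)$. Associativity (the Proposition in §3.1) guarantees that the iterated product $H*\cdots*H$ is unambiguous, so defining it as left multiplication by $H$ is legitimate.

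The only delicate step is justifying the termwise comparison. One must check that the $t$-expansion of $H*_\Gamma F_t$ equals $\sum_k\frac{t^k}{k!}H*_\Gamma P_k$. This holds because $H*_\Gamma(\cdot)$ is the first-order differential operator $H+\frac12{}^t{\bf b}\Gamma\partial$ with constant coefficients, which commutes with the $t$-expansion. The convergence of the series in $t$ then follows from the explicit entire form of $F_t$.
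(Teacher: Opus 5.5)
Your proof is correct and takes essentially the same route as the paper. The paper inserts the ansatz $F=\frac{1}{f(t)}\exp i({}^t{\bf k}(t){\bf x}+\ell(t))$ into the evolution equation $\partial_t F=\frac{1}{i\hbar}({}^t{\bf b}{\bf x}+c)*_\Gamma F$, where only the first-order term of $*_\Gamma$ survives because the left factor is linear, and solves the resulting ODEs; you instead check the closed form against that same equation, and your recursion $P_{k+1}=H*_\Gamma P_k$, $P_0=1$, makes explicit the identification of the ODE solution with the defining power series, which the paper leaves implicit.
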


\begin{pf}
Solving the defining equation  
\[
\begin{array}{lll}
\vspace{2mm}
\displaystyle
\frac{\partial F(t,{\bf x})}{\partial t}
&=&
\displaystyle
\frac{ \partial {\rm Exp}_{*_\Gamma}\frac{t}{i\hbar}({}^t {\bf b}{\bf x}+c)}{\partial t}\\
\vspace{2mm}
\displaystyle
&=&
\displaystyle
\frac{1}{i\hbar}({}^t {\bf b}{\bf x}+c){*_\Gamma}F(t,{\bf x})
\\
\vspace{2mm}
\displaystyle
&=&\displaystyle
\frac{1}{i\hbar}
\Bigr\{
({}^t {\bf b}{\bf x}+c)F(t,{\bf x})
+\frac{i}{2} \sum_{j,k=1}^{2n} b_j \Gamma^{jk} \partial_k F(t,{\bf x})
\Bigr\}. 
\end{array}
\]
with $F(0,{\bf x})=1$ and 
\[
{\rm Exp}_{*_\Gamma}\frac{t}{i\hbar}({}^t {\bf b}{\bf x}+c)
=F(t,{\bf x})=\frac{1}{f(t)}\exp i ({}^t {\bf k}(t){\bf x}+\ell(t))
\] 
shows that 
functions ${\bf k}(t), f(t)$ and $\ell(t)$ satisfy that 
\[
\begin{array}{l}
%\vspace{2mm}
\displaystyle
{\bf k}'(t)=\displaystyle-\frac{\bf b}{\hbar}, 
\qquad
\ell'(t)=\displaystyle-\frac{c}{\hbar}, 
\qquad
\frac{f'(t)}{f(t)}
=\displaystyle i \left( \frac{c}{\hbar}+\ell'(t)
-\frac{i}{2} {}^t{\bf k}(t) {}^t\Gamma {\bf b}\right)
\end{array}
\]
with 
${\bf k}(0)=0$ and $f(0)=1$. 
Thus we have 
\[
%\begin{array}{lll}
%\vspace{2mm}
\displaystyle
{\bf k}(t)=\displaystyle\frac{-1}{\hbar}t{\bf b},
\qquad
\ell(t)=\displaystyle\frac{-ct}{\hbar},
\qquad
f(t)=\displaystyle\exp\left(-\frac{t^2}{8i\hbar}{}^t{\bf b}(\Gamma + {}^t\Gamma){\bf b}  \right). 
%\end{array}
\]
This completes the proof. 
\flushright Q.E.D. 
\end{pf}
According to Baker-Campbell-Hausdorff formula, 
we also have the following formula, 
\begin{prop}\label{product-deg1}
For polynomials 
${}^t {\bf a}{\bf x}+c, {}^t {\bf b}{\bf x}+d,~{\bf a}, {\bf b}\in {\mathbb C}^{2n}, c,d \in {\mathbb C}$, 
we have 
\[
{\rm Exp}_{*_\Gamma}\frac{1}{i\hbar}({}^t {\bf a}{\bf x}+c)
*_{\Gamma} 
{\rm Exp}_{*_\Gamma}\frac{1}{i\hbar}({}^t {\bf b}{\bf x}+d)
=
{\rm Exp}_{*_\Gamma}\frac{1}{i\hbar}\left({}^t ({\bf a}+{\bf b}){\bf x}+\bigr(c+d+\frac{1}{8}[\![{\bf a},{\bf b}]\!] \bigr) \right)
\]
where 
$[\![{\bf a}, {\bf b}]\!]
:={}^t{\bf a}2\Lambda{\bf b}-{}^t{\bf b}2\Lambda{\bf a}$ 
and 
$\Lambda:=%\left(\begin{array}{cc}0 & I_n \\ -I_n & 0\end{array}\right)
\frac{1}{2}(\Gamma-{}^t\Gamma)$. Note that $\Lambda$ is not necessarily equal to $\Lambda_0:=\left(\begin{array}{cc}0 & I_n \\ -I_n & 0\end{array}\right)
$.  
%\bigr(\Lambda^{jk}\bigr)$. 
\end{prop}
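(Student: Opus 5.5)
The plan is to compute both sides explicitly with Proposition~\ref{maillard-deg1} (at $t=1$), reducing the statement to an identity between scalar prefactors. Write $S:=\Gamma+{}^t\Gamma$. Proposition~\ref{maillard-deg1} gives
\[
{\rm Exp}_{*_\Gamma}\frac{1}{i\hbar}({}^t{\bf a}{\bf x}+c)=\exp\Bigl(\frac{1}{8i\hbar}{}^t{\bf a}S{\bf a}\Bigr)\exp\Bigl(\frac{1}{i\hbar}({}^t{\bf a}{\bf x}+c)\Bigr),
\]
and the analogous formula for ${\bf b},d$. So the left-hand side is a constant times $e^{{}^t{\bf u}{\bf x}}*_\Gamma e^{{}^t{\bf v}{\bf x}}$, where ${\bf u}={\bf a}/(i\hbar)$ and ${\bf v}={\bf b}/(i\hbar)$.

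The key computation is the $*_\Gamma$-product of two ordinary exponentials of linear forms. Since $\partial_{j_1\ldots j_\ell}e^{{}^t{\bf u}{\bf x}}=u_{j_1}\cdots u_{j_\ell}e^{{}^t{\bf u}{\bf x}}$, the definition of $C^\Gamma_\ell$ gives
\[
C^\Gamma_\ell(e^{{}^t{\bf u}{\bf x}},e^{{}^t{\bf v}{\bf x}})=({}^t{\bf u}\Gamma{\bf v})^\ell e^{{}^t({\bf u}+{\bf v}){\bf x}}.
\]
Summing the series then yields
\[
e^{{}^t{\bf u}{\bf x}}*_\Gamma e^{{}^t{\bf v}{\bf x}}=\exp\Bigl(\frac{i\hbar}{2}{}^t{\bf u}\Gamma{\bf v}\Bigr)e^{{}^t({\bf u}+{\bf v}){\bf x}}.
\]
With ${\bf u},{\bf v}$ as above, this prefactor equals $\exp\bigl(\frac{1}{2i\hbar}{}^t{\bf a}\Gamma{\bf b}\bigr)$. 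Hence the left-hand side equals
\[
\exp\Bigl(\frac{1}{8i\hbar}\bigl({}^t{\bf a}S{\bf a}+{}^t{\bf b}S{\bf b}+4\,{}^t{\bf a}\Gamma{\bf b}\bigr)\Bigr)\exp\Bigl(\frac{1}{i\hbar}\bigl({}^t({\bf a}+{\bf b}){\bf x}+c+d\bigr)\Bigr).
\]

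Next I would apply Proposition~\ref{maillard-deg1} once more to the right-hand side, with linear part ${\bf a}+{\bf b}$ and constant $c+d+\delta$, where $\delta:=\frac18[\![{\bf a},{\bf b}]\!]$. This produces the prefactor $\exp\bigl(\frac{1}{8i\hbar}{}^t({\bf a}+{\bf b})S({\bf a}+{\bf b})\bigr)$ times $\exp\bigl(\frac{1}{i\hbar}({}^t({\bf a}+{\bf b}){\bf x}+c+d+\delta)\bigr)$. Expanding the quadratic form gives
\[
{}^t({\bf a}+{\bf b})S({\bf a}+{\bf b})={}^t{\bf a}S{\bf a}+{}^t{\bf b}S{\bf b}+2\,{}^t{\bf a}S{\bf b},
\]
and ${}^t{\bf a}S{\bf b}={}^t{\bf a}\Gamma{\bf b}+{}^t{\bf b}\Gamma{\bf a}$ by transposing a scalar.

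Comparing the two sides therefore reduces to the scalar identity
\[
\delta=\frac{1}{2}{}^t{\bf a}\Gamma{\bf b}-\frac14\bigl({}^t{\bf a}\Gamma{\bf b}+{}^t{\bf b}\Gamma{\bf a}\bigr)=\frac14\,{}^t{\bf a}(\Gamma-{}^t\Gamma){\bf b}=\frac12\,{}^t{\bf a}\Lambda{\bf b}.
\]
Since $\Lambda$ is skew, ${}^t{\bf b}\Lambda{\bf a}=-{}^t{\bf a}\Lambda{\bf b}$. Therefore $\frac18[\![{\bf a},{\bf b}]\!]=\frac18\bigl(2\,{}^t{\bf a}\Lambda{\bf b}+2\,{}^t{\bf a}\Lambda{\bf b}\bigr)=\frac12\,{}^t{\bf a}\Lambda{\bf b}$, which is exactly the required $\delta$.

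No step is deep; the main obstacle is bookkeeping of the factors $i\hbar$ and the $\frac18$ normalisations. One should also justify the termwise summation in the $*_\Gamma$-product of exponentials. Formally this is fine, because each $C^\Gamma_\ell$ acts on exponentials by multiplication by a scalar power, and the same series manipulation already underlies Proposition~\ref{maillard-deg1}. Note that no nondegeneracy of $\Lambda$ is used, matching the remark in the statement.
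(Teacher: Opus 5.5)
Your proof is correct. It takes a different route from the paper, which gives no written argument beyond the remark that the formula follows from the Baker--Campbell--Hausdorff formula.

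The paper's implicit argument runs as follows. Put $X=\frac{1}{i\hbar}({}^t{\bf a}{\bf x}+c)$ and $Y=\frac{1}{i\hbar}({}^t{\bf b}{\bf x}+d)$. Their star commutator $X*_\Gamma Y-Y*_\Gamma X=\frac{1}{i\hbar}\,{}^t{\bf a}\Lambda{\bf b}$ is a constant, hence central. BCH therefore truncates to ${\rm Exp}_{*_\Gamma}(X)*_\Gamma{\rm Exp}_{*_\Gamma}(Y)={\rm Exp}_{*_\Gamma}\bigl(X+Y+\frac12[X,Y]_{*_\Gamma}\bigr)$. The shift of the constant term is $\frac{i\hbar}{2}[X,Y]_{*_\Gamma}=\frac12\,{}^t{\bf a}\Lambda{\bf b}=\frac18[\![{\bf a},{\bf b}]\!]$.

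That route is more conceptual: it shows at once why only the skew part $\Lambda$ of $\Gamma$ enters the correction. However, it takes for granted that BCH holds for $*_\Gamma$-exponentials, which the paper does not justify.

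Your argument avoids this entirely. You use only Proposition~\ref{maillard-deg1} and the closed formula $e^{{}^t{\bf u}{\bf x}}*_\Gamma e^{{}^t{\bf v}{\bf x}}=e^{\frac{i\hbar}{2}{}^t{\bf u}\Gamma{\bf v}}e^{{}^t({\bf u}+{\bf v}){\bf x}}$. Its defining series is an honestly convergent scalar exponential, so the proof is fully self-contained. In your version the disappearance of the symmetric part $K$ appears as an explicit cancellation between $4\,{}^t{\bf a}\Gamma{\bf b}$ and $2\,{}^t{\bf a}S{\bf b}$.

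Your intermediate identity for the left-hand side is exactly Proposition~\ref{heisenberg-1}. So you in effect prove that proposition first and deduce the present one from it, which reverses the paper's order of deduction.
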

Combining Proposition \ref{maillard-deg1} with Proposition \ref{product-deg1}, we have product formula of star exponentials: 
\begin{prop}\label{heisenberg-1}
\[
\begin{array}{lll}
\vspace{2mm}
&&
\displaystyle 
%{\rm Exp}_*\frac{1}{i\hbar}({}^t {\bf a}{\bf x}+c)
%*
%{\rm Exp}_*\frac{1}{i\hbar}({}^t {\bf b}{\bf x}+d) \\
\left[
\exp\left(\frac{1}{8i\hbar}{}^t{\bf a}(\Gamma + {}^t\Gamma){\bf a}  \right)
\times
\exp\left(\frac{1}{i\hbar}({}^t {\bf a}{\bf x}+c)\right)
\right]
\\
\vspace{2mm}
&&
\qquad\qquad 
\displaystyle
*_{\Gamma}
\left[\exp\left(\frac{1}{8i\hbar}{}^t{\bf b}(\Gamma + {}^t\Gamma){\bf b}  \right)
\times
\exp\left(\frac{1}{i\hbar}({}^t {\bf b}{\bf x}+d)\right)\right]\\
\vspace{2mm}
&=&
\displaystyle \exp\left(\frac{1}{8i\hbar}{}^t({\bf a+b})(\Gamma + {}^t\Gamma)({\bf a+b})  \right)
\times
\exp\left(\frac{1}{i\hbar}({}^t ({\bf a+b}){\bf x}+
(c+d+\frac{1}{8}[\![{\bf a}, {\bf b}]\!])\right).
\end{array}
\]
\end{prop}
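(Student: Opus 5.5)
The most direct route is to read the statement as Proposition~\ref{product-deg1} rewritten through Proposition~\ref{maillard-deg1}. Put $t=1$ in Proposition~\ref{maillard-deg1}. The two bracketed factors on the left-hand side are then exactly ${\rm Exp}_{*_\Gamma}\frac{1}{i\hbar}({}^t{\bf a}{\bf x}+c)$ and ${\rm Exp}_{*_\Gamma}\frac{1}{i\hbar}({}^t{\bf b}{\bf x}+d)$. Proposition~\ref{product-deg1} turns their $*_\Gamma$-product into ${\rm Exp}_{*_\Gamma}\frac{1}{i\hbar}\bigl({}^t({\bf a}+{\bf b}){\bf x}+c+d+\frac18[\![{\bf a},{\bf b}]\!]\bigr)$. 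A second application of Proposition~\ref{maillard-deg1}, now with ${\bf b}$ replaced by ${\bf a}+{\bf b}$ and $c$ by $c+d+\frac18[\![{\bf a},{\bf b}]\!]$, gives the stated right-hand side.

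Proposition~\ref{product-deg1} is only justified by an appeal to Baker--Campbell--Hausdorff, so I would also give a self-contained check by computing the product directly from the definition of $*_\Gamma$. The scalar prefactors are constants and pull out. For $f=e^{\frac{1}{i\hbar}{}^t{\bf a}{\bf x}}$ and $g=e^{\frac{1}{i\hbar}{}^t{\bf b}{\bf x}}$ we have
\[
\partial_{j_1\ldots j_\ell}f=(i\hbar)^{-\ell}a_{j_1}\cdots a_{j_\ell}f .
\]
Hence $C^\Gamma_\ell(f,g)=(i\hbar)^{-2\ell}({}^t{\bf a}\Gamma{\bf b})^\ell fg$, and summing the series yields
\[
f*_\Gamma g=\exp\Bigl(\tfrac{i\hbar}{2}\cdot\tfrac{1}{(i\hbar)^2}\,{}^t{\bf a}\Gamma{\bf b}\Bigr)fg=\exp\Bigl(\tfrac{1}{2i\hbar}\,{}^t{\bf a}\Gamma{\bf b}\Bigr)\,e^{\frac{1}{i\hbar}{}^t({\bf a}+{\bf b}){\bf x}} .
\]
The factors $e^{c/i\hbar}$ and $e^{d/i\hbar}$ simply multiply.

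What remains is to match the exponents. With $S:=\Gamma+{}^t\Gamma$ (symmetric), we have
\[
{}^t({\bf a}+{\bf b})S({\bf a}+{\bf b})={}^t{\bf a}S{\bf a}+{}^t{\bf b}S{\bf b}+2\,{}^t{\bf a}S{\bf b}.
\]
Since $\Lambda$ is skew, $[\![{\bf a},{\bf b}]\!]=4\,{}^t{\bf a}\Lambda{\bf b}=2\,{}^t{\bf a}(\Gamma-{}^t\Gamma){\bf b}$. Therefore
\[
\frac{1}{8i\hbar}\cdot 2\,{}^t{\bf a}(\Gamma+{}^t\Gamma){\bf b}+\frac{1}{8i\hbar}\cdot 2\,{}^t{\bf a}(\Gamma-{}^t\Gamma){\bf b}=\frac{1}{2i\hbar}\,{}^t{\bf a}\Gamma{\bf b},
\]
which is exactly the exponent produced by the direct computation. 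The diagonal terms ${}^t{\bf a}S{\bf a}$ and ${}^t{\bf b}S{\bf b}$ match the prefactors on the left-hand side.

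The only real obstacle is bookkeeping. One must track the factor $\frac{1}{i\hbar}$ multiplying $\frac18[\![{\bf a},{\bf b}]\!]$ on the right-hand side, and one must use the skew-symmetry of $\Lambda$ (not $\Lambda=\Lambda_0$) to turn $[\![\cdot,\cdot]\!]$ into a single bilinear term. Convergence is not an issue: each $*_\Gamma$-series here is a convergent exponential series, because the functions involved are exponentials of linear forms.
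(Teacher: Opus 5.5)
Your proposal is correct, and your first paragraph is exactly the paper's argument: the statement is Proposition~\ref{product-deg1} with both sides rewritten via Proposition~\ref{maillard-deg1} at $t=1$. Your direct check is also correct: it combines $e^{\frac{1}{i\hbar}{}^t{\bf a}{\bf x}}*_\Gamma e^{\frac{1}{i\hbar}{}^t{\bf b}{\bf x}}=e^{\frac{1}{2i\hbar}{}^t{\bf a}\Gamma{\bf b}}\,e^{\frac{1}{i\hbar}{}^t({\bf a}+{\bf b}){\bf x}}$ with $[\![{\bf a},{\bf b}]\!]=2\,{}^t{\bf a}(\Gamma-{}^t\Gamma){\bf b}$, and it has the added merit of not depending on the Baker--Campbell--Hausdorff step that the paper invokes without proof for Proposition~\ref{product-deg1}.
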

Using an equivalence operator, we have 
\begin{prop}
\[
T^K\left[
\exp\left(\frac{1}{8i\hbar}{}^t{\bf a}(\Gamma + {}^t\Gamma){\bf a}  \right)
\times
\exp\left(\frac{1}{i\hbar}({}^t {\bf a}{\bf x}+c)\right)
\right]
\]
\[
%\qquad\qquad
=\left[
\exp\left(\frac{1}{8i\hbar}{}^t{\bf a}(\Gamma + {}^t\Gamma+2K){\bf a}  \right)
\times
\exp\left(\frac{1}{i\hbar}({}^t {\bf a}{\bf x}+c)\right)
\right].
\]
\end{prop}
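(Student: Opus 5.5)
The plan is to note that $\exp\bigl(\frac{1}{i\hbar}({}^t{\bf a}{\bf x}+c)\bigr)$ is a joint eigenfunction of all the constant-coefficient operators $\partial_j$. The operator $T^K=e^{\frac{i\hbar}{4}K^{ij}\partial_i\partial_j}$ therefore acts on it by multiplication by a scalar. The prefactor $\exp\bigl(\frac{1}{8i\hbar}{}^t{\bf a}(\Gamma+{}^t\Gamma){\bf a}\bigr)$ does not depend on ${\bf x}$, so by linearity of $T^K$ it passes through unchanged. Hence everything reduces to computing $T^K$ on the plane wave $E_{\bf a}({\bf x}):=\exp\bigl(\frac{1}{i\hbar}({}^t{\bf a}{\bf x}+c)\bigr)$.

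First I will compute $\partial_j E_{\bf a}=\frac{a_j}{i\hbar}E_{\bf a}$. This gives
\[
\frac{i\hbar}{4}K^{ij}\partial_i\partial_j E_{\bf a}=\frac{i\hbar}{4}\cdot\frac{1}{(i\hbar)^2}\,{}^t{\bf a}K{\bf a}\,E_{\bf a}=\frac{1}{4i\hbar}\,{}^t{\bf a}K{\bf a}\,E_{\bf a}.
\]
Iterating, the $k$-th power acts by $\bigl(\frac{1}{4i\hbar}{}^t{\bf a}K{\bf a}\bigr)^k$. Summing the series $\sum_k\frac{1}{k!}(\cdot)^k$ termwise then gives
\[
T^K E_{\bf a}=\exp\Bigl(\frac{1}{4i\hbar}{}^t{\bf a}K{\bf a}\Bigr)E_{\bf a}=\exp\Bigl(\frac{1}{8i\hbar}{}^t{\bf a}(2K){\bf a}\Bigr)E_{\bf a}.
\]
The termwise summation is legitimate because the series is a scalar exponential series times a fixed function, so there is no convergence issue beyond that of $\exp$. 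Multiplying by the constant prefactor and combining the exponents yields $\exp\bigl(\frac{1}{8i\hbar}{}^t{\bf a}(\Gamma+{}^t\Gamma+2K){\bf a}\bigr)E_{\bf a}$, which is the claim.

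The only point requiring a remark is the symmetry of $K$. In the quadratic form ${}^t{\bf a}K{\bf a}$ only the symmetric part of $K$ contributes, so the identity holds as written even without assuming ${}^tK=K$. In the setting of Theorem~\ref{intertwiner-thorem}, where $K$ is symmetric, this is automatic. I would also remark on the consistency check: by Proposition~\ref{maillard-deg1}, the right-hand side is precisely ${\rm Exp}_{*_{\Gamma+K}}\frac{1}{i\hbar}({}^t{\bf a}{\bf x}+c)$, since $(\Gamma+K)+{}^t(\Gamma+K)=\Gamma+{}^t\Gamma+2K$. This agrees with Theorem~\ref{intertwiner-thorem}, which says $T^K$ intertwines $*_\Gamma$ with $*_{\Gamma+K}$ and hence maps $*_\Gamma$-exponentials to $*_{\Gamma+K}$-exponentials. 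I do not expect any genuine obstacle; the main care needed is bookkeeping of the factors $i\hbar$ and $\frac14$ versus $\frac18$.
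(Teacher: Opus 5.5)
Your proof is correct: $E_{\bf a}$ is an eigenfunction of each $\partial_j$, so $T^K$ acts on it by the scalar $\exp\bigl(\frac{1}{4i\hbar}{}^t{\bf a}K{\bf a}\bigr)=\exp\bigl(\frac{1}{8i\hbar}{}^t{\bf a}(2K){\bf a}\bigr)$, and your tracking of the factors $i\hbar$, $\frac14$ and $\frac18$ is right. The paper gives no written argument beyond the phrase ``using an equivalence operator''. Your computation, together with your check that the right-hand side equals ${\rm Exp}_{*_{\Gamma+K}}\frac{1}{i\hbar}({}^t{\bf a}{\bf x}+c)$ (consistent with Theorem~\ref{intertwiner-thorem} and Proposition~\ref{maillard-deg1}), is exactly what that phrase amounts to.
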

\medskip

%\subsection{Adjoint actions of 
%general star exponentials}
Next we compute adjoint actions of 
augmented star exponentials of 
polynomials of order 2 to star exponentials 
of polynomials of order 1. 
\begin{prop}Let $A$ is 
a $N\times N$-symmetiric matrix with 
complex entries. 
Then we have
\[
\begin{array}{lll}
%\vspace{2mm}
\displaystyle
{\rm Ad}_{*_\Gamma}\left(
{\rm Exp}_{*_\Gamma}\left(\frac{{}^t{\bf x}A{\bf x}}{i\hbar}\right)\right)
{\rm Exp}_{*_\Gamma}\left(\frac{1}{i\hbar}{}^t {\bf b}{\bf x}\right)
%*{\rm Exp}_*\left({}^t{\bf x}A{\bf x}}{i\hbar}\right)^{-1}
&=&
\displaystyle
{\rm Exp}_{*_\Gamma}\left( \frac{2}{i\hbar} {}^t{\bf x}A\Lambda{\bf b} \right).
\end{array}
\]
\end{prop}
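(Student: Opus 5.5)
The plan is to reduce the statement to a computation of commutators of quadratic and linear forms with respect to $*_\Gamma$. Associativity of $*_\Gamma$ (Proposition in \S3.1) is the only structural input. Set $Q:={}^t{\bf x}A{\bf x}$ and $\ell_{\bf b}:={}^t{\bf b}{\bf x}$. The guiding principle is that conjugation by a star exponential is an algebra automorphism of $*_\Gamma$, so it commutes with forming ${\rm Exp}_{*_\Gamma}$:
\[
{\rm Ad}_{*_\Gamma}\bigl({\rm Exp}_{*_\Gamma}(Q/i\hbar)\bigr)\,{\rm Exp}_{*_\Gamma}(\ell_{\bf b}/i\hbar)
={\rm Exp}_{*_\Gamma}\Bigl(\tfrac{1}{i\hbar}\,{\rm Ad}_{*_\Gamma}\bigl({\rm Exp}_{*_\Gamma}(Q/i\hbar)\bigr)\ell_{\bf b}\Bigr).
\]
Termwise, each $k$-fold product $\ell_{\bf b}*\cdots*\ell_{\bf b}$ is conjugated factor by factor, and the inner inverse factors cancel by associativity. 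So everything is reduced to computing the adjoint action on the single linear form $\ell_{\bf b}$.

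The key step is the one-commutator computation. $Q$ is quadratic and $\ell_{\bf b}$ is linear, so in $Q*_\Gamma\ell_{\bf b}$ only the $\ell=0,1$ terms of the expansion of $*_\Gamma$ survive. The symmetric part $K$ cancels in the commutator, and one gets
\[
[Q,\ell_{\bf b}]_{*_\Gamma}=\tfrac{i\hbar}{2}\bigl(C_1^\Gamma(Q,\ell_{\bf b})-C_1^\Gamma(\ell_{\bf b},Q)\bigr)
=\tfrac{i\hbar}{2}\,{}^t(2A{\bf x})(\Gamma-{}^t\Gamma){\bf b}=2i\hbar\,{}^t{\bf x}A\Lambda{\bf b}.
\]
Here the symmetry of $A$ is used to write $\partial Q=2A{\bf x}$. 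Hence $\frac{1}{i\hbar}{\rm ad}_{*_\Gamma}(Q)\ell_{\bf b}=2\,{}^t{\bf x}A\Lambda{\bf b}$, which is precisely the linear form appearing in the exponent on the right-hand side of the statement.

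Then I would identify ${\rm Ad}_{*_\Gamma}({\rm Exp}_{*_\Gamma}(Q/i\hbar))$ with $e^{\frac{1}{i\hbar}{\rm ad}_{*_\Gamma}Q}$ on linear forms. To do this, introduce a parameter $s$ and put $\phi(s)={\rm Exp}_{*_\Gamma}(sQ/i\hbar)*\ell_{\bf b}*{\rm Exp}_{*_\Gamma}(-sQ/i\hbar)$. Differentiating gives $\phi'=\frac{1}{i\hbar}[Q,\phi]$, and the space of linear forms is preserved by this flow. Substituting this into the displayed reduction, the exponent of the resulting ${\rm Exp}_{*_\Gamma}$ is the linear form generated by ${}^t{\bf x}\,2A\Lambda\,{\bf b}$, as claimed.

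The main obstacle I expect is this last identification. Iterating the adjoint action reproduces the vector $2A\Lambda{\bf b}$ at first order, but the higher iterates $(2A\Lambda)^k{\bf b}$ must be accounted for to match the formula exactly as written. I would first check whether the intended reading of the statement is at the level of the generator $\frac{1}{i\hbar}{\rm ad}Q$. If the full exponentiated statement is meant, I would verify that the higher commutator terms combine as required, using Proposition~\ref{product-deg1} to absorb scalar phases into $c$. A further point of care is that ${\rm Exp}_{*_\Gamma}(Q/i\hbar)$ exists and is invertible only formally, or for suitable $A$. For this I would argue at the level of formal power series in $s$, where the inverse is ${\rm Exp}_{*_\Gamma}(-sQ/i\hbar)$ by uniqueness of the solution of the defining ODE.
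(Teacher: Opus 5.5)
There is a genuine gap, and it sits exactly at the point you flagged as the main obstacle. The final identification cannot be completed, because the statement as written is false, and your own reduction shows this. Write $\phi(s)={}^t\beta(s){\bf x}$. Your commutator computation is correct: the $C_2$ terms vanish and $K$ cancels. It gives $\phi'=\frac{1}{i\hbar}[Q,\phi]_{*_\Gamma}=2\,{}^t{\bf x}A\Lambda\beta$, with no constant term. Hence $\beta'=2A\Lambda\beta$ and $\beta(1)=e^{2A\Lambda}{\bf b}$, so
\[
{\rm Ad}_{*_\Gamma}\bigl({\rm Exp}_{*_\Gamma}(Q/i\hbar)\bigr)\,{\rm Exp}_{*_\Gamma}\bigl(\tfrac{1}{i\hbar}{}^t{\bf b}{\bf x}\bigr)
={\rm Exp}_{*_\Gamma}\bigl(\tfrac{1}{i\hbar}{}^t{\bf x}\,e^{2A\Lambda}{\bf b}\bigr).
\]
The stated right-hand side keeps only the $k=1$ term of $\sum_k (2A\Lambda)^k{\bf b}/k!$, and drops even the $k=0$ term. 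Take $A=0$ and ${\bf b}\neq 0$: the left side is ${\rm Exp}_{*_\Gamma}(\frac{1}{i\hbar}{}^t{\bf b}{\bf x})$, while the right side is ${\rm Exp}_{*_\Gamma}(0)=1$. So your sentence saying the exponent is ``the linear form generated by ${}^t{\bf x}\,2A\Lambda{\bf b}$, as claimed'' is wrong. Your fallback of absorbing discrepancies into $c$ via Proposition~\ref{product-deg1} cannot help either, since the mismatch lies in the linear part, not in a scalar phase.

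The paper's proof follows the same route as your first two steps but makes precisely this conflation. It writes ${\rm Ad}_{*_\Gamma}({\rm Exp}_{*_\Gamma}X)\,{\rm Exp}_{*_\Gamma}Y={\rm Exp}_{*_\Gamma}\bigl({\rm ad}_{*_\Gamma}(X)Y\bigr)$ in place of ${\rm Exp}_{*_\Gamma}\bigl(e^{{\rm ad}_{*_\Gamma}X}Y\bigr)$, and then evaluates the single commutator exactly as you did. What that computation actually establishes is the infinitesimal identity ${\rm ad}_{*_\Gamma}\bigl(\frac{Q}{i\hbar}\bigr)\bigl(\frac{1}{i\hbar}{}^t{\bf b}{\bf x}\bigr)=\frac{2}{i\hbar}{}^t{\bf x}A\Lambda{\bf b}$. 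The correct group-level statement is the displayed formula above. It is also consistent with the paper's later result that ${\rm Exp}_{*_\Gamma}(tQ/i\hbar)$ corresponds to $e^{2A\Lambda t}$ under the twisted Cayley transform. Rather than trying to make the higher commutators ``combine as required'', you should resolve your obstacle by proving one of these two corrected statements.
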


\begin{pf}
By a direct computation, we see 
\[
\begin{array}{lll}
\vspace{2mm}
&&\displaystyle
{\rm Ad}_{*_\Gamma}\left(
{\rm Exp}_{*_\Gamma}\left(\frac{{}^t{\bf x}A{\bf x}}{i\hbar}\right)\right)
\left({\rm Exp}_{*_\Gamma}\left(\frac{1}{i\hbar}{}^t {\bf b}{\bf x}\right)\right)
%\\
%\vspace{2mm}
%&=&\displaystyle
%{\rm Exp}_*\left( \frac{2}{i\hbar} {}^t{\bf x}A\Lambda{\bf b} \right)
\\
\vspace{2mm}
\displaystyle
&=&
\displaystyle
{\rm Exp}_{*_\Gamma}\left\{
\left( {\rm ad_{*_\Gamma}}\left(\frac{{}^t{\bf x}A{\bf x}}{i\hbar}\right)  \right) 
\left(\frac{1}{i\hbar}{}^t {\bf b}{\bf x}\right)
\right\}
\\
\vspace{2mm}
\displaystyle
&=&
\displaystyle
{\rm Exp}_{*_\Gamma}\left\{
\left(\frac{{}^t{\bf x}A{\bf x}}{i\hbar}\right) 
{*_\Gamma}\left(\frac{1}{i\hbar}{}^t {\bf b}{\bf x}\right)
- 
\left(\frac{1}{i\hbar}{}^t {\bf b}{\bf x}\right)
{*_\Gamma}\left(\frac{{}^t{\bf x}A{\bf x}}{i\hbar}\right)
\right\}
\\
%\end{array}
%\]
%\[
%\begin{array}{lll}
\vspace{2mm}
&=&\displaystyle
{\rm Exp}_{*_\Gamma}\frac{1}{i\hbar}\left({}^t{\bf x}A\Gamma{\bf b}-{}^t{\bf x}A{}^t\Gamma{\bf b}\right)
%\\
%\vspace{2mm}
=\displaystyle
{\rm Exp}_{*_\Gamma}\frac{2}{i\hbar}\left({}^t{\bf x}A\Lambda{\bf b}\right). 
%\qquad\qquad\qquad\qquad\qquad\qquad\Box
\end{array}
\]
\flushright Q.E.D. %$\Box$

\end{pf}

\bigskip
\section{Augmented star exponentials of quadratic forms}
In this section we are concerned with augmented star exponentials of quadratic forms. 

\medskip\noindent
{\bf Note that in this section, 
we assume that the original matrix $\Lambda$ satisfies} 
\[
-\Lambda={}^t\Lambda=\Lambda^{-1}.
\]
 
%\subsection{Twisted Cayley transformations and %augmented star exponentials of quadratic forms}
According to the defining equation of 
${\rm Exp}_{*_\Gamma}\left(
\frac{t}{i\hbar}Q\right)$ 
\begin{equation}\label{heisenberg-equation}
i\hbar\frac{\partial}{\partial t}
{\rm Exp}_{*_\Gamma}\left(\frac{t}{i\hbar}Q\right)
={Q}*_{\Gamma}{\rm Exp}_{*_\Gamma}\left(\frac{t}{i\hbar}Q\right)
\end{equation}
with a quadratic form $Q={}^txAx$ and  
\[
\displaystyle
{\rm Exp}_{*_\Gamma}\left(\frac{t}{i\hbar}Q\right)
=\frac{1}{f(t)}\exp(i[{}^txg(t)x]), %\qquad\quad(*)
\]
we have 
\[
g'(t)=-\frac{A}{\hbar} + A\Gamma g(t) 
+ {}^tg(t)~ {}^t\Gamma A - \hbar~ {}^t g(t) ~{}^t \Gamma A \Gamma g(t) 
\]
and 
\[
\frac{f'(t)}{f(t)}=\frac{\hbar}{2}tr \bigr(\tilde{A} g(t) \bigr), 
\]
where $\tilde{A}:={}^t\Gamma A \Gamma.$

Since $g=g(t)$ satisfies 
\[
g'=-\frac{A}{\hbar} + A\Gamma g 
+ {}^tg {}^t\Gamma A - \hbar {}^t g {}^t \Gamma A \Gamma g, 
\]
and $A$ is a symmetric matrix, 
taking the transpose, 
we 
have 
\[
\begin{array}{lll}
\vspace{2mm}
{}^tg'
&=&
\displaystyle
-\frac{{}^tA}{\hbar} + {}^tg {}^t\Gamma {}^tA 
+ {}^tA{}^{tt}\Gamma {}^{tt}g - \hbar {}^t g {}^t \Gamma {}^tA {}^{tt}\Gamma {}^{tt}g
\\
\vspace{2mm}
&=&
\displaystyle
-\frac{A}{\hbar} + A\Gamma g 
+ {}^tg {}^t\Gamma A - \hbar {}^t g {}^t \Gamma A \Gamma g. 
\end{array}
\]
Thus, if $g(0)={}^tg(0)$, we have $g(t)={}^tg(t)~(\forall t)$. 

We rewrite the first equation 
in the following way. 
%using $q$ instead of $g$. 
\begin{equation}
\begin{array}{lcl}
\vspace{2mm}
&&
\displaystyle
g'=-\frac{A}{\hbar} + A\Gamma g 
+ {}^tg {}^t\Gamma A - \hbar {}^t g {}^t \Gamma A \Gamma g
\\
\vspace{2mm}
&\Leftrightarrow&
\displaystyle
-\hbar g'(t)
=A + A\Gamma (-\hbar g) 
+ {}^t(-\hbar g) {}^t\Gamma A + {}^t (-\hbar g) {}^t \Gamma A \Gamma (-\hbar g)
\\
\vspace{2mm}
&&({\rm Setting~}q= - \hbar g) 
%\\
%\vspace{2mm}
%&\Leftrightarrow&
%q'=A+A\Gamma q + {}^t q{}^t\Gamma A + {}^tq
%{}^t\Gamma A \Gamma q
%=(1+{}^tq{}^t\Gamma) A (1+\Gamma g)
\\
\vspace{2mm}
&\Leftrightarrow&
q'=A+A(K+\Lambda) q + {}^t q (K-\Lambda)A + {}^tq (K-\Lambda) A (K+\Lambda) q .
\end{array} 
\end{equation}
Note if $A$ is symmetric then $q$ is also symmetric.  
\medskip

Next we introduce twisted Cayley transformation. 
\begin{defn}
\[
\begin{array}{lcl}
\vspace{2mm}
C^{K}(q)
&=&
\displaystyle
\Bigr( I-(I-\Lambda K)(-q\Lambda) \Bigr) \frac{1}{\Bigr(I+(I+\Lambda K)(-q\Lambda)\Bigr)}
\\
\vspace{2mm}
\displaystyle
&\stackrel{(*)}{=}&
\displaystyle
\frac{1}{\Bigr(I+(-q\Lambda)(I+\Lambda K)\Bigr)} \Bigr(I-(-q\Lambda)(I-\Lambda K)\Bigr).
\end{array}
\]
\end{defn}
Note that %We show $(*)$. 
\[
\begin{array}{lcl}
\vspace{2mm}
\displaystyle
&&
\{I+(-q\Lambda)(I+\Lambda K)\}\{I-(I-\Lambda K)(-q\Lambda)\}
\\
\vspace{2mm}
\displaystyle
&=&
I+(I-\Lambda K)q\Lambda-q\Lambda(I+\Lambda K)-q\Lambda(I+\Lambda K)(I-\Lambda K)(q\Lambda)
\\
\vspace{2mm}
\displaystyle
&=&
I-(\Lambda K)(q\Lambda)+(q\Lambda)(\Lambda K)-(q\Lambda)(I+\Lambda K)(I-\Lambda K)(q\Lambda).
\end{array}
\]
On the other hand, 
\[
\begin{array}{lcl}
\vspace{2mm}
\displaystyle
&&
\{I-(-q\Lambda)(I-\Lambda K)\}\{I+(I+\Lambda K)(-q\Lambda)\}
\\
\vspace{2mm}
\displaystyle
&=&
I-(\Lambda K)(q\Lambda)+(q\Lambda)(\Lambda K)-(q\Lambda)(I-\Lambda K)(I+\Lambda K)(q\Lambda).
\end{array}
\]
These prove $(*)$. 

\bigskip

According to  the twisted Cayley transformation, we have 
\begin{thm}
\[
C^K(q(t))'=2A\Lambda\cdot C^K(q(t)).
\]
Thus the solution of the equation above with the initial condition $C^K(q(0))=C_0$ is 
\[
C^K(q(t))=\exp (2A\Lambda t)C_0.
\]
If $q(0)=I$, $C^K(I)=I$, and then we have  
\[
C^K(q(t))=\exp(2A\Lambda t).
\]
\end{thm}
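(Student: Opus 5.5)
The plan is to differentiate the twisted Cayley transform $C^K(q(t))$ directly and use the Riccati equation for $q$ together with the two equivalent expressions $(*)$ for $C^K$. Write $X:=-q\Lambda$ and set
\[
P:=I+(I+\Lambda K)X,\quad N:=I-(I-\Lambda K)X,\quad \tilde P:=I+X(I+\Lambda K),\quad \tilde N:=I-X(I-\Lambda K).
\]
With this notation $C^K(q)=NP^{-1}=\tilde P^{-1}\tilde N$, where the second equality is $(*)$. Since $X'=-q'\Lambda$, we have $N'=(I-\Lambda K)q'\Lambda$ and $P'=-(I+\Lambda K)q'\Lambda$. Differentiating $NP^{-1}$ then gives
\[
C' = (N' - C P')P^{-1} = \bigl[(I-\Lambda K)+C(I+\Lambda K)\bigr]\, q'\Lambda\, P^{-1}.
\]

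The first key step is to simplify the bracket using the form $C=\tilde P^{-1}\tilde N$. We have
\[
\tilde P(I-\Lambda K)+\tilde N(I+\Lambda K)=2I+X\bigl[(I+\Lambda K)(I-\Lambda K)-(I-\Lambda K)(I+\Lambda K)\bigr]=2I,
\]
because $\Lambda K$ commutes with itself. Hence the bracket equals $2\tilde P^{-1}$, and
\[
C'=2\tilde P^{-1}\,q'\Lambda\,P^{-1}.
\]

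The second key step, and the main computational point, is to show that $q'\Lambda=\tilde P\,A\Lambda\,N$. Using $\Lambda^2=-I$ (which follows from the standing assumption $-\Lambda={}^t\Lambda=\Lambda^{-1}$), we get
\[
\tilde P=I-q\Lambda-q\Lambda\Lambda K=I+q(K-\Lambda).
\]
Similarly $N=I+(I-\Lambda K)q\Lambda$, so that
\[
A\Lambda N=A\Lambda+A(\Lambda-\Lambda^2K)q\Lambda=A\bigl(I+(K+\Lambda)q\bigr)\Lambda.
\]
On the other hand, since $q$ is symmetric (noted after the Riccati equation, given symmetric initial data), the rewritten Riccati equation factorizes as
\[
q'=\bigl(I+q(K-\Lambda)\bigr)A\bigl(I+(K+\Lambda)q\bigr).
\]
Comparing the three displays gives exactly $q'\Lambda=\tilde P A\Lambda N$. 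Substituting into the formula for $C'$, we obtain
\[
C'=2\tilde P^{-1}\tilde P\,A\Lambda\,NP^{-1}=2A\Lambda\,C^K(q).
\]
The hard part of the whole argument is the bookkeeping here: keeping left and right factors in the correct order, and choosing the left form $\tilde P^{-1}\tilde N$ of $C^K$ when simplifying the bracket. I would first check this on the scalar case $n=1$ with $K=0$.

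The linear ODE $C'=2A\Lambda C$ with constant coefficient matrix then has the unique solution $C^K(q(t))=\exp(2A\Lambda t)\,C_0$. For the final assertion, one substitutes $q=I$ into the definition of $C^K$ and checks that $C^K(I)=I$. This check uses $\Lambda^2=-I$ and the specific initial data. I would verify it directly, and I would flag that it may require a compatibility condition on $K$ (for example $K=0$, where $C^0(I)=(I+\Lambda)(I-\Lambda)^{-1}$). If the identity fails in general, the statement should be read with $C_0=C^K(q(0))$ kept explicitly.
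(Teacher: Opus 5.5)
Your derivation of $C^K(q(t))'=2A\Lambda\,C^K(q(t))$ is correct and follows the paper's route. You differentiate $NP^{-1}$ and switch to the left form $(*)$ to collapse the bracket to $2\tilde P^{-1}$. This gives the paper's intermediate expression $\tilde P^{-1}\,2q'\Lambda\,P^{-1}$, into which you insert the Riccati equation. The paper only asserts the final equality. Your factorization $q'\Lambda=\tilde P\,A\Lambda\,N$, which uses $\Lambda^2=-I$ and ${}^tq=q$, is exactly the justification it omits.

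The last assertion should not be left hedged, and no condition on $K$ rescues it. At $q=I$ you get $N=I+\Lambda-\Lambda K\Lambda$ and $P=I-\Lambda-\Lambda K\Lambda$. So $C^K(I)=NP^{-1}=I$ would force $N=P$, i.e. $2\Lambda=0$. That is impossible, since $\Lambda^{-1}=-\Lambda$. Your own example confirms this: $(I-\Lambda)^{-1}=\tfrac12(I+\Lambda)$, so $C^0(I)=\tfrac12(I+\Lambda)^2=\Lambda\neq I$.

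The hypothesis $q(0)=I$ in the statement is a slip for $q(0)=0$. The star exponential equals $1$ at $t=0$, which forces $g(0)=0$ and hence $q(0)=-\hbar g(0)=0$. Then $C^K(0)=I\cdot I^{-1}=I$ trivially, and your ODE argument gives $C^K(q(t))=\exp(2A\Lambda t)$. This is also what the paper actually uses in the next theorem: its inverse formula gives $(C^K)^{-1}(I)=0$, not $I$. The initial value $q(0)=0$ is symmetric, so the symmetry of $q(t)$ that your factorization relies on is guaranteed as well.
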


\begin{pf}
We put $q=q(t)$. 
\[
\begin{array}{lcl}
\vspace{2mm}
&&\displaystyle
\frac{d}{dt}(C^{K}(q(t)))
\\
\vspace{2mm}
&=&
\displaystyle
\left\{ 
\Bigr( I-(I-\Lambda K)(-q\Lambda ) \Bigr) \frac{1}{\Bigr(I+(I+\Lambda K)(-q\Lambda )\Bigr)}
\right\}'
\\
\vspace{2mm}
&&({\rm Using~}(A^{-1})'=-A^{-1}A'A^{-1})
\\
\vspace{2mm}
&=&
\displaystyle
(I-\Lambda K)(q'\Lambda ) { \frac{I}{I-(I+\Lambda K)(q\Lambda )} } 
\\
\vspace{2mm}
&&
\displaystyle
\qquad
+ (I+(I-\Lambda K)(q\Lambda ))
\frac{I}{I-(I+\Lambda K)(q\Lambda )}
\bigr\{ (-1)\bigr(-(I+\Lambda K)(q'\Lambda )\bigr) \bigr\}
{ \frac{I}{I-(I+\Lambda K)(q\Lambda )} }
%\\
%\vspace{2mm}
%&=&
%\displaystyle
%\Bigr\{
%\underline{ {\bf +}(I-\Lambda K)(q'\Lambda )} 
%
%+\uuline{(I+(I-\Lambda K)(q\Lambda ))
%\frac{I}{I-(I+\Lambda K)(q\Lambda )} } 
%({\bf +}(I+\Lambda K)(q'\Lambda ))
%\Bigr\}
%\boxed{ \frac{I}{(I-(I+\Lambda K)(q\Lambda )} } 
\\
\vspace{2mm}
&=&
\displaystyle
\Bigr\{
{ { \frac{I}{I-q\Lambda (I+\Lambda K)} } 
(I-q\Lambda (I+\Lambda K))(I-\Lambda K)q'\Lambda } 
\\
\vspace{2mm}
&&
\displaystyle
\qquad\qquad\qquad 
+
{ { \frac{I}{(I-(q\Lambda )(I+\Lambda K)} } 
(I+(q\Lambda )(I-\Lambda K))} 
(+(I+\Lambda K)(q'\Lambda ))
\Bigr\}
{ \frac{I}{(I-(I+\Lambda K)q\Lambda } } 
\end{array}
\]
\[
\begin{array}{lcl}
%\vspace{2mm}
%&=&
%\displaystyle
%\boxed{ \frac{I}{I-q\Lambda (I+\Lambda K)} }
%\Bigr\{
%(I-q\Lambda (I+\Lambda K)) 
%(I-\Lambda K)q'\Lambda  
%\\
%\vspace{2mm}
%&&
%\displaystyle
%\qquad\qquad\qquad 
%+
%(I+(q\Lambda )(I-\Lambda K))
%(+(I+\Lambda K)(q'\Lambda ))
%\Bigr\}
%{ \frac{I}{(I-(I+\Lambda K)q\Lambda } }
%\\
%\vspace{2mm}
%&=&
%\displaystyle
%{ \frac{I}{I-q\Lambda (I+\Lambda K)} }
%\Bigr\{
%(I-\Lambda K)-q\Lambda (I+\Lambda K)(I-\Lambda K)
%\\
%\vspace{2mm}
%&&
%\displaystyle
%\qquad\qquad\qquad 
%+
%(I+\Lambda K)+(q\Lambda )(I-\Lambda K)(I+\Lambda K)\Bigr\}(q'\Lambda )
%{ \frac{I}{(I-(I+\Lambda K)q\Lambda } }
%\\
\vspace{3mm}
&=&
\displaystyle
{ \frac{I}{I-q\Lambda (I+\Lambda K)} }
2(q'\Lambda )
{ \frac{I}{(I-(I+\Lambda K)q\Lambda } }
\\
\vspace{3mm}
&&
\displaystyle
(\mbox{Since~}q'=A+A(K+\Lambda ) q + {}^t q (K-\Lambda )A + {}^tq (K-\Lambda ) A (K+\Lambda ) q\mbox{ and }~{}^tq=q,   )
\\
\vspace{2mm}
&=&
\displaystyle
{ \frac{I}{I-q\Lambda (I+\Lambda K)} }
2
{ 
\Bigr\{
A+A(K+\Lambda ) q +  q (K-\Lambda )A + q (K-\Lambda ) A (K+\Lambda ) q
\Bigr\}
\Lambda 
}
{ \frac{I}{(I-(I+\Lambda K)q\Lambda } }
%\\
%\vspace{2mm}
%&=&
%\displaystyle
%{ \frac{I}{I-q\Lambda (I+\Lambda K)} }
%2
%{
%\{I-q\Lambda (I+\Lambda K)\}
%A\Lambda 
%}
%{ \frac{I}{(I-(I+\Lambda K)q\Lambda } }
\\
\vspace{2mm}
&=&
\displaystyle
2A\Lambda ~C^K(q(t)).
\end{array}
\]
If $C^K(q(0))=C_0$, 
\[
C^K(q(t))=\exp(2A\Lambda  t)C_0.
\]
If $q(0)=I$, then $C^K(I)=I$, and then we have  
\[
C^K(q(t))=\exp(2A\Lambda  t).
\]
\pfqed
\end{pf}

%\bigskip\bigskip\bigskip
%{\bf\large June 26thD'±'±'©'牺'Í$J$'ð$
%\Lambda$'ɏC³'µ'Ä'¢'È'¢}
%\bigskip\bigskip\bigskip

Note that the inverse of twisted Cayley transformation is as follows.
\begin{defn}
\[
(C^K)^{-1}(Y)
=
\frac{I}{I-\Lambda K+Y(I+\Lambda K)}(I-Y)\Lambda . 
\]
\end{defn}
\par\bigskip
Combining this formula with the above theorem, we also have 
\begin{thm}
Set 
\[
\displaystyle
\Pi=\frac{-1}{2}(\Gamma+{}^t\Gamma)
\Lambda=-K\Lambda , ~ 
\Gamma=\Lambda +K. 
\]
As for the phase part, it is a complex symmetric matrix and we obtain 
\[
\begin{array}{lcl}
\vspace{2mm}
q(t)
&=&
(C^K)^{-1}(~\exp(2A\Lambda  t)~)
\\
\vspace{2mm}
&=&
\displaystyle
\frac{I}{I-\Lambda K+Y(I+\Lambda K)}(I-Y)\Lambda \Bigr|_{Y=\exp(2A\Lambda  t)}
\\
\vspace{2mm}
&=&
\displaystyle
(I-Y)\frac{I}{I-\Lambda K+(I+\Lambda K)Y}\Lambda \Bigr|_{Y=\exp(2A\Lambda  t)}
\\
\vspace{2mm}
&=&
\displaystyle
(I-e^{2A\Lambda t})\frac{I}{I-\Lambda K+(I+\Lambda K)e^{2A\Lambda t}}\Lambda 
.
\end{array}
\]
And then  
\[
\begin{array}{lcl}
\vspace{2mm} 
g(t)
&=&
\displaystyle
-\frac{1}{\hbar} q(t)
\\
\vspace{2mm}
&=&
\displaystyle
\frac{-1}{\hbar}
\displaystyle
(I-e^{2A\Lambda t})\frac{I}{I-\Lambda K+(I+\Lambda K)e^{2A\Lambda t}}\Lambda 
\\
\vspace{2mm}
&=&
\displaystyle
\frac{-1}{\hbar}
\displaystyle
\frac{I}{I-\Lambda K+e^{2A\Lambda t}(I+\Lambda K)}(I-e^{2A\Lambda t})\Lambda 
\\
\vspace{2mm}
&=&
\displaystyle
\frac{1}{\hbar}
\Lambda\tanh(\Lambda At)
\{I-\Pi\tanh(t\Lambda A)\}^{-1}. 
\end{array}
\]
As for the amplitude, we obtain 
\[
\begin{array}{lcl}
\vspace{2mm}
f(t)&=&
\displaystyle
\det{}^{ \frac{1}{2} }
\left(
\frac{I-\Lambda K+e^{2A\Lambda t} (I+\Lambda K) }{2}
\right)
\\
\vspace{2mm}
&=&
\displaystyle
\exp\Bigr(\frac{t}{2}{\rm tr(A\Gamma)}\Bigr)
\times
\det{}^{ \frac{1}{2} }
\Bigr( 
\cosh (\Lambda At)-\Pi\sinh(\Lambda At)
\Bigr),  
\end{array}
\]
We choose for $\det{}^{\frac{1}{2}}$ 
the principal value . 
Summing up the above results, we obtain 
\begin{equation}
\begin{array}{lcl}
\vspace{2mm}
\displaystyle
{\rm Exp}_{*_\Gamma}\left(\frac{t}{i\hbar}Q\right)
&=&
\displaystyle
\frac{1}{f(t)}\exp(i[{}^txg(t)x]) 
\\
\vspace{2mm}
&=&
\displaystyle 
\det{}^{-\frac{1}{2} }
\left(
\frac{I-\Lambda K+e^{2A\Lambda t} (I+\Lambda K) }{2}
\right)
\\
\vspace{2mm}
&&
\displaystyle 
\qquad\qquad
\times
\exp
\left(
i {}^tx 
\frac{-1}{\hbar}
\Bigr[
\frac{I}{I-\Lambda K+e^{2A\Lambda t}(I+\Lambda K)}(I-e^{2A\Lambda t})\Lambda 
%(I-e^{2A\Lambda t})
%\frac{I}{I-\Lambda K+(I+\Lambda K)e^{2A\Lambda t}}\Lambda 
\Bigr]
x
\right).
\end{array}
\end{equation}

\end{thm}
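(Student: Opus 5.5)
The plan is to invert the twisted Cayley transformation, read off $g$, and then integrate the amplitude equation $f'/f=\frac{\hbar}{2}{\rm tr}(\tilde A g)$. By the previous theorem, $C^K(q(t))=\exp(2A\Lambda t)$ (with the normalization used there). So the first step is to check that the stated formula for $(C^K)^{-1}$ really inverts $C^K$. Write $Y=C^K(q)=\{I+(-q\Lambda)(I+\Lambda K)\}^{-1}\{I-(-q\Lambda)(I-\Lambda K)\}$, using the form $(*)$. Multiplying out gives
\[
Y+Yq\Lambda(I+\Lambda K)\cdot(-1)\cdot(-1)\ldots
\]
More cleanly, the identity $\{I-q\Lambda(I+\Lambda K)\}Y=I+q\Lambda(I-\Lambda K)$ is linear in $q\Lambda$. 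It yields $q\Lambda\{(I-\Lambda K)+(I+\Lambda K)Y\}=Y-I$.

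Solving, and using $\Lambda^{-1}=-\Lambda$, we get $q=(I-Y)\{I-\Lambda K+(I+\Lambda K)Y\}^{-1}\Lambda$. This is the second displayed form. The first form, with the inverse on the left, follows by the same commuting argument as in the proof of $(*)$, because $Y=e^{2A\Lambda t}$ commutes with nothing in particular. To be safe, I would instead derive the left form from the $(*)$-left version of $C^K$ directly, solving for $\Lambda q$-type expressions. Then I would set $g=-q/\hbar$.

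For the tanh form, factor $e^{A\Lambda t}$ symmetrically. We have $I-e^{2A\Lambda t}=-2e^{A\Lambda t}\sinh(A\Lambda t)$ and $I-\Lambda K+(I+\Lambda K)e^{2A\Lambda t}=2e^{A\Lambda t}\{\cosh(A\Lambda t)+\Lambda K\sinh(A\Lambda t)\}$, where I use that $e^{A\Lambda t}$ commutes with $\cosh$ and $\sinh$ of $A\Lambda t$ but not with $\Lambda K$. This is the delicate point. The cleanest route is to conjugate by $\Lambda$, using $\Lambda(A\Lambda)\Lambda^{-1}=\Lambda A$, so that $\sinh(A\Lambda t)\Lambda=\Lambda^{-1}\sinh(\Lambda A t)\Lambda\cdot\Lambda$. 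This moves everything into functions of $\Lambda A t$, and with $\Pi=-K\Lambda$ it yields $\Lambda\tanh(\Lambda At)\{I-\Pi\tanh(\Lambda A t)\}^{-1}$. Symmetry of $g$ is already known from the transpose argument, so I can also use it to check the final form.

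For the amplitude, I will show that $\frac{d}{dt}\log\det^{1/2}M(t)$, with $M=\frac12(I-\Lambda K+e^{2A\Lambda t}(I+\Lambda K))$, equals $\frac{\hbar}{2}{\rm tr}(\tilde A g)$. Jacobi's formula gives $\frac12{\rm tr}(M^{-1}2A\Lambda e^{2A\Lambda t}(I+\Lambda K)/2)$. I then rewrite $e^{2A\Lambda t}(I+\Lambda K)=2M-(I-\Lambda K)$ and compare with ${\rm tr}({}^t\Gamma A\Gamma\, q)$ after substituting the expression for $q$ and cycling traces. Together with $f(0)=1$, since $M(0)=I$, this determines $f$.

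The $\cosh-\Pi\sinh$ form then follows by pulling out $e^{A\Lambda t}$: $\det e^{A\Lambda t}=e^{t\,{\rm tr}(A\Lambda)}$, and ${\rm tr}(A\Lambda)={\rm tr}(A\Gamma)$ because ${\rm tr}(AK)$ is absorbed, or more precisely ${\rm tr}(A\Lambda)=0$ for symmetric $A$. I expect this to require tracking how ${\rm tr}(AK)$ arises from the $\Lambda K$ terms. Substituting everything into $\frac1f\exp(i\,{}^txgx)$ gives the final display.

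The main obstacle is the trace identity for the amplitude, since it relies on the Riccati equation and on the hypothesis $\Lambda^2=-I$. I would first try to verify it by differentiating $\log\det$ and using the derivative formula for $C^K$ from the previous proof, rather than by brute expansion.
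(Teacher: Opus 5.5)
You follow the paper's own route for the phase part; the paper offers nothing beyond inverting the twisted Cayley transform at $Y=e^{2A\Lambda t}$. Your linear solve for $q$ is correct. The two forms of $(C^K)^{-1}$ agree for arbitrary $Y$, since $(I-Y)\{I-\Lambda K+(I+\Lambda K)Y\}$ and $\{I-\Lambda K+Y(I+\Lambda K)\}(I-Y)$ both expand to $I-\Lambda K+\Lambda KY+Y\Lambda K-Y(I+\Lambda K)Y$.

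The genuine gap is the amplitude. The identity you set out to verify, $\frac{d}{dt}\log\det^{1/2}M=\frac{\hbar}{2}{\rm tr}(\tilde A g)$, is false whenever ${\rm tr}(AK)\neq 0$. Carry out exactly the computation you sketch, using $q=\frac12 M^{-1}(I-Y)\Lambda$, $\Lambda\,{}^t\Gamma=I+\Lambda K$ and $(I-Y)(I+\Lambda K)=2(I-M)$. You get
\[
\frac{f'}{f}=-\frac12{\rm tr}({}^t\Gamma A\Gamma\, q)=-\frac12{\rm tr}(M^{-1}A\Gamma)+\frac12{\rm tr}(A\Gamma).
\]
On the other hand, $M'=A\Lambda(2M-I+\Lambda K)$ together with ${\rm tr}(A\Lambda)=0$ gives $\frac12(\log\det M)'=-\frac12{\rm tr}(M^{-1}A\Gamma)$. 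Hence $f=e^{\frac t2{\rm tr}(A\Gamma)}\det^{1/2}M$, not $\det^{1/2}M$. A first-order check confirms this: $f'(0)=\frac{\hbar}{2}{\rm tr}(\tilde A g(0))=0$, while $M=I+(A\Lambda-AK)t+O(t^2)$ gives $\det^{1/2}M=1-\frac t2{\rm tr}(AK)+O(t^2)$.

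So the prefactor cannot come from pulling $e^{A\Lambda t}$ out of the determinant, and you half-saw why. Since $\det e^{A\Lambda t}=1$, we have $\det M=\det(\cosh(\Lambda At)-\Pi\sinh(\Lambda At))$ exactly. Your assertion ${\rm tr}(A\Lambda)={\rm tr}(A\Gamma)$ is wrong, because ${\rm tr}(A\Gamma)={\rm tr}(AK)$.

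The two displayed expressions for $f$ in the statement are therefore inconsistent. Only the second, $e^{\frac t2{\rm tr}(A\Gamma)}\det^{1/2}(\cosh(\Lambda At)-\Pi\sinh(\Lambda At))$, is right. The summarizing formula for ${\rm Exp}_{*_\Gamma}(\frac{t}{i\hbar}Q)$ needs an extra factor $e^{-\frac t2{\rm tr}(A\Gamma)}$. As a cross-check, the damped product as displayed has $pp$-entry $-2\gamma m$, so ${\rm tr}(AK)=-\gamma$, and this factor is exactly the $\exp(\gamma t/2)$ in the paper's damped-oscillator formula. Your argument should derive $f$ from the trace computation above and conclude with this corrected formula, rather than try to verify both expressions.

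Two smaller repairs are also needed.
\begin{enumerate}
\item In the tanh step the exponential must be factored on the right: $I-\Lambda K+(I+\Lambda K)e^{2X}=2\{\cosh X+\Lambda K\sinh X\}e^{X}$ with $X=A\Lambda t$. Then it cancels against $I-e^{2X}=-2\sinh X\,e^{X}$ and leaves $q=-\sinh X\{\cosh X+\Lambda K\sinh X\}^{-1}\Lambda$, to which your conjugation by $\Lambda$ applies. With $e^{X}$ on the left, as you wrote it, nothing cancels.
\item The initial condition behind $C^K(q(t))=e^{2A\Lambda t}$ is $q(0)=0$, with $C^K(0)=I$. It is not $q(0)=I$ as written in the paper, since $C^K(I)\neq I$ in general.
\end{enumerate}
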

\par\medskip\noindent
Note that the above result coincides with the result of Maillard \cite{maillard}. 
\par\medskip%\noindent
For quadratic forms $Q_i={}^tx A_i x~~(i=1,2)$, 
solving the initial-value problem 
\[
C^K(q(t))'=2A_1 \Lambda\cdot C^K(q(t))
\]
with the initial condition $C^K(q(0))=C_0 =\exp(2A_2\Lambda )$, we have  
\[
C^K(q(t))=\exp(2A_1\Lambda t)
\exp(2A_2\Lambda ) .
%C_0,\qquad 
%C^K(q(0))=C_0=\exp(2A_2\Lambda ) . 
\]
And then substituting the solution to the inverse of the Cayley transformation 
\[
(C^K)^{-1}(Y)
=
\frac{I}{I-\Lambda K+Y(I+\Lambda K)}(I-Y)\Lambda , 
\]
and hence we have: 
\begin{thm}
Under the same notations and assumptions above
, we have 
\[
\begin{array}{lcl}
\vspace{2mm}
\displaystyle
{\rm Exp}_{*_\Gamma}\left(
\frac{1}{i\hbar}Q_1\right)
\!\!\!
&{*_\Gamma}&
\!\!\!
\displaystyle
{\rm Exp}_{*_\Gamma}\left(
\frac{1}{i\hbar}Q_2\right)
\\
\vspace{2mm}
&=&
\displaystyle 
\det{}^{-\frac{1}{2} }
\left(
\frac{I-\Lambda K+e^{2A_1\Lambda }e^{2A_2\Lambda } (I+\Lambda K) }{2}
\right)
\\
\vspace{4mm}
&&
\displaystyle 
\qquad\quad
\times
\exp
\left(
i {}^tx 
\frac{-1}{\hbar}
\Bigr[
\frac{I}{I-\Lambda K+e^{2A_1\Lambda }e^{2A_2\Lambda }(I+\Lambda K)}(I-e^{2A_1\Lambda }e^{2A_2\Lambda })\Lambda 
%(I-e^{2A\Lambda })
%\frac{I}{I-\Lambda K+(I+\Lambda K)e^{2A\Lambda t}}\Lambda 
\Bigr]
x
\right)
\\
\vspace{2mm}
&=&
\displaystyle 
\det{}^{-\frac{1}{2} }
\left(
\frac{I-\Lambda K+
e^{ {\bf BCH}(2A_1\Lambda, 2A_2\Lambda) } 
(I+\Lambda K) }{2}
\right)
\\
\vspace{2mm}
&&
\displaystyle 
\qquad\quad
\times
\exp
\left(
{}^tx 
\frac{1}{ i \hbar}
\Bigr[
\frac{I}{I-\Lambda K
+e^{ {\bf BCH}(2A_1\Lambda, 2A_2\Lambda) }
(I+\Lambda K)}(I-
e^{ {\bf BCH}(2A_1\Lambda, 2A_2\Lambda) }  )\Lambda 
\Bigr]
x
\right)
.
\end{array}
\]
We choose for $\det{}^{\frac{1}{2}}$ 
the principal value. 
\end{thm}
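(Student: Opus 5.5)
The strategy is to treat the product ${\rm Exp}_{*_\Gamma}(\frac{1}{i\hbar}Q_1)*_\Gamma{\rm Exp}_{*_\Gamma}(\frac{1}{i\hbar}Q_2)$ as the time-$1$ value of a solution of the evolution equation \eqref{heisenberg-equation} for $Q_1$, started not at $1$ but at the Gaussian $F_2:={\rm Exp}_{*_\Gamma}(\frac{1}{i\hbar}Q_2)$. Set $F(t):={\rm Exp}_{*_\Gamma}(\frac{t}{i\hbar}Q_1)*_\Gamma F_2$. By associativity of $*_\Gamma$ (Proposition on associativity), $i\hbar\,\partial_t F=Q_1*_\Gamma F$ and $F(0)=F_2$. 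Uniqueness for this first-order system, within the Gaussian ansatz, then identifies $F(1)$ with the product.

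The first step is to make the ansatz $F(t)=\frac{1}{f(t)}\exp(i\,{}^tx\,g(t)\,x)$. Exactly as in the derivation preceding the twisted Cayley theorem, $g$ must satisfy the same Riccati equation with $A=A_1$, and $f$ must satisfy $f'/f=\frac{\hbar}{2}{\rm tr}(\tilde A_1 g)$. The initial data are read off from the previous theorem applied to $Q_2$ at $t=1$. So $q(0)=-\hbar g(0)=(C^K)^{-1}(e^{2A_2\Lambda})$, i.e. $C^K(q(0))=e^{2A_2\Lambda}$. Also $f(0)=\det^{1/2}\bigl(\frac{I-\Lambda K+e^{2A_2\Lambda}(I+\Lambda K)}{2}\bigr)$, and $q(0)$ is symmetric, so symmetry of $q(t)$ is preserved.

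The second step is to solve for $q(t)$. By the theorem $C^K(q(t))'=2A_1\Lambda\,C^K(q(t))$, whose derivation used only the Riccati equation and ${}^tq=q$, not the initial value, we get $C^K(q(t))=e^{2A_1\Lambda t}e^{2A_2\Lambda}$. Applying $(C^K)^{-1}$ and setting $t=1$ gives the phase matrix with $Y=e^{2A_1\Lambda}e^{2A_2\Lambda}$.

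The third step, the amplitude, is the main obstacle. Put $D(t)=\frac{1}{2}\bigl(I-\Lambda K+Y(t)(I+\Lambda K)\bigr)$ with $Y(t)=e^{2A_1\Lambda t}e^{2A_2\Lambda}$. I would verify that $\frac{d}{dt}\log\det{}^{1/2}D(t)=\frac12{\rm tr}(D^{-1}D')=\frac12{\rm tr}\bigl(D^{-1}2A_1\Lambda Y(I+\Lambda K)\bigr)$. I would then show this equals $\frac{\hbar}{2}{\rm tr}(\tilde A_1 g)$ with $g=-\frac1\hbar(C^K)^{-1}(Y)$, using $-\Lambda={}^t\Lambda=\Lambda^{-1}$ and cyclicity of the trace. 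This is the same trace identity that, in the single-exponential theorem, produced $f(t)=\det^{1/2}D(t)$. Since that identity is pointwise in $Y$ and only uses $Y'=2A_1\Lambda Y$, it transfers verbatim. Matching initial values at $t=0$ then gives $f(1)=\det^{1/2}\bigl(\frac{I-\Lambda K+e^{2A_1\Lambda}e^{2A_2\Lambda}(I+\Lambda K)}{2}\bigr)$. Continuity in $t$ keeps the principal branch consistent, at least near $t=0$ and wherever $D(t)$ stays invertible; this is the one point where I would state explicitly that genericity is assumed.

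The final BCH form is immediate. Wherever the Baker–Campbell–Hausdorff series converges, $e^{2A_1\Lambda}e^{2A_2\Lambda}=e^{{\bf BCH}(2A_1\Lambda,2A_2\Lambda)}$. Substituting this and rewriting $i\cdot\frac{-1}{\hbar}=\frac{1}{i\hbar}$ gives the second expression.
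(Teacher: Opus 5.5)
\textbf{The gap is in your amplitude step: the trace identity you rely on is false, and the amplitude in the statement is itself missing a factor.}

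\textbf{Where you match the paper.} For the phase your route is the paper's. You evolve with $Q_1$ from the initial datum ${\rm Exp}_{*_\Gamma}(\frac{1}{i\hbar}Q_2)$, solve $C^K(q)'=2A_1\Lambda\,C^K(q)$ with $C^K(q(0))=e^{2A_2\Lambda}$, invert the twisted Cayley transform, and rewrite via BCH. That part is sound: the Cayley equation uses only the Riccati equation and ${}^tq=q$. The paper never derives the amplitude of the product; it only solves for the phase and writes the determinant down.

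\textbf{Why the amplitude step fails.} From $\Lambda^2=-I$ you get $I-\Lambda K=-\Lambda\Gamma$ and $I+\Lambda K=\Lambda\,{}^t\Gamma$. Put $M=2D=Y\Lambda\,{}^t\Gamma-\Lambda\Gamma$. Then $(I-Y)\Lambda\,{}^t\Gamma=2I-M$ and $M'=2A_1\Lambda Y\Lambda\,{}^t\Gamma=2A_1\Lambda M-2A_1\Gamma$. In particular $D'=A_1\Lambda Y(I+\Lambda K)$, so your extra factor $2$ is a slip. With $q=-\hbar g=M^{-1}(I-Y)\Lambda$ and ${\rm tr}(A_1\Lambda)=0$ this gives
\begin{gather*}
\frac{\hbar}{2}{\rm tr}(\tilde A_1 g)=-\frac12{\rm tr}\bigl(A_1\Gamma M^{-1}(I-Y)\Lambda\,{}^t\Gamma\bigr)=-{\rm tr}(A_1\Gamma M^{-1})+\frac12{\rm tr}(A_1K),\\
\frac{d}{dt}\log\det{}^{\frac12}D=\frac12{\rm tr}(M^{-1}M')=-{\rm tr}(A_1\Gamma M^{-1}).
\end{gather*}
So $f'/f$ exceeds $\frac{d}{dt}\log\det^{\frac12}D$ by the constant $\frac12{\rm tr}(A_1K)$. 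The identity does not ``transfer verbatim'' from the single-exponential theorem, because it is false there too. The correct amplitude is $f(t)=e^{\frac t2{\rm tr}(AK)}\det^{\frac12}D(t)$. This is the paper's second expression $\exp(\frac t2{\rm tr}(A\Gamma))\det^{\frac12}(\cosh(\Lambda At)-\Pi\sinh(\Lambda At))$, not its first; the two are not equal. Consequently the initial value $f(0)$ you import is already off by $e^{\frac12{\rm tr}(A_2K)}$.

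\textbf{What your argument actually proves.} Carried through correctly, it gives the displayed phase but with prefactor
\[
e^{-\frac12{\rm tr}((A_1+A_2)K)}\det{}^{-\frac12}\Bigl(\frac{I-\Lambda K+e^{2A_1\Lambda}e^{2A_2\Lambda}(I+\Lambda K)}{2}\Bigr).
\]
So the formula as stated holds only when ${\rm tr}((A_1+A_2)K)=0$ (e.g.\ Weyl ordering, $K=0$), and no argument can close the gap.

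\textbf{Counterexample.} Take the standard ordering from the paper's table: $K=\begin{pmatrix}0&1\\1&0\end{pmatrix}$, $\Gamma=\begin{pmatrix}0&2\\0&0\end{pmatrix}$, and $Q_i=t_i\,qp$, so $A_i=\frac{t_i}{2}K$. Here $f*_\Gamma g=f\,e^{i\hbar\overleftarrow{\partial}_q\overrightarrow{\partial}_p}g$. Directly, ${\rm Exp}_{*_\Gamma}(\frac{t}{i\hbar}qp)=e^{\frac{e^t-1}{i\hbar}qp}$ and $e^{\frac{a}{i\hbar}qp}*_\Gamma e^{\frac{b}{i\hbar}qp}=e^{\frac{a+b+ab}{i\hbar}qp}$, so the product $e^{\frac{e^{t_1+t_2}-1}{i\hbar}qp}$ has amplitude $1$. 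On the other side, set $s=t_1+t_2$. Then $\Lambda K={\rm diag}(1,-1)$ and $e^{2A_1\Lambda}e^{2A_2\Lambda}={\rm diag}(e^{-s},e^{s})$. The displayed prefactor is $\det^{-\frac12}{\rm diag}(e^{-s},1)=e^{s/2}$. This is cancelled exactly by $e^{-\frac12{\rm tr}((A_1+A_2)K)}=e^{-s/2}$, while the phase $\frac{e^{s}-1}{i\hbar}qp$ comes out correctly.
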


\par\noindent
In the final equation, we used 
Baker-Campbell-Hausdorff formula\cite{yamanouchi-sugiura}:
\[
\begin{array}{lcl}
\vspace{2mm}
\exp(X)\exp(Y)
&=&
\exp( {\bf BCH}(X,Y) ), 
\\
\vspace{2mm}
{\bf BCH}(X,Y) 
&=&
\displaystyle 
\sum_{m=1}^{\infty}
\sum_{n=1}^{\infty}
\frac{1}{m}\frac{(-1)^{n-1}}{n}
\Bigr( \dashuline{{\bf BCH}_{m,n}(X,Y)}+(-1)^{m-1}
\uwave{{\bf BCH}_{m,n}(Y,X)} \Bigr)
\\
\vspace{2mm}
&=&
\displaystyle
\sum_{m=1}^{\infty}
\sum_{n=1}^{\infty}
\frac{1}{m}\frac{(-1)^{n-1}}{n}
\Biggr(~
\dashuline{ \sum_{\fbox{**}}
\frac{ ( {\rm ad}_X)^{p_1}  ({\rm ad}_Y)^{q_1} \cdots  ( {\rm ad}_X)^{p_{n-1}}  ({\rm ad}_Y)^{q_{n-1}}X }{p_1!q_1!\cdots \cdots p_{n-1}!q_{n-1}!} }
%\right.
\\
\vspace{2mm}
&&
\displaystyle 
\qquad\qquad\qquad 
\qquad\qquad\qquad 
+%\left.
(-1)^{m-1}
\uwave{
\sum_{\fbox{**}}
\frac{ ( {\rm ad}_Y)^{p_1}  ({\rm ad}_X)^{q_1} \cdots  ( {\rm ad}_Y)^{p_{n-1}}  ({\rm ad}_X)^{q_{n-1}}Y }{p_1!q_1!\cdots \cdots p_{n-1}!q_{n-1}!} } 
\Biggr), 
\end{array}
\]
where $\fbox{**}$ denotes 
\[
p_i,q_i\in \Z_{\geq 0 } ~~(i=1,\ldots ,n-1),~~~
p_i+q_i>0,~~~\sum_{i=1}^{n-1}(p_i+q_i)=m-1.
\]
Furthermore, when an open neighborhood $U$ of the identity of the symplectic group $Sp_{2n}(\C)$ is small enough, there exists 
an open neighborhood $V$ of
$0$ in  
the space $Sym_{2n}(\C)$ of symmetric matrices  such that 
\[
\frac{1}{2}\log(\cdot) \Lambda^{-1} : U \stackrel{\sim}{\longrightarrow} V,\qquad 
\exp(2(\cdot)\Lambda): V \stackrel{\sim}{\longrightarrow} U. 
\]  
Thus
\begin{thm}
Under the same notations and assumptions above, for matrices $g_1,~g_2$ in $U$ we have 
\[
\begin{array}{lcl}
\vspace{3mm}
&&
\displaystyle
{\rm Exp}_{*_\Gamma}\left(
\frac{1}{i\hbar}{}^tx \bigr( \frac{1}{2}(\log g_1)\Lambda^{-1} \bigr) x \right)
{*_\Gamma}~
\displaystyle
{\rm Exp}_{*_\Gamma}\left(
\frac{1}{i\hbar}{}^tx \bigr( \frac{1}{2}(\log g_2)\Lambda^{-1} \bigr) x  \right)
\\
\vspace{3mm}
&=&
\displaystyle 
\det{}^{-\frac{1}{2} }
\left(
\frac{I-\Lambda K+(g_1g_2) (I+\Lambda K) }{2}
\right)
%\\
%\vspace{4mm}
%&&
\displaystyle 
%\qquad\qquad
\cdot
\exp
\left(
 {}^tx 
\frac{1}{ i \hbar}
\Bigr[
\frac{I}{I-\Lambda K+(g_1g_2)
(I+\Lambda K)}(I-(g_1g_2))\Lambda 
\Bigr]
x
\right) . 
\end{array}
\] 
\end{thm}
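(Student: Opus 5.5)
The theorem is a reparametrization of the preceding product formula for ${\rm Exp}_{*_\Gamma}(\frac{1}{i\hbar}Q_1)*_\Gamma{\rm Exp}_{*_\Gamma}(\frac{1}{i\hbar}Q_2)$ through the local diffeomorphism $V\cong U$. The plan is to set $A_k:=\frac12(\log g_k)\Lambda^{-1}$ for $k=1,2$ and substitute these into that theorem.

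\textbf{Step 1: $A_k$ is symmetric.} The preceding theorem needs $A_k$ symmetric, because it rests on the earlier fact that $q(t)$ stays symmetric when $A$ is. Membership $g_k\in U$ guarantees this via the bijection $\frac12\log(\cdot)\Lambda^{-1}:U\to V\subset Sym_{2n}(\C)$. Concretely, $g\in Sp$ means ${}^tg\Lambda g=\Lambda$, so $X=\log g$ lies in the Lie algebra and ${}^tX\Lambda+\Lambda X=0$. Using ${}^t\Lambda=-\Lambda=\Lambda^{-1}$, this gives
\[
{}^t(X\Lambda^{-1})={}^t(\Lambda^{-1})\,{}^tX=\Lambda\,{}^tX=X\Lambda^{-1},
\]
so $A_k$ is symmetric. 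Hence $Q_k={}^tx A_kx$ is an admissible quadratic form, and $2A_k\Lambda=\log g_k$, so $e^{2A_k\Lambda}=g_k$.

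\textbf{Step 2: substitute.} Insert this into the first equality of the product theorem, where only $e^{2A_1\Lambda}e^{2A_2\Lambda}=g_1g_2$ appears. The amplitude becomes $\det^{-\frac12}\bigl(\frac{I-\Lambda K+g_1g_2(I+\Lambda K)}{2}\bigr)$. The phase becomes $\frac{1}{i\hbar}{}^tx\bigl[(I-\Lambda K+g_1g_2(I+\Lambda K))^{-1}(I-g_1g_2)\Lambda\bigr]x$, using $i\cdot\frac{-1}{\hbar}=\frac{1}{i\hbar}$. This is exactly the claimed formula. The BCH form is not needed, although one can note that $g_1g_2$ stays near $I$ when $U$ is small.

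\textbf{Main obstacle: well-definedness of the inverse and the square root.} The product theorem was obtained by solving $C^K(q)'=2A_1\Lambda\,C^K(q)$ from the initial value $C_0=e^{2A_2\Lambda}$ and inverting $C^K$. This requires $I-\Lambda K+Y(I+\Lambda K)$ to be invertible along the path $Y(t)=e^{2A_1\Lambda t}g_2$, $t\in[0,1]$. At $Y=I$ this matrix equals $2I$. So, after shrinking $U$ so that $g_1,g_2$ lie in a small neighborhood of $I$, the path stays in the region where the matrix is close to $2I$. There it is invertible, and the principal branch of $\det^{\frac12}$ is continuous with value $1$ at $t=0$. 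With $U$ chosen this way, the hypotheses of the product theorem hold and the identity follows. This shrinking of $U$ is the step that needs care; the rest is direct substitution.
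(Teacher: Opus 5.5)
Your proposal is correct and follows the paper's route: the paper also obtains this theorem by substituting $A_k=\frac12(\log g_k)\Lambda^{-1}$, so that $e^{2A_k\Lambda}=g_k$, into the preceding product formula via the local bijection $U\cong V$. Your explicit check that $A_k$ is symmetric, and your remark that shrinking $U$ keeps $I-\Lambda K+Y(I+\Lambda K)$ invertible and the principal branch of $\det^{\frac12}$ continuous, fill in details the paper leaves implicit.
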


\section{Applications to special functions}
In this section, we study modifications of special functions.

\subsection{Augmented  star Hermitian polynomials } 
Introducing the following product\footnote{It is known that usual commutative product $\cdot$ and convolution product give examples of commutative associative product. The product $*_{\stkap}$ also gives a commutative associative product. }, we can define augmented star Hermitian polynomials.

\begin{defn}
Set a $1\times 1$-ordering matrix with complex coefficient 
$\displaystyle \Gamma=J+K
=0+ \kappa \in \C$ and $\hbar =-i$. 
Using augmented star product 
\[
f*_{\Gamma}g=fg+\sum_{\ell=1}^{\infty} \frac{1}{\ell!}\left(\frac{i\hbar}{2}\right)^{\ell}C_{\ell}^{\Gamma}(f,g), 
\]
\[
C_{\ell}^{\Gamma}(f,g)
:=\sum_{
\tiny
\begin{array}{c}
j_r,k_s=1\\
(r,s=1,\ldots, \ell)
\end{array}
}^{N} \Gamma^{j_1k_1}\cdots
\Gamma^{j_{\ell}k_{\ell}}\partial_{j_1\ldots j_{\ell}}f \cdot \partial_{k_1\ldots k_{\ell}}g,   
\]
define
\[
f(z)*_{\kappa}g(z):=f(z)\exp\Bigr\{ \frac{1}{2}\overleftarrow{\partial_z}\kappa\overrightarrow{\partial_z}  \Bigr\}g(z).
\] 
for any functions $f(z), g(z)$ of one variable $z$ and a complex number $\kappa$. 
%2\pi i\kappa$. 
%\[
%\displaystyle \Gamma=2\pi i 
%K~(K\in Sym_{\mathbb C}(n)), 
%~~\hbar =-i , 
%\frac{-1}{2\pi}, 
%~~t=1, {\bf t}=\pi i {\bf b}, 
%~~{ z}={ x}. 
%\] 
\end{defn}
For example, we easily see that 
\[
\begin{array}{l}
\displaystyle
z \stkap z=z\cdot z+\frac{\kappa}{2}, \\
\displaystyle
z {\stkap} z^2=z^3 {\stkap}\frac{\kappa}{2}z=z^2 {\stkap} z,\\
\displaystyle
z {\stkap}z {\stkap}z=z_{\stkap}^{3}=z^3+\frac{3\kappa}{2}z.
\end{array}
\]
where $z^n$ denotes 
$\overbrace{z\cdots z}^{n-times}$,  
and $z_{\stkap}^{n}$ denotes 
$\overbrace{z {\stkap} 
\cdots  {\stkap}  z}^{n-times}.$
The product ${\stkap}$ gives an associative commutative product on the space of polynomials.  Thanks to Proposition \ref{maillard-deg1}, 
%i.e. 
%\[
%{\rm Exp}_{*_\Gamma}\frac{t}{i\hbar}({}^t 
%{\bf b}{\bf x}+c) 
%=\exp\left(\frac{t^2}{8i\hbar}{}^t{\bf b}
%(\Gamma + {}^t\Gamma){\bf b}  \right)
%\times
%\exp\left(\frac{t}{i\hbar}({}^t {\bf b}{\bf x}+c)\right).
%\]
we have  
\[
\begin{array}{lll}
\vspace{2mm}
%&&
\displaystyle\sum_{n=0}^{\infty}\frac{t^n}{n!}\overbrace{(2z)*_{\stkap} \cdots *_{\stkap} (2z)}^{n-times} 
%\\
%\vspace{2mm}
%&=&
={\rm Exp}_{*_{\stkap}}\bigr(2tz\bigr)\bigr|_{\kappa = -1}
%\\
%\vspace{2mm}
%&=&
=\displaystyle\exp\left(2tz-t^2\right)
=\displaystyle
\sum_{n=0}^{\infty}\frac{t^n}{n!}H_n(z),  
\end{array}
\]
%\[
%\begin{array}{lll}
%\vspace{2mm}
%&&
%\displaystyle\sum_{n=0}^{\infty}\frac{t^n}{n!}\overbrace{(\sqrt{2}x)*_{\stkap} \cdots *_{\stkap} (\sqrt{2}x)}^{n-times} 
%\\
%\vspace{2mm}
%&=&
%{\rm Exp}_{*_{\stkap}}\bigr(\sqrt{2}tx\bigr)\bigr|_{\stkap = -1}\\
%\vspace{2mm}
%&=&\displaystyle\exp\left(\sqrt{2}tx-\frac{1}{2}t^2\right)
%=
%\displaystyle
%\sum_{n=0}^{\infty}\frac{t^n}{n!}H_n(x),  
%\end{array}
%\]
where 
\[
H_n(z)=(-1)^n\exp(z^2)\frac{d^n\exp(-z^2)}{dz^n}
\]
is known as a Hermitian polynomial.  
Thus we also have 
\[
H_n(z)=(2z)_{{\stkap}}^n\Bigr|_{\kappa=-1}.
\]
Then we define {\bf star Hermitian polynomials}\footnote{By a similar manner, we can define star theta functions. See \cite{ommy14msri}. } $H_n(z,\kappa)$ as follows:
\begin{defn}
\[
\sum_{n=0}^\infty \frac{H_n(z,\kappa)}{n!}t^n
:=
{\rm Exp}_{\stkap}(2tz)=e^{\kappa t^2 +2tz}
=:
g_{\kappa}(t,z).
\]
Thus, we have $H_n(z,\kappa)=(2z)_{\stkap}^n.$
\end{defn}
Direct computation shows the following:
\begin{prop}
Under the above notations, we have 
\[
H_n(z,\kappa)
=
\sum_{j=0}^{\lfloor n/2 \rfloor}
\frac{n!}{j!(n-2j)!}
\kappa^j  (2z)^{n-2j} . 
\]
\end{prop}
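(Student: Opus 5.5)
The plan is to expand the generating function $g_{\kappa}(t,z)=e^{\kappa t^2+2tz}$, which was just used to define $H_n(z,\kappa)$, in powers of $t$ and compare coefficients. By the preceding Definition, $\sum_{n}H_n(z,\kappa)t^n/n!={\rm Exp}_{\stkap}(2tz)$. Proposition~\ref{maillard-deg1}, applied with $N=1$, $\Gamma=\kappa$ and $\hbar=-i$, gives the closed form $e^{\kappa t^2+2tz}$. So it suffices to find the coefficient of $t^n$ in $e^{\kappa t^2}\,e^{2tz}$.

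First, write the two factors separately as power series:
\[
e^{\kappa t^2}=\sum_{j=0}^{\infty}\frac{\kappa^j t^{2j}}{j!},\qquad
e^{2tz}=\sum_{m=0}^{\infty}\frac{(2z)^m t^m}{m!}.
\]
Both series converge absolutely for all complex $t,z,\kappa$. Hence their Cauchy product is legitimate, and so is the rearrangement by total degree in $t$.

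Next, collect the terms with $2j+m=n$, so that $m=n-2j$ with $0\le j\le\lfloor n/2\rfloor$. This gives
\[
\frac{H_n(z,\kappa)}{n!}=\sum_{j=0}^{\lfloor n/2\rfloor}\frac{\kappa^j(2z)^{n-2j}}{j!\,(n-2j)!}.
\]
Multiplying by $n!$ yields the stated formula. As a sanity check, the case $\kappa=-1$ reproduces the classical explicit formula for $H_n(z)$, and $n=3$ gives $8z^3+12\kappa z$, which agrees with the computation of $z_{\stkap}^3$ above.

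The only point needing care is the first step: the normalization in Proposition~\ref{maillard-deg1} must actually yield $e^{\kappa t^2}$ under $\hbar=-i$. With $\frac{t}{i\hbar}{}^t{\bf b}{\bf x}=2tz$ we get $b=2$, and the prefactor becomes $\exp\bigl(\frac{1}{8}\cdot 2\kappa\cdot 4\,t^2\bigr)=e^{\kappa t^2}$, as required. If one prefers not to rely on this normalization, a second route is an induction on $n$. Since $\stkap$ is associative, $H_{n+1}=(2z)\stkap H_n=2zH_n+\kappa H_n'$. One then checks that the right-hand side of the claimed formula satisfies the same recursion and the same initial value $H_0=1$. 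This reduces to the identity $\binom{n+1}{2j}\frac{(2j)!}{j!}=\binom{n}{2j}\frac{(2j)!}{j!}+\binom{n}{2j-1}\frac{2(2j-1)!\,j}{j!}\cdot\frac{1}{1}$, which follows from Pascal's rule.
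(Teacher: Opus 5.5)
Your proof is correct and is essentially the paper's argument. Both read off the coefficient of $t^n$ in $e^{\kappa t^2+2tz}$: the paper expands $\sum_m \frac{1}{m!}(\kappa t^2+2zt)^m$ binomially and reindexes with $n=m+j$, while you multiply the two series $e^{\kappa t^2}e^{2tz}$, which is the same computation. Your normalization check via Proposition~\ref{maillard-deg1} is sound, and so is your backup induction based on $H_{n+1}=2zH_n+\kappa H_n'$ and Pascal's rule.
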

\begin{pf}
\[
\begin{array}{lll}
\vspace{2mm}
\displaystyle 
&&
\displaystyle 
\sum_{n=0}^{\infty}
\frac{t^n}{n!}H_n(z,\kappa)
=g_{\kappa}(t,z)
\\
\vspace{2mm}
&=&
\displaystyle 
\exp(\kappa t^2 +2zt)
=\sum_{m=0}^{\infty}\frac{1}{m!}
(\kappa t +2zt)^m 
\\
\vspace{2mm}
&=&
\displaystyle 
\sum_{m=0}^{\infty}\frac{1}{m!}
\sum_{j=0}^{m}\frac{m!}{j!(m-j)!}
(\kappa t^2)^j(2zt)^{m-j}
\\
\vspace{2mm}
&=&
\displaystyle 
\sum_{m=0}^{\infty}
\sum_{j=0}^{m}\frac{1}{j!(m-j)!}
(\kappa t^2)^j(2zt)^{m-j}
\\
\vspace{2mm}
&&~~(\mbox{ Replacing }m+j=n)
\\
\vspace{2mm}
&=&
\displaystyle 
\sum_{n=0}^{\infty}
\frac{1}{n!}
\sum_{j=0}^{\lfloor n/2 \rfloor}
\frac{n!}{j!(n-2j)!}
\kappa^j t^n (2z)^{n-2j}.
\end{array}
\]
%\\
%\vspace{2mm}
%&=&
%=
%\displaystyle 
%e^{-\frac{1}{\kappa}z^2}
%\left( \kappa \frac{d}{dz} \right)^n 
%e^{\frac{1}{\kappa}z^2} \\
%\vspace{2mm}
%&=&\displaystyle (i\sqrt{\kappa})^n 
%\left[e^{x^2}%\left( \frac{d}{dx} \right)^n 
%e^{-x^2}\right]
%\Bigr|_{x=\frac{z}{i\sqrt{\kappa}}}  \\
%\vspace{2mm}
%&=&\displaystyle (i\sqrt{\kappa})^n
%\sum_{k=0}^{\lfloor n/2 \rfloor}
%\frac{(-1)^kn!}{k!(n-2k)!}
%(2 \frac{z}{i\sqrt{\kappa}} )^{n-2k}  \\
%\vspace{2mm}
%&=&\displaystyle 
%\sum_{k=0}^{\lfloor n/2 \rfloor}
%\frac{(-1)^kn!}{k!(n-2k)!}(2 z )^{n-2k}
%(i\sqrt{\kappa})^{2k}. 
%\end{array}
%\]
Thus, we have 
\[
H_n(z,\kappa)
=
\sum_{j=0}^{\lfloor n/2 \rfloor}
\frac{n!}{j!(n-2j)!}
\kappa^j  (2z)^{n-2j}
\]
\pfqed
\end{pf}

As for $H_n(z,\kappa)=(2z)_{\stkap}^n$, we have differential equations and recurrence relations in the following way:
\begin{prop}\label{zenkashiki}
Under the above definition and notations, we have a recurrence relation as follows:
\[
(2z)_{\stkap}^{n+1}-2z (2z)_{\stkap}^n -2\kappa n (2z)_{\stkap}^{n-1}=0,
\]
or equivalently, 
\[
H_{n+1}(z,\kappa)-2z H_n(z,\kappa)-2\kappa n H_{n-1}(z,\kappa)=0.
\]
\end{prop}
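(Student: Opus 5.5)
Differentiating the generating function $g_\kappa(t,z)=e^{\kappa t^2+2tz}$ from the definition of $H_n(z,\kappa)$ with respect to $t$ gives
\[
\frac{\partial}{\partial t} g_\kappa(t,z)=(2\kappa t+2z)\,g_\kappa(t,z).
\]
Expanding both sides in powers of $t$, the left-hand side is $\sum_{n\ge 0}\frac{t^n}{n!}H_{n+1}(z,\kappa)$, and the right-hand side is $2z\sum_{n\ge0}\frac{t^n}{n!}H_n(z,\kappa)+2\kappa\sum_{n\ge1}\frac{t^{n}}{(n-1)!}H_{n-1}(z,\kappa)$. Comparing the coefficients of $t^n/n!$ gives $H_{n+1}=2zH_n+2\kappa nH_{n-1}$, with the convention $H_{-1}=0$ for $n=0$, which is exactly the second form of the statement. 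Since the definition already gives $H_n(z,\kappa)=(2z)^n_{\stkap}$, the first form follows immediately.

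As a cross-check, I would verify the same identity directly from the star product. We have $(2z)^{n+1}_{\stkap}=(2z)\stkap(2z)^n_{\stkap}$, and because $2z$ is linear, the series $\exp\{\frac12\overleftarrow{\partial_z}\kappa\overrightarrow{\partial_z}\}$ truncates after the first-order term. This gives $(2z)\stkap F=2zF+\kappa\,\partial_zF$. It then remains to show $\partial_z H_n(z,\kappa)=2nH_{n-1}(z,\kappa)$, which follows from $\partial_z g_\kappa=2t\,g_\kappa$ by comparing coefficients. Substituting yields $H_{n+1}=2zH_n+2\kappa nH_{n-1}$ again.

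The only point needing care is the index shift and the $n=0$ boundary term. Otherwise the argument is a routine coefficient comparison.
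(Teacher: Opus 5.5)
Your proof is correct and follows essentially the same route as the paper: differentiate the generating function $e^{\kappa t^2+2tz}$ in $t$ to obtain the factor $2z+2\kappa t$, then compare coefficients of $t^n$. The only difference is cosmetic. The paper gets this factor from the defining equation $\partial_t {\rm Exp} = (2z)*_{\kappa}{\rm Exp} = 2z\,{\rm Exp}+\kappa\,\partial_z{\rm Exp}$ together with $\partial_z{\rm Exp}=2t\,{\rm Exp}$, which is exactly your cross-check, whereas your main argument differentiates the closed form directly.
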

\noindent
{\bf Remark.} Note that 
\[
\begin{array}{c}
\vspace{2mm}
\displaystyle 
H_{n+1}(z,\kappa)-2z H_n(z,\kappa)-2\kappa n H_{n-1}(z,\kappa)=0
\end{array}
\]
converges the recurrence relation 
\[
\begin{array}{c}
\vspace{2mm}
\displaystyle 
H_{n+1}(z)-2z H_n(z)+2 n H_{n-1}(z)=0 
\end{array}
\]
when $\kappa \to -1$. 

\begin{pf}
Differentiate ${\rm Exp}_{\stkap}(2tz)$ by $t$, we have 
\[
\begin{array}{lll}
\vspace{2mm}
&&\displaystyle \frac{\partial}{\partial t}{\rm Exp}_{\stkap}(2tz) 
%=\displaystyle \frac{\partial}{\partial t} 
%\sum_{n=0}^\infty \frac{1}{n!}(2tz)_{\stkap}^n 
%\\
%\vspace{2mm}
=\displaystyle (2z) \stkap  {\rm Exp}_{\stkap}(2tz) \\
\vspace{2mm}
&=&\displaystyle (2z)  {\rm Exp}_{\stkap}(2tz) + \frac{\kappa}{2}\partial_z (2z) \cdot \partial_z  {\rm Exp}_{\stkap}(2tz) \\
\vspace{2mm}
&=&\displaystyle (2z)  {\rm Exp}_{\stkap}(2tz) + \kappa (2t {\rm Exp}_{\stkap}(2tz)),  
\end{array}
\]
where we use the definition of star product for the third equality, and the following formula for fourth equality:
\[
\displaystyle 
\frac{\partial}{\partial z} {\rm Exp}_{\stkap}(2tz) 
=
\frac{\partial}{\partial z} e^{\kappa t^2+2tz}=2t {\rm Exp}_{\stkap}(2tz). 
\]
Thus we see  
\[
%\frac{\partial}{\partial t}
\displaystyle 
\sum_{n=1}^\infty\frac{(2z)_{\stkap}^n}{(n-1)!}t^{n-1}
=
2z \sum_{n=0}^\infty\frac{(2z)_{\stkap}^n}{n!}t^{n} +\kappa 2t \sum_{n=0}^\infty\frac{(2z)_{\stkap}^n}{n!}t^{n}. 
\]
Comparing the coefficient of $t^n$, we have the first assertion.  
%\[
%(2z)_{\stkap}^{n+1}-2z (2z)_{\stkap}^n -2\kappa n (2z)_{\stkap}^{n-1}=0. 
%\]
Thanks to $H_n(z,\kappa)=(2z)_{\stkap}^n$, we see the second assertion. 

\flushright Q.E.D. 
\end{pf}
We also have 
\begin{prop}
Under the same notations above, we have a  descending operator as follows:
\[
\frac{d}{dz}H_n (z,\kappa) =2nH_{n-1}(z,\kappa)
\]
\end{prop}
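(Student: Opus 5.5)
The plan is to differentiate the generating function $g_{\kappa}(t,z)=e^{\kappa t^2+2tz}$ with respect to $z$ instead of $t$, in the same way as the proof of Proposition~\ref{zenkashiki}. By the definition of $H_n(z,\kappa)$ we have $g_\kappa(t,z)=\sum_{n\ge 0}\frac{t^n}{n!}H_n(z,\kappa)$. Since $\kappa t^2$ does not depend on $z$,
\[
\frac{\partial}{\partial z} g_{\kappa}(t,z)=2t\, g_{\kappa}(t,z).
\]
This is exactly the identity $\partial_z {\rm Exp}_{\stkap}(2tz)=2t\,{\rm Exp}_{\stkap}(2tz)$ already used in that proof.

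Next I expand both sides as power series in $t$. The left side is $\sum_{n\ge0}\frac{t^n}{n!}\frac{d}{dz}H_n(z,\kappa)$; termwise differentiation is harmless because each coefficient is a polynomial in $z$ and the series is entire in $t$. The right side is
\[
2\sum_{m\ge0}\frac{t^{m+1}}{m!}H_m(z,\kappa)=\sum_{n\ge1}\frac{t^n}{n!}\,2n\,H_{n-1}(z,\kappa).
\]
Comparing the coefficients of $t^n/n!$ gives $\frac{d}{dz}H_n(z,\kappa)=2nH_{n-1}(z,\kappa)$ for $n\ge1$. The case $n=0$ is trivially $0=0$, since $H_0=1$.

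As an independent check, one can differentiate the explicit formula $H_n(z,\kappa)=\sum_j \frac{n!}{j!(n-2j)!}\kappa^j(2z)^{n-2j}$ term by term. Using $\frac{d}{dz}(2z)^{n-2j}=2(n-2j)(2z)^{n-2j-1}$, this gives $2n\sum_j\frac{(n-1)!}{j!(n-1-2j)!}\kappa^j(2z)^{n-1-2j}=2nH_{n-1}(z,\kappa)$. The terms with $n-2j=0$ vanish, which is consistent with the upper limit $\lfloor (n-1)/2\rfloor$ in the sum for $H_{n-1}$.

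The only point needing care is the reindexing of the series on the right side. No real obstacle is expected.
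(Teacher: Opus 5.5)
Your proof is correct. Both the coefficient comparison and the termwise check against the explicit sum are right. It does, however, follow a different route from the paper's.

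The paper argues entirely inside the star-product formalism. It writes $H_n(z,\kappa)=(2z)^n_{*_\kappa}$ and computes $\frac{d}{dz}(2z)^n_{*_\kappa}=2n\,(2z)^{n-1}_{*_\kappa}$ in one line. The tacit justification has three steps:
\begin{itemize}
\item Since $*_\kappa$ is given by the constant-coefficient operator $\exp\{\frac12\overleftarrow{\partial_z}\kappa\overrightarrow{\partial_z}\}$, the derivative $\frac{d}{dz}$ is a derivation of $*_\kappa$.
\item The Leibniz rule then produces $n$ terms, in each of which one factor $2z$ is replaced by the constant $2$.
\item Constants are central for $*_\kappa$, so each term equals $2\,(2z)^{n-1}_{*_\kappa}$.
\end{itemize}

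Your argument instead uses only the closed form $g_\kappa(t,z)=e^{\kappa t^2+2tz}$ (the already established star exponential of a linear form) and the identity $\partial_z g_\kappa=2t\,g_\kappa$. So it never needs the derivation property, which the paper invokes without comment, and it mirrors the paper's own proof of the recurrence relation.

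The paper's version, in exchange, shows that the descending operator is just the Leibniz rule for $*_\kappa$-powers of a linear form. That viewpoint applies verbatim to any constant-coefficient augmented star product. Your approach generalizes equally easily, e.g.\ to $H_\alpha(\mathbf z,K)$ via $\partial_{z_j}\exp({}^t\mathbf t K\mathbf t+2\,{}^t\mathbf t\mathbf z)=2t_j\exp({}^t\mathbf t K\mathbf t+2\,{}^t\mathbf t\mathbf z)$.
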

\begin{pf}
\[
\frac{d}{dz}H_n (z,\kappa) 
=
\frac{d}{dz}(2z)_{*_{\kappa}}^n
=
2n(2z)_{*_{\kappa}}^{n-1}
=
2nH_{n-1}(z,\kappa)
\]

\pfqed
\end{pf}

Similarily, 
\begin{prop}\label{zenkashiki2}
Under the above definition and notations, we have a ascending operator as follows:
\[
\kappa 
\left( \frac{\partial}{\partial z} +2z\right) 
H_n(z,\kappa)=H_{n+1}(z,\kappa).
\]
or equivalently, 
\[
\kappa \frac{\partial}{\partial z}((2z)_{\stkap}^n) 
+2z (2z)_{\stkap}^n 
=(2z)_{\stkap}^{n+1},
\]
\end{prop}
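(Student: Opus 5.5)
The cleanest route is to compute $(2z)_{\stkap}^{n+1}$ directly from the definition of $*_{\kappa}$, using associativity. The product $\stkap$ is associative by the proposition in Section~3 on associativity of augmented star products. Hence
\[
(2z)_{\stkap}^{n+1}=(2z)\stkap (2z)_{\stkap}^{n}.
\]
Because $2z$ is linear, the bidifferential expansion $f\exp\{\frac12\overleftarrow{\partial_z}\kappa\overrightarrow{\partial_z}\}g$ stops after the first-order term when $f=2z$. This gives
\[
(2z)\stkap g = 2z\, g+\frac{\kappa}{2}\,\partial_z(2z)\,\partial_z g = 2z\,g+\kappa\,\partial_z g
\]
for any polynomial $g$. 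This is exactly the computation already carried out in the proof of Proposition~\ref{zenkashiki}, there with $g={\rm Exp}_{\stkap}(2tz)$. Taking $g=(2z)_{\stkap}^n=H_n(z,\kappa)$ yields the identity
\[
\kappa \frac{\partial}{\partial z}\bigl((2z)_{\stkap}^n\bigr)+2z\,(2z)_{\stkap}^n=(2z)_{\stkap}^{n+1},
\]
which is the second displayed form of the statement.

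As a cross-check I would also derive it from the two results already proved. The descending relation gives $\frac{d}{dz}H_n(z,\kappa)=2nH_{n-1}(z,\kappa)$. Substituting $2nH_{n-1}=\partial_zH_n$ into the recurrence of Proposition~\ref{zenkashiki}, namely $H_{n+1}=2zH_n+2\kappa nH_{n-1}$, gives $H_{n+1}=2zH_n+\kappa\,\partial_zH_n$ again. A third check comes from the generating function $g_\kappa(t,z)=e^{\kappa t^2+2tz}$. One has $(\kappa\partial_z+2z)g_\kappa=(2\kappa t+2z)g_\kappa=\partial_t g_\kappa$, and comparing coefficients of $t^n$ gives the same identity.

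The only real subtlety is how to read the first display of the statement. The operator must be understood as $\kappa\frac{\partial}{\partial z}+2z$, as in the "equivalently" form, and not as $\kappa(\frac{\partial}{\partial z}+2z)$. The case $n=0$ shows this: $H_0=1$ and $H_1=2z$, whereas $\kappa(\partial_z+2z)1=2\kappa z$, which equals $2z$ only when $\kappa=1$. I would therefore state and prove the identity in the form $(\kappa\partial_z+2z)H_n(z,\kappa)=H_{n+1}(z,\kappa)$. Finally I would remark that at $\kappa=-1$ it reduces to the classical raising relation $H_{n+1}=2zH_n-H_n'$ for Hermite polynomials, consistent with the Remark after Proposition~\ref{zenkashiki}.
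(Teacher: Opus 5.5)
Your argument is correct and is essentially the paper's. The paper applies the same identity $(2z)\stkap g=2z\,g+\kappa\,\partial_z g$ to $g={\rm Exp}_{\stkap}(2tz)=\sum_{n}\frac{t^n}{n!}(2z)^n_{\stkap}$ and compares coefficients of $t^n$ (your third check); you apply it directly to $g=(2z)^n_{\stkap}$, which is the same computation done coefficient by coefficient.

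Your reading of the operator as $\kappa\partial_z+2z$ is also right. The paper's proof itself yields exactly the ``equivalently'' form $(2z)^{n+1}_{\stkap}=2z\,(2z)^n_{\stkap}+\kappa\,\partial_z (2z)^n_{\stkap}$, so the factored display $\kappa(\partial_z+2z)H_n$ is a misprint, as your $n=0$ check confirms.
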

\begin{pf}
Differentiate ${\rm Exp}_{\stkap}(2tz)$ by $t$, we have 
\[
\begin{array}{lll}
\vspace{2mm}
&&\displaystyle \frac{\partial}{\partial t}{\rm Exp}_{\stkap}(2tz) 
= 
\displaystyle (2z)  {\rm Exp}_{\stkap}(2tz) + \frac{\kappa}{2}\partial_z (2z) \cdot \partial_z  {\rm Exp}_{\stkap}(2tz) . 
\end{array}
\]
Replacing $\displaystyle {\rm Exp}_{\stkap}(2tz)=\sum_{n=0}^\infty \frac{t^n}{n!}(2z)_{\stkap}^n$, we see  
\[
\displaystyle 
\sum_{n=1}^\infty\frac{(2z)_{\stkap}^n}{(n-1)!}t^{n-1}
=
2z \sum_{n=0}^\infty\frac{(2z)_{\stkap}^n}{n!}t^{n} 
+\kappa \sum_{n=0}^\infty \frac{\partial}{\partial z}\frac{(2z)_{\stkap}^n}{n!}t^{n}. 
\]
Comparing the coefficient of $t^n$, we have the  assertion.  
%\[
%(2z)_{\stkap}^{n+1}-2z 
%(2z)_{\stkap}^n -2\kappa n 
%(2z)_{\stkap}^{n-1}=0. 
%\]
%Thanks to $H_n(z,\kappa)=(2z)_{\stkap}^n$, 
%we see the second assertion. 

\flushright Q.E.D.  %{$\Box$}
\end{pf}

Apply $\displaystyle \frac{d}{dz}$  
to the both sides of ascending operator 
\[
\kappa 
\left( \frac{\partial}{\partial z} +2z\right) 
H_n(z,\kappa)=H_{n+1}(z,\kappa).
\]
Then we have 
\[
\kappa\frac{d^1}{dz^2}H_n(z,\kappa)+
2z \frac{d}{dz} H_n(z,\kappa)+
2H_n(z,\kappa)
=\frac{d}{dz}H_{n+1}(z,\kappa). 
\]
Consider 
\[
\frac{d}{dz}H_n (z,\kappa) =2nH_{n-1}(z,\kappa)
\]
with $n\to n+1$, we have 
\[
\frac{d}{dz}H_{n+1} (z,\kappa) 
=2(n+1)H_{n}(z,\kappa). 
\]
Combining the above two formulas, we have the following differential equation. 
\begin{prop}Under the same notations above, we have the following differential equation: 
\[
\kappa\frac{d^2}{dz^2}H_n(z,\kappa)+
2z \frac{d}{dz} H_n(z,\kappa)+
2nH_n(z,\kappa)
=
0.
\]
\end{prop}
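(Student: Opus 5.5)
The plan is to follow the route the paper sets up just before the statement: combine the ascending-operator identity (Proposition~\ref{zenkashiki2}) with the descending identity $\frac{d}{dz}H_n(z,\kappa)=2nH_{n-1}(z,\kappa)$. I expect this route to give the opposite sign on the $2nH_n$ term, so I do not expect the statement as printed to be provable. The fourth paragraph records an explicit low-degree check showing this.

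First I would fix the ascending operator in the form actually delivered by its proof. Comparing coefficients of $t^n$ in
\[
\sum_{n\ge 1}\frac{H_n}{(n-1)!}t^{n-1}=2z\sum_{n\ge 0}\frac{H_n}{n!}t^n+\kappa\sum_{n\ge 0}\frac{H_n'}{n!}t^n
\]
gives $H_{n+1}=\kappa H_n'+2zH_n$, where $\kappa$ multiplies only the derivative. This is the version I would use, rather than $\kappa(\partial_z+2z)H_n$, which already fails at $n=0$ unless $\kappa=1$.

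Second, I would differentiate this identity in $z$ to get $\kappa H_n''+2zH_n'+2H_n=H_{n+1}'$. I would then substitute the descending relation with $n\to n+1$, namely $H_{n+1}'=2(n+1)H_n$, and collect terms. As a cross-check, the same ODE can be read off from the generating function $g_\kappa(t,z)=e^{\kappa t^2+2tz}$ via the identity $(\kappa\partial_z^2+2z\partial_z-2t\partial_t)g_\kappa=0$, comparing coefficients of $t^n$.

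The main obstacle is the sign of the $2nH_n$ term, and this is where the plan breaks. Collecting terms in the second step gives $\kappa H_n''+2zH_n'+2H_n-2(n+1)H_n=0$. The constant terms combine to $-2n$, not $+2n$. The explicit formula $H_n(z,\kappa)=\sum_j\frac{n!}{j!(n-2j)!}\kappa^j(2z)^{n-2j}$ makes the degree-$n$ leading coefficient easy to check. The operator $2z\partial_z$ multiplies $(2z)^n$ by $n$, so the top-degree part of $\kappa H_n''+2zH_n'+2nH_n$ is $4n(2z)^n\neq0$ for $n\ge1$. At $n=1$ this is $4z+4z=8z$. So the chain of identities above cannot end in the printed equation, and no other choice of route through the earlier results changes this leading coefficient. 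The most I can honestly offer is the paper's derivation, carried out exactly as above. It establishes $\kappa H_n''+2zH_n'-2nH_n=0$ and, by the leading-term check, rules out the printed $+2n$ version for every $\kappa$ and every $n\ge1$.
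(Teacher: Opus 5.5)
Your analysis is correct and follows the paper's own route exactly. The paper differentiates the ascending relation in the form $H_{n+1}=\kappa H_n'+2zH_n$, which is what its proof actually establishes despite the misplaced $\kappa$ in the printed $\kappa(\partial_z+2z)H_n$. This gives $\kappa H_n''+2zH_n'+2H_n=H_{n+1}'$, and substituting $H_{n+1}'=2(n+1)H_n$ yields $\kappa H_n''+2zH_n'-2nH_n=0$. So the $+2n$ in the statement is a sign slip in the paper's final ``combining'' step. Your $n=1$ check confirms this, as does the limit $\kappa\to-1$, which must reproduce the classical $H_n''-2zH_n'+2nH_n=0$.

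One small correction to your leading-term argument: $2z\partial_z$ multiplies $(2z)^n$ by $2n$, not by $n$, although the total $4n(2z)^n$ you state is right.
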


\medskip

Next we consider the orthogonality of $\{ H_n(z,\kappa) \}_{n=0,1,2,\ldots}$. Assume tnat ${\rm Re}(\kappa)<0, n\geq m$
\[
\begin{array}{lll}
\vspace{2mm}
&&
\displaystyle \int_{\mathbb R}e^{\frac{1}{\kappa}z^2}H_n(z,\kappa)H_m(z,\kappa)dz
=
\displaystyle \int_{\mathbb R}e^{\frac{1}{\kappa}z^2}
\left(e^{-\frac{1}{\kappa}z^2}
\bigr(\kappa\frac{d}{dz}\bigr)^n 
e^{\frac{1}{\kappa}z^2} \right) 
H_m(z,\kappa)dz \\
\vspace{2mm}
&=&
\displaystyle \int_{\mathbb R}
\left(
\bigr(\kappa\frac{d}{dz}\bigr)^n 
e^{\frac{1}{\kappa}z^2} \right) 
H_m(z,\kappa)dz 
=
\displaystyle \int_{\mathbb R}
e^{\frac{1}{\kappa}z^2}  
\bigr(-\kappa\frac{d}{dz}\bigr)^n H_m(z,\kappa)dz \\
\vspace{2mm}
&=&
\left\{
\begin{array}{ll}
\vspace{2mm}
0 & ~~(n>m) \\ 
\displaystyle (-\kappa)^2 2^n n!\sqrt{\pi}
\sqrt{-\kappa}%{\frac{\sqrt{\kappa}}{\sqrt{-1}} } 
& ~~(n=m) 
\end{array}
\right.
\end{array}
\]

%According to the identity
%\[
%\frac{d}{dt}{\rm Exp}_{*_\tau}(\sqrt{2}tx)=\sqrt{2}x *_{\tau}{\rm Exp}_{*_\tau} (\sqrt{2}tx), 
%\]
%we easily have 
%\[
%\frac{\tau}{\sqrt{2}}H'_n(x)+\sqrt{2}x H_n(x,\tau)=H_{n+1}(x,\tau)
%, \qquad (n=0,1,2\cdots).
%\]

As can be seen from these computations, the new product $*_{\stkap}$ allows for much simpler manipulation of various expressions than the ordinary commutative product $\cdot$.
%The new product $*_{\stkap}$ often makes 
%it easier to handle than using 
%ordinary product $\cdot$, 
%as demonstrated in this example. 
%We study more examples in section ???.
%\subsection{Intertwiner}
%As for augmented star product 
%$*_{\Gamma_1}, *_{\Gamma_2}$, 

\medskip

We can also define star Hermitian polynomials in several variables\footnote{By the same way, we can define theta functions in several variables. cf. \cite{miyazaki2026-1, mumford}}. 
First recall that 
for any functions $f,g$ and arbitrary complex $N\times N$-matrix
$\Gamma\in M_N({\mathbb C})$, {augmented star product} $*_{\Gamma}$ is defined as follows:
\[
f*_{\Gamma}g=fg+\sum_{\ell=1}^{\infty} \frac{1}{\ell!}\left(\frac{i\hbar}{2}\right)^{\ell}C_{\ell}^{\Gamma}(f,g), 
\]
where the bi-differential operators $C_{\ell}^{\Gamma}$ are defined by 
\[
C_{\ell}^{\Gamma}(f,g)
:=\sum_{
\tiny
\begin{array}{c}
j_r,k_s=1\\
(r,s=1,\ldots, \ell)
\end{array}
}^{N} \Gamma^{j_1k_1}\cdots
\Gamma^{j_{\ell}k_{\ell}}\partial_{j_1\ldots j_{\ell}}f \cdot \partial_{k_1\ldots k_{\ell}}g.  
\]

Using the formula in Proposition \ref{maillard-deg1}, i.e.  
\[
{\rm Exp}_*\frac{t}{i\hbar}({}^t {\bf b}{\bf x}+c) =\exp\left(\frac{t^2}{8i\hbar}{}^t{\bf b}(\Gamma + {}^t\Gamma){\bf b}  \right)
\times
\exp\left(\frac{t}{i\hbar}({}^t {\bf b}{\bf x}+c)\right)
\]
we introduce  

\begin{defn}
Set the ordering matrix 
\[
\displaystyle \Gamma=2\pi i K~(K\in Sym_{\mathbb C}(n)), 
~~\hbar =\frac{-1}{2\pi}, 
~~t=1, {\bf t}=\pi i {\bf b}, 
~~{\bf z}={\bf x}. 
\] 
Then we define $H_{\alpha}({\bf z},K)~(\alpha\mbox{:multi-index})$ as follows: 
\[
\sum_{\alpha\in {\mathbb Z}_{\geq 0}^n } 
 \frac{1}{\alpha !}H_{\alpha}({\bf z},K) {\bf t}^{\alpha}
:={\rm Exp}_{*_{2\pi i K}} \frac{t}{i\hbar}({}^t{\bf b x})
={\rm Exp}_{*_{2\pi i K}} (2{}^t{\bf t z})
=\exp({}^t{\bf t}K{\bf t}+2{\bf t z}). 
\]
Thus we have 
\[
H_{\alpha}({\bf z},K)
=
\partial_{\bf t}^{\alpha} \exp({}^t{\bf t}K{\bf t}+2{\bf t z}) \Bigr|_{{\bf t =0}}. 
\]
\end{defn}

\subsection{Augmented star Sonine (associated Laguerre) polynomials}
As an application of augmented star exponential of quadratic form, we introduce augmented star Sonine (associated Laguerre) polynomials. 
Recall that for any functions $f(z), g(z)$ of one variable $z$ and a complex number $\kappa$,
we define
\[
f(z)*_{\kappa} g(z):=f(z)\exp\Bigr\{ \frac{1}{2}\overleftarrow{\partial_z}\kappa\overrightarrow{\partial_z}  \Bigr\}g(z). 
\] 
%where we used $\tau$ instead of $\kappa$. 
This is the case of $1\times 1$ ordering matrix with complex coefficient $\displaystyle \Gamma=\Lambda +K=0+\kappa$ and $\hbar=-i$. 
\par\medskip\noindent
%{\bf Note that in what follows, we use $\startau$ 
%istead of $*_{2\pi i\tau}$ for short. }
Using the augmented star product $*_{\kappa}$, we consider the augmented star exponential of a quadratic form 
$\displaystyle z_{\stkap}^2=z^2+\frac{\kappa}{2}$.  
In this case, defining equation of the augmented star product is 
\begin{equation}
\frac{d}{dt}F(t,z)=w_{\stkap}^2 \stkap F(t,z), \quad 
F(0,z)=1
\end{equation}
with 
$F(t,z)=f(t)e^{g(t)w^2}$. 
That is 
\begin{equation}
\left\{
\begin{array}{lcl}
\vspace{2mm}
\displaystyle \frac{d}{dt}g(t)&=&(1+\kappa g(t))^2, \\
\vspace{2mm}
\displaystyle \frac{d}{dt}f(t)&=&\displaystyle \frac{1}{2}(\kappa^2 g(t) +\kappa)f(t), 
\end{array}
\right.
\end{equation}
with 
$g(0)=0,~f(t)=1$. 
Thus, we see 
\begin{equation}\label{w-square}
\displaystyle 
{\rm Exp}_{\stkap}(t z_{\stkap}^2)=
\frac{1}{\sqrt{1-t \kappa }}  
e^{\frac{t}{1- t \kappa } z^2} . 
\end{equation} 
Here we recall the definiton of generating function 
Sonine ((associated) Laguerre) polynomials $L_n^{(\frac{-1}{2})}$. 
\[
\frac{1}{ (1-t)^{\alpha+1} } e^{-\frac{tx}{1-t}}
=
\sum_{n=0}^{\infty}L_n^{(\alpha)}(x)t^n, ~~(|t|<1,={\rm Re}\alpha>-1). 
\]
If $\displaystyle \alpha=\frac{-1}{2}$, this is the $\kappa=-1$ expression of $e_{\stkap}^{-tw_{\stkap}^2} $: 
\[
\displaystyle 
{\rm Exp}_{\kappa=-1}(-t z_{*_{\kappa=-1} }^2)=
\frac{1}{\sqrt{ 1-t } }   
e^{\frac{-t}{1- t } z^2} . 
\]
Keeping this in mind, we define augmented star Sonine  polynomials $L_n(z^2,\kappa)$ by 
\begin{definition}
By \eqref{w-square}
we define 
\begin{equation}
{\rm Exp}_{\kappa}(t z_{*_{\kappa} }^2)
=
\frac{1}{\sqrt{ 1-t\kappa } }   
e^{\frac{t}{1- t\kappa } z^2} 
=\sum_{n=0}^{\infty} \frac{1}{n!} L_n(z^2,\kappa) t^n.
\end{equation}
Hence 
\[
L_n(z^2,\kappa)=\frac{d^n}{dt^n}\Bigr|_{t=0}\frac{1}{\sqrt{ 1-t\kappa } }   
e^{\frac{t}{1- t\kappa } z^2}. 
\]
\end{definition}

\section{Applications to quantum physics}
%\section{Concluding remarks} 
In this section, we give applications of augmented star products, augmented star exponentials and equivalence operators for several ordering matrices $\Gamma$. 
For this section, we strongly recommend consulting papers \cite{bffls, BW, dt, montiel, RW, zachosetal}.

\subsection{Husimi quantization and Husimi star product} % (Belchev, Robbins and Walton)}

Recall that 
as for a complex symmetric matrix $K$, we have 
\[
e^{\frac{i\hbar}{4}K^{ij}{\overrightarrow{\partial}_i\overrightarrow{\partial}_j}}(f*_\Gamma g)
=\Bigr(e^{\frac{i\hbar}{4}K^{ij}{\overrightarrow{\partial}_i\overrightarrow{\partial}_j}}(f)\Bigr)*_{\Gamma+K}\Bigr( e^{\frac{i\hbar}{4}K^{ij}{\overrightarrow{\partial}_i\overrightarrow{\partial}_j}}(g)\Bigr)
\]
That is, $T^K=e^{\frac{i\hbar}{4}K^{ij} 
{\overrightarrow{\partial}_i\overrightarrow{\partial}_j}}$ 
is an equivalence operator 
from an augmented star product $*_\Gamma$ 
to another augmented star product $*_{\Gamma+K}$. 
%\begin{defn}
%\[
%T^K:=
%e^{\frac{i\hbar}{4}K^{ij}{\overrightarrow{\partial}%_i\overrightarrow{\partial}_j}} . 
%\]
%\end{defn}
Using the equivalence operator above, 
we can unify several examples of quantization. 

As in Robbins and Walton \cite{RW}, 
it is known that a distribution in phase-space, $f(q, p)$,  coarse grained as follows:
\begin{equation}\label{grained}
\begin{array}{lcl}
\vspace{2mm}
&&
\displaystyle
\frac{1}{\pi \eta} \int dq' dp' f(q',p')
\exp\left\{
-\frac{1}{\eta}
\left[
\frac{(q-q')^2}{\sigma^2}+\sigma^2(p-p')^2
\right]
\right\}
\\
\vspace{2mm}
&=&
\displaystyle
\exp
\left[
\frac{\eta}{4}
\left(
\sigma^2\partial_q^2+\frac{1}{\sigma^2}\partial_p^2
\right)
\right]
f(q,p) , 
\end{array}
\end{equation}  
where $\eta$ is a classical coarse-grained scale parameter, independent of $\hbar$, and $\sigma$ is a squeezing parameter. 
When $\eta=\hbar$, and $f$ is the Wigner function $W$, the expressions in \eqref{grained} equal to the original Husimi quasi-probabilit distribution. 
Set  
\[
\displaystyle 
\Gamma=\Lambda_0=
\left[
\begin{array}{cc}
0 & 1 \\
-1 & 0
\end{array}
\right], 
\qquad 
K_{\rm Husimi}
=\frac{\eta}{i\hbar}
\left[
\begin{array}{cc}
\sigma^2 & 0 \\
0 & \displaystyle \frac{1}{\sigma^2}
\end{array}
\right]. 
\]  
Then, as for the Moyal star product $*_M=*_{\Lambda_0}$, 
using the equivalence operator 
$T^{K_{\rm Husimi}}$, we have 
\begin{equation}
T^{K_{\rm Husimi}}(*_M)=*_{\Lambda_0+K_{\rm Husimi}}=
\exp\left[
\frac{i\hbar}{2}
(\overleftarrow{\partial}_q\overrightarrow{\partial}_p
-\overleftarrow{\partial}_p\overrightarrow{\partial}_q)
+\frac{\eta}{2}
(\sigma^2\overleftarrow{\partial}_q\overrightarrow{\partial}_q
+\frac{1}{\sigma^2}\overleftarrow{\partial}_p\overrightarrow{\partial}_p)
\right]
\end{equation}
and this gives an associative product. 
This is 
%\[
%I^{K_{\rm Husimi}}(*_M)=*_{\Lambda_0+K_{\rm %Husimi}}, 
%\]
called {\it Husimi star product} in Robbins and Walton \cite{RW}. 

\medskip

We can also construct a one-parameter family of augmented star products as follows:
\[
\{ T^{sK_{\rm Husimi}}(*_M)\}_{s\in[0,1]}=\{*_{\Lambda_0+sK_{\rm Husimi}}
\}_{s\in[0,1]}.
\]
Furthermore, for this family and any quadratic form 
$Q={}^txAx$ ($x={}^t (q,p)$), we have 
\begin{equation}
\begin{array}{lcl}
\vspace{2mm}
&&\displaystyle
{\rm Exp}_{*_{\Lambda_0+sK_{\rm Husimi}}}\left(\frac{t}{i\hbar}Q\right)
%\\
%\vspace{2mm}
%&=&
%\displaystyle
%\frac{1}{f(t)}\exp(i[{}^txg(t)x]) 
\\
\vspace{2mm}
&=&
\displaystyle 
\det{}^{-\frac{1}{2} }
\left(
\frac{I-{\Lambda_0sK_{\rm Husimi}}+e^{2A\Lambda_0 t} (I+{\Lambda_0 sK_{\rm Husimi}}) }{2}
\right)
\\
\vspace{2mm}
&&
\displaystyle 
\qquad\qquad
\times
\exp
\left(
i {}^tx 
\frac{-1}{\hbar}
\Bigr[
\frac{I}{I-{\Lambda_0 sK_{\rm Husimi}}+e^{2A\Lambda_0 t}(I+{\Lambda_0 sK_{\rm Husimi}})}(I-e^{2A\Lambda_0 t})\Lambda_0
%(I-e^{2AJt})
%\frac{I}{I-JK+(I+JK)e^{2AJt}}J
\Bigr]
x
\right). 
\end{array}
\end{equation} 
%Note that any other one-parameter family 
%with the same end-points, as long as it is %homotopic outside the zero set of 
%$\det^{\frac{1}{2}}$, preserves 
%the same topological information of this system. 

\medskip\noindent
Note that for any dimensional symplectic space, we can extend the above results.

\subsection{Harmonic oscillators and damped quantizations and damped star products}
In this subsection, we are concerned with harmonic oscillators and damped harmonic oscillators. 
Consider the harmonic oscillator, with Hamiltonian function
\[
H(q,p)=\frac{1}{2m}p^2+\frac{1}{2}m\omega^2q^2
.
\]
%P=\partial_q\partial_p-\partial_q\partial_q
Replacing the Poisson bivector
$\overleftarrow{\partial}_q\overrightarrow{\partial}_p-\overleftarrow{\partial}_q\overrightarrow{\partial}_q$ with  
\begin{equation}
\overleftarrow{\partial}_q\overrightarrow{\partial}_p-\overleftarrow{\partial}_p\overrightarrow{\partial}_q
-2\gamma m
\overleftarrow{\partial}_p\overrightarrow{\partial}_p, 
\end{equation}
the canonical equations of a damped harmonic oscillator changed into:
\[
\begin{array}{lcl}
\vspace{2mm}
\dot{q}
&=&
\displaystyle
q (\overleftarrow{\partial}_q\overrightarrow{\partial}_p-\overleftarrow{\partial}_p\overrightarrow{\partial}_q
-2\gamma m
\overleftarrow{\partial}_p\overrightarrow{\partial}_p) H(q,p) =\frac{p}{m}, 
\\
\vspace{2mm}
\dot{p}
&=&
\displaystyle
p
(\overleftarrow{\partial}_q\overrightarrow{\partial}_p-\overleftarrow{\partial}_p\overrightarrow{\partial}_q
-2\gamma m
\overleftarrow{\partial}_p\overrightarrow{\partial}_p)
 H(q,p) =-m\omega^2q -2\gamma p. 
\end{array}
\]
Hence the equation of motion of a damped harmonic oscillator results:
\[
\ddot{q}=-\omega^2q-2\gamma \dot{q} ,
\]
with $\gamma$ as the damping parameter. 

Set  
\[
\displaystyle 
\Gamma=\Lambda_0, \qquad 
K_{\rm DH}
=
\left[
\begin{array}{cc}
-2\gamma m & 0 \\
0 & \displaystyle 0
\end{array}
\right]. 
\]  
Then as for the Moyal star product $*_M=*_{\Lambda_0}$, we have 
\begin{equation}\label{dh-star-exp}
T^{K_{\rm DH}}(*_M)=*_{\Lambda_0+K_{\rm DH}}=
\exp\left[
\frac{i\hbar}{2}
(\overleftarrow{\partial}_q\overrightarrow{\partial}_p
-\overleftarrow{\partial}_p\overrightarrow{\partial}_q)
+\frac{-2i\hbar \gamma m}{2}
(\overleftarrow{\partial}_p\overrightarrow{\partial}_p)
\right]
\end{equation}
and this gives an associative product. 
This is 
%\[
%I^{K_{\rm Husimi}}(*_M)=*_{\Lambda_0+K_{\rm %Husimi}}, 
%\]
called {\it damped star product}. See \cite{dt}, \cite{montiel}  and \cite{RW}. 
Note that \eqref{dh-star-exp} obtained by our  equivalence  operator coincides with (27) in Dito and Turrubiates \cite{dt}. 

\medskip

We can also construct a one-parameter family of augmented star products as follows:
\[
\{ T^{sK_{\rm DH}}(*_M)\}_{s\in[0,1]}=\{*_{\Lambda_0+sK_{\rm DH}}
\}_{s\in[0,1]}.
\]
Furthermore, for this family and any quadratic form 
$Q={}^txAx$ ($x={}^t (q,p)$), we have 
\begin{equation}\label{dh-star-exponential}
\begin{array}{lcl}
\vspace{2mm}
&&\displaystyle
{\rm Exp}_{*_{\Lambda_0+sK_{\rm DH}}}
\left(\frac{t}{i\hbar}Q\right)
%\\
%\vspace{2mm}
%&=&
%\displaystyle
%\frac{1}{f(t)}\exp(i[{}^txg(t)x]) 
\\
\vspace{2mm}
&=&
\displaystyle 
\det{}^{-\frac{1}{2} }
\left(
\frac{I-{\Lambda_0sK_{\rm DH}}+e^{2A\Lambda_0 t} (I+{\Lambda_0 sK_{\rm DH}}) }{2}
\right)
\\
\vspace{2mm}
&&
\displaystyle 
\qquad\qquad
\times
\exp
\left(
i {}^tx 
\frac{-1}{\hbar}
\Bigr[
\frac{I}{I-{\Lambda_0 sK_{\rm DH}}+e^{2A\Lambda_0 t}(I+{\Lambda_0 sK_{\rm DH}})}(I-e^{2A\Lambda_0 t})\Lambda_0
%(I-e^{2AJt})
%\frac{I}{I-JK+(I+JK)e^{2AJt}}J
\Bigr]
x
\right). 
\end{array}
\end{equation} 

When a quadratic form is 
\[
Q=H_{\rm Harm}=\frac{1}{2m}p^2+\frac{1}{2}m\omega^2q^2,
\] 
a direct computation shows that 
\begin{equation}\label{dh-harm-star-exp}
\begin{array}{lcl}
\vspace{2mm}
\eqref{dh-star-exponential}
&=&
\displaystyle 
\frac{\exp(\gamma t/2)}{\cos(\omega t/2) \bigr\{1+\frac{2\gamma}{\omega}\tan(\omega t/2) \bigr\}^{1/2}}\\
\vspace{2mm}
&&
\qquad\qquad
\displaystyle 
\times
\exp\left[
\frac{-i}{\hbar\omega}
\tan(\omega t/2)
\Bigr( m\omega^2q^2 + \frac{p^2}{m\{1+\frac{2\gamma}{\omega}\tan(\omega t/2) \} } \Bigr)
\right]. 
\end{array}
\end{equation}

\subsection{The Feynman-Kac formula and ground state energy in deformation quantization} 
In this subsection we compare quantum states of harmonic oscillator and damped harmonic oscillator. 

Using deformation quantization formulation of the Feynman-Kac formula\footnote{See \cite{funaki, ks} for the original Feynman-Kac formula.} 
for the ground state energy $E_0$ (\cite{montiel}):
\begin{equation}
E_0=-\lim_{\tau \to \infty}
\frac{\hbar}{\tau}\log
\left[
\frac{1}{2 \pi \hbar} 
\int_{\R^2} {\rm Exp}_{*} \Bigr(-\frac{\tau}{\hbar}H\Bigr)dqdp
\right], 
\end{equation}
with a family 
$\{*=T^{sK_{\rm DH}}({*_{\Lambda_0}})
=*_{\Lambda_0 + s K_{\rm DH} } \}_{s\in I}$, 
$H=H_{\rm Harm}$ and a Wick rotation $\tau=it$, we obtain 
\begin{equation}
\begin{array}{lcl}
\vspace{2mm}
E_0(\gamma, s)
&=&
\displaystyle
-\lim_{\tau \to \infty}
\frac{\hbar}{\tau}\log
\Bigr[
\frac{1}{2 \pi \hbar} 
\int_{\R^2} {\rm Exp}_{T^{sK_{\rm DH}}({*_{\Lambda_0}})}
%{*_{ {\Lambda_0 + K_{\rm DH}}}
\Bigr( -\frac{\tau}{\hbar} H_{\rm Harm} \Bigr)dqdp
\Bigr]
\\
\vspace{2mm}
&\stackrel{\eqref{dh-harm-star-exp}}{=}&
\displaystyle
-\lim_{\tau \to \infty}
\frac{\hbar}{\tau}\log
\Bigr[
\frac{1}{2}e^{-i s \gamma \tau /2}
{\rm cosech} (\omega \tau /2)
\Bigr]
\\
\vspace{2mm}
&=&
\displaystyle
\hbar\omega(1/2+i s \gamma/2\omega)
, 
\end{array}
\end{equation} 
where 
$\displaystyle 
{\rm cosech}(x)=2 \frac{e^x}{e^{2x}-1}. 
$
Note that $s \to 0$ or $\gamma \to 0$ then 
$\displaystyle E_0(\gamma,s)$ %uniformly
converges $\displaystyle \frac{\hbar\omega}{2}$ which is the ground state energy of harmonic oscillator. 

\medskip

Furthermore, using the formula of augmented 
star exponential for general quadratic form 
on the phase space $\R^2$, we can extend the above result in the following way \cite{montiel}. 
Let $H_{\rm quad}(q,p)$ be a general quadratic Hamiltonian of the form 
\[
H_{\rm quad}(q,p)=aq^2+2cqp+bp^2, 
\] 
where $a,b,c \in \R.$
In this case, using a Wick rotation, 
the augmented star exponential takes the form 
\[
{\rm Exp}_{*_{\Lambda_0}} 
\left(
-\frac{\tau}{\hbar}H_{\rm quad}(q,p)
\right)
=
\frac{1}{\cosh(\sqrt{ab-c^2} \tau) }
e^{-\frac{iH_{\rm quad}(q,p)}{\hbar \sqrt{ab-c^2} } \tanh\sqrt{ab-c^2}\tau}. 
\]
According to the Feynman-Kac formula, 
we have 
\[
\frac{1}{2\pi\hbar} 
\int_{\R^2} {\rm Exp}_{*_{\Lambda_0}} 
\left(
-\frac{\tau}{\hbar}
H_{\rm quad}(q,p)
\right)
dqdp
=
\frac{1}{2}{\rm cosech}(\sqrt{ab-c^2}\tau), 
\]
and then 
\[
E_0 
=
-\lim_{\tau \to \infty}
\frac{\hbar}{\tau}
\log\frac{1}{2}{\rm cosech}(\sqrt{ab-c^2}\tau)
=\hbar\sqrt{ab-c^2}. 
\]

\section{Concluding Remarks}
As mentioned above, there are so many applications of augmented star product and augmented star exponentials to special functions and physics.

In the case of 
the real symplectic group $Sp_{2n}(\R)$, 
for matrices $g_1,~g_2$ 
close to the identity matrix, we have 
\[
\begin{array}{lcl}
\vspace{3mm}
&&
\displaystyle
{\rm Exp}_{*_\Gamma}\left(
\frac{1}{i\hbar}{}^tx \bigr( \frac{1}{2}(\log g_1)\Lambda^{-1} \bigr) x \right)
{*_\Gamma}~
\displaystyle
{\rm Exp}_{*_\Gamma}\left(
\frac{1}{i\hbar}{}^tx \bigr( \frac{1}{2}(\log g_2)\Lambda^{-1} \bigr) x  \right)
\\
\vspace{3mm}
&=&
\displaystyle 
\det{}^{-\frac{1}{2} }
\left(
\frac{I-\Lambda K+(g_1g_2) (I+\Lambda K) }{2}
\right)
%\\
%\vspace{4mm}
%&&
\displaystyle 
%\qquad\qquad
\cdot
\exp
\left(
 {}^tx 
\frac{1}{ i \hbar}
\Bigr[
\frac{I}{I-\Lambda K+(g_1g_2)
(I+\Lambda K)}(I-(g_1g_2))\Lambda 
\Bigr]
x
\right). 
\end{array}
\] 
Note that 
for product formulas involving points $g_1, g_2$ 
away from the identity matrix, 
it is necessary to introduce 
a correction factor using the Maslov triple product (\cite{miyazaki2026-1, nomura}).

%–{•¶'±'±'Ü'Å

\end{document}